\DeclareFontShape{T1}{lmr}{b}{sc}{<-> ssub * lmr/m/sc}{}
\DeclareFontShape{T1}{lmr}{bx}{sc}{<-> ssub * lmr/m/sc}{}
\DeclareFontShape{T1}{lmr}{m}{scit}{<-> ssub * lmr/m/sc}{}
\DeclareFontShape{T1}{lmr}{b}{scit}{<-> ssub * lmr/m/sc}{}
\DeclareFontShape{T1}{lmr}{bx}{scit}{<-> ssub * lmr/m/sc}{}
\definecolor{greenish}{RGB}{27,158,119}
\definecolor{MyOrange}{RGB}{217,95,2}
\definecolor{MyPurple}{RGB}{117,112,179}
\theoremstyle{plain}
\newtheorem{thm}{Theorem}
\numberwithin{thm}{section}
\newtheorem{lemma}[thm]{Lemma}
\newtheorem{cor}[thm]{Corollary}
\newtheorem{prop}[thm]{Proposition}
\theoremstyle{definition}
\newtheorem{remark}[thm]{Remark}
\renewcommand\operator@font{\sf}
\DeclareMathOperator{\ei}{\ell_{\infty}}
\newcommand{\NP}{\mathsf{NP}}
\newcommand{\cl}{\operatorname{cl}}
\newcommand{\ind}{\operatorname{index}}
\DeclareMathOperator*{\argmin}{arg\,min}
\newcommand{\bQ}{\mathbb{Q}}
\newcommand{\bR}{\mathbb{R}}
\newcommand{\bZ}{\mathbb{Z}}
\newcommand{\cB}{\mathcal{B}}
\newcommand{\cF}{\mathcal{F}}
\newcommand{\cI}{\mathcal{I}}
\newcommand{\OPT}{\delta^*}
\newcommand{\problemdef}[3]{%
\vspace{4pt}
    \begin{center}
	\fbox { \parbox[c] {0.9\textwidth} {
    \textsc{\large {#1}}
    
    \smallskip
    
    \textbf{Input:} #2 

    \smallskip
    
    \textbf{Goal:} #3
    }}
    \end{center}
\vspace{4pt}
}
\newlength{\bibitemsep}\setlength{\bibitemsep}{.1\baselineskip plus .05\baselineskip minus .05\baselineskip}
\newlength{\bibparskip}\setlength{\bibparskip}{1.2pt}
\let\oldthebibliography\thebibliography
\renewcommand\thebibliography[1]{%
  \oldthebibliography{#1}%
  \setlength{\parskip}{\bibitemsep}%
  \setlength{\itemsep}{\bibparskip}%
}
\renewcommand{\paragraph}{%
  \@startsection{paragraph}{4}%
  {\z@}{1.6ex \@plus 1ex \@minus .2ex}{-0.5em}%
  {\normalfont\normalsize\bfseries}%
}
\def\final{1}  
\def\iflong{\iffalse}
\newcommand{\knote}[1]{{\color{blue}[{\tiny \textbf{Krist{\'o}f:} \bf #1}]\marginpar{\color{blue}*}}}
\newcommand{\mnote}[1]{{\color{purple}[{\tiny \textbf{Mirabel:} \bf #1}]\marginpar{\color{purple}*}}}
\newcommand{\jnote}[1]{{\color{cyan}[{\tiny \textbf{José:} \bf #1}]\marginpar{\color{cyan}*}}}
\newcommand{\knote}[1]{}
\newcommand{\mnote}[1]{}
\newcommand{\jnote}[1]{}
\title{Subset-Constrained Inverse Matroid Optimization}
\author{
Kristóf Bérczi \orcidlink{0000-0003-0457-4573} \thanks{MTA-ELTE Matroid Optimization Research Group and HUN-REN–ELTE Egerváry Research Group, Department of Operations Research, Eötvös Loránd University, and HUN-REN Alfréd Rényi Institute of Mathematics, Budapest, Hungary. Email: \texttt{kristof.berczi@ttk.elte.hu}.}
\and
Mirabel Mendoza-Cadena \orcidlink{0000-0002-4805-0029} \thanks{Instituto de Ciencias de la Ingeniería, Universidad de O'Higgins, Rancagua, Chile. Email: \texttt{mirabel.mendoza@uoh.cl}.}
\and
José A. Soto \orcidlink{0000-0003-2219-8401} \thanks{Department of  Mathematical Engineering and Center for Mathematical Modeling (CNRS IRL2807), Universidad de Chile, Santiago, Chile.  Email: \texttt{jsoto@dim.uchile.cl}.}
}
\date{}
\begin{document}
\pagenumbering{roman}
\maketitle
\thispagestyle{empty}

\begin{abstract} 
In inverse optimization, the goal is to find a minimum perturbation of weights that makes a prescribed feasible solution optimal. For matroids, the classical inverse problem fixes a target basis. We replace this fixed target by a subset constraint: given a matroid $M=(S,\cI)$, weights $w$, and a subset $S_0\subseteq S$, we specify how the family of maximum-weight bases relates to the bases contained in $S_0$. We study six natural variants. The positive variants require, respectively, that at least one basis contained in $S_0$ be optimal, that every basis contained in $S_0$ be optimal, or that the optimal bases be exactly the bases contained in $S_0$; we also study the three corresponding negated requirements. This framework captures partial inverse requirements such as forced or forbidden elements, as well as settings where undesirable optimal bases should be excluded.

We give a complete classification of these subset-constrained inverse matroid problems under the $\ell_\infty$- and $\ell_1$-norms. Under the $\ell_\infty$-norm, all six variants admit polynomial-time combinatorial algorithms (interpreting the variants with strict inequalities in the natural integral-weight setting). The algorithms are based on matroid exchange, uniform perturbations, and the connected-component structure of the restriction $M|S_0$. Under the $\ell_1$-norm, the picture changes sharply: the variant requiring at least one optimal basis contained in $S_0$ is strongly $\NP$-hard even for graphic matroids, whereas the remaining variants considered here admit polynomial-time algorithms. Thus the subset-constrained setting separates the two norms already for matroids, and even for spanning trees.

\bigskip

\noindent \textbf{Keywords:} Inverse optimization, Matroids, Partial inverse problem, Subset constraints, Min-Max theorem, Integrality constraints

\noindent \textbf{MSC codes:} 90C27, 68Q25, 68W40
\end{abstract}
 \newpage
\tableofcontents
\newpage
\pagenumbering{arabic}
\newpage
\section{Introduction}
\label{sec:intro}

In inverse optimization, the objective is to adjust the parameters of an optimization problem so that a prescribed feasible solution becomes optimal. These problems are motivated by applications where one seeks to explain or to enforce a given solution through limited changes to the input, and they have been widely studied across different models; see, for example, the surveys~\cite{richter2016inverse,heuberger2004inverse,demange2014introduction,chan2023inverse} and the recent monograph~\cite{Guan2025InverseBook}.

Let $S$ be a ground set and let $\cF\subseteq 2^S$ be the family of feasible sets. Given a weight function $w\in\bR^S$, let $(S,\cF,w)$ denote the problem of finding a maximum-weight set in $\cF$. Deviations between weight functions are measured using a function $\|\cdot\|$, typically a norm. A classical inverse optimization problem can then be stated as follows.

\problemdef{Inverse Optimization (\textsc{IO})}
    {A finite ground set $S$, a collection $\cF \subseteq 2^S$ of feasible sets, 
    an input set $F^* \in \cF$, a weight function $w \in \bR^S$, an objective function $\| \cdot \|$ defined on $\bR^S$, and an oracle $\mathcal{O}$ for the maximization problem $(S, \cF, w')$ for any weight function $w'\in\bR^S$.}
    {Find a new weight function $w^* \in \bR^S$ such that 
    \begin{enumerate}[label=(\alph*),topsep=1px,partopsep=2px]\itemsep0em
        \item \label{it:IOP_a} $F^*$ is a maximum-weight member of $\cF$ with respect to $w^*$, and
        \item \label{it:IOP_c} $\|w-w^*\|$ is minimized.
    \end{enumerate}}
A weight function satisfying property~\ref{it:IOP_a} is called \emph{feasible}, and it is called \emph{optimal} if it also satisfies~\ref{it:IOP_c}. The \emph{optimal value} of the instance is $\delta^* \coloneqq \|w-w^*\|$. 

In the classical formulation, the target is a single feasible solution. In many situations, however, the desired outcome is specified only implicitly. One may require that at least one feasible solution satisfying a given property become optimal, that all feasible solutions satisfying the property become optimal, or that these be exactly the optimal solutions. Conversely, one may wish to rule out undesirable optimal solutions. Such requirements arise naturally in partial inverse optimization and dynamic pricing, among other applications.  Concrete instances of these conditions, arising in network design and market-pricing applications, are discussed in Section~\ref{sec:motiv}.

In this paper, we study these questions for the \textsc{Inverse Matroid} (\textsc{IM}) problem, where the feasible sets are the bases of a matroid $M=(S,\cI)$. Instead of prescribing a single target basis, we specify a subset $S_0\subseteq S$ and impose conditions on the family of bases contained in $S_0$. This leads to three natural positive variants: requiring that some basis contained in $S_0$ be maximum-weight (\textsc{IM-Exists}), that every basis contained in $S_0$ be maximum-weight (\textsc{IM-All}), or that the maximum-weight bases be exactly the bases contained in $S_0$ (\textsc{IM-Only}). We also study the corresponding negated variants, as well as an auxiliary problem, (\textsc{IM-Outside}).

Although this generalization is simple to state, it changes the structure of the inverse problem substantially. In the classical setting, the target is a single basis, whereas in \textsc{IM-Exists} the perturbation may choose which basis contained in $S_0$ becomes optimal. The stronger variants impose conditions on the entire family of bases contained in $S_0$, requiring many bases to become simultaneously optimal or excluding other bases from being optimal. From the polyhedral point of view, this means that the target is no longer a single normal cone of the matroid base polytope, but rather a union of normal cones together with additional global constraints. As we show, this change has significant algorithmic consequences.

We study these problems under both the $\ell_\infty$- and $\ell_1$-norms. Some of these variants naturally require an integral formulation: when strict separation between basis weights is needed, the real-valued version may have an infimum that is not attained. This setting has received little attention in the inverse optimization literature despite the fact that integrality is known to introduce additional computational challenges.

Our main result is a complete complexity classification of these subset-constrained inverse matroid problems, including the integrality-constrained versions. Under the $\ell_\infty$-norm, all variants admit polynomial-time combinatorial algorithms. Under the $\ell_1$-norm, in contrast, \textsc{IM-Exists} becomes strongly NP-hard even for graphic matroids, while the remaining variants remain polynomial-time solvable. The algorithms rely on several structural properties that appear to be specific to the subset-constrained setting, including reductions to classical inverse matroid optimization, characterizations based on the connected components of restricted matroids, and minimum-cut formulations.

\subsection{Our results and techniques} 
\label{sec:our_results}

We classify the subset-constrained inverse matroid variants according to the norm used to measure the perturbation; see Table~\ref{tab:results}. The outcome is a sharp contrast between the two norms. Under the $\ell_\infty$-norm, all six variants admit polynomial-time combinatorial algorithms. Under the $\ell_1$-norm, the seemingly mild variant \textsc{IM-Exists} becomes strongly $\NP$-hard even for graphic matroids, while the remaining variants considered here remain polynomial-time solvable. It is worth noting that, in each tractable case, our algorithms implicitly lead to min--max results for the optimum value in question.

We first revisit the classical \textsc{IM} problem in Section~\ref{sec:offline}. This serves two purposes. First, it makes the paper self-contained. Second, the algorithm for \textsc{IM-Exists} under the $\ell_\infty$-norm uses the classical problem as a subroutine. For the $\ell_\infty$-norm, we show that the optimal value is determined by a single exchange between the target basis and a maximum-weight basis (Theorem~\ref{thm:im_infty}). This gives a direct combinatorial algorithm and a refinement of the min--max formula from~\cite{berczi2023infinity} in the matroid setting (Corollary~\ref{cor:IM-offline-infty_symmDiff_equals_2}). For the $\ell_1$-norm, we recall the standard monotonicity structure of optimal perturbations (Lemma~\ref{lem:IM-L1-nice-structure}) and give a self-contained reduction to a minimum-weight vertex-cover problem in a bipartite exchange graph (Theorem~\ref{thm:IM_L1}).

For the subset-constrained variants, the $\ell_\infty$-norm leads to particularly clean structural descriptions. In \textsc{IM-Exists}, an optimal perturbation can be chosen by increasing all elements of $S_0$ and decreasing all elements of $S\setminus S_0$ by the same amount (Lemma~\ref{lem:IM-Exists_special_weight}). We show that the problem reduces to a classical \textsc{IM} instance obtained by choosing a maximum-weight basis of the restriction $M|S_0$ (Theorem~\ref{thm:IM-Exists-combAlgo}).

Before specializing to matroids, we also show that the corresponding \textsc{IO-All} problem over an arbitrary feasible family admits a weakly polynomial-time ellipsoid algorithm for any convex deviation objective, assuming an optimization oracle for the underlying problem (Lemma~\ref{lem:IO-All-convexFunc}). For \textsc{IM-All}, the first requirement is that all bases contained in $S_0$ have the same weight. This is a matroidal condition: it is equivalent to making the weight function constant on each connected component of the restriction $M|S_0$ (Lemma~\ref{lem:hm}).

The behavior under the $\ell_1$-norm is different, and the hardness of \textsc{IM-Exists} is the main negative result of the paper. We prove that \textsc{IM-Exists} under the $\ell_1$-norm is strongly $\NP$-hard, even for graphic matroids, by a reduction from the Steiner tree problem (Theorem~\ref{thm:IM-Exists_L1_hardness}). Thus the obstruction is not a consequence of allowing arbitrary matroids: it is already present for spanning trees. This is the only hardness result in Table~\ref{tab:results}, but it has broader consequences. Through the reductions in Section~\ref{sec:implications}, it also yields hardness for standard partial inverse formulations in which one wants some optimal basis to avoid a prescribed forbidden set, and, after passing to the dual matroid, for the corresponding forced-element formulations (Corollary~\ref{cor:IM-Exists-four-hardness}). In contrast, the remaining $\ell_1$ variants considered here are polynomial-time solvable. For \textsc{IM-All} and \textsc{IM-Only}, the connected-component formulation for $M|S_0$ leads to a minimum-weight closure formulation, and hence to a minimum-cut algorithm (Theorem~\ref{thm:algo-IM-All-Only-L1}). For \textsc{IM-Not-Exists}, we prove that, in the nontrivial case, there is an optimal solution that increases the weight of at most one element outside $S_0$ (Theorem~\ref{thm:algorithmINMOTEXISTS}). Finally, \textsc{IM-Not-All} and \textsc{IM-Not-Only} are solved by combining the preceding structural results (Theorem~\ref{thm:NOT-IM-All-algorithm} and Corollary~\ref{cor:NOT-IM-Only-L1}).

For \textsc{IM-Only}, \textsc{IM-Not-Exists}, \textsc{IM-Not-All} and \textsc{IM-Not-Only}, we state the problem with integer input weights and require the new weight function to be integer-valued. These variants require a strict inequality between two basis weights. With real weights, the minimum need not be achieved: a tie can be broken by an arbitrarily small perturbation, but no best perturbation may exist. The integer formulation removes this ambiguity. 
The other variants are stated over real weights; when the input weights are integral, our algorithms also return integral solutions in the cases indicated in Table~\ref{tab:results}. 

It is helpful to view the classification through three recurring mechanisms. The first appears in the classical \textsc{IM} problem and in \textsc{IM-Exists}: in both cases, the goal is to make one suitable basis optimal. This keeps the problem close to standard matroid exchange; under the $\ell_\infty$-norm it leads to uniform shifts and reductions to classical \textsc{IM}, whereas under the $\ell_1$-norm the same flexibility is enough to encode the Steiner tree problem. The second mechanism appears in \textsc{IM-All} and \textsc{IM-Only}, where one has to control all bases contained in $S_0$ simultaneously. This is where the connected-component structure of the restriction $M|S_0$ becomes essential: the relevant step is to homogenize the weights so that all bases of $M|S_0$ have the same weight. The negated variants form the third group. Their algorithms combine the preceding exchange and homogenization ideas with a separation step, implemented through the auxiliary problem \textsc{IM-Outside}, which forces at least one maximum-weight basis to lie outside $S_0$.

\begin{table}[ht!]
    \caption{%
    Overview of the results. The columns labelled ``Integral'' refer to the version in which the input weights are integral and the modified weight function is required to be integral. Shaded cells indicate variants for which this integral formulation is the natural one, because strict separation is required for the optimum to be attained. The symbol $\star$ marks overlapping results in earlier work:~\cite{li2016algorithm,zhang2016partialmatroid} examined partial inverse problems allowing only weight increases or decreases, and ~\cite{ahmadian2018algorithms} solved the problem using an LP.
    }
    \label{tab:results}
    \medskip
    \centering
    \renewcommand{\arraystretch}{0.8}
    \arrayrulewidth 1pt
    \footnotesize{
    \begin{tabular}[t]{|>{\centering\arraybackslash}m{7.5em}|>{\centering\arraybackslash}m{9.5em}|>{\centering\arraybackslash}m{9.5em}|>{\centering\arraybackslash}m{8.5em}|>{\centering\arraybackslash}m{7.5em}|}
        \hline
        \textbf{Problem} 
            & \multicolumn{2}{c|}{\textbf{Result under the $\ell_\infty$-norm}} 
            & \multicolumn{2}{c|}{\textbf{Result under the $\ell_1$-norm}}\\ \hline
        & \textbf{General} & \textbf{Integral} & \textbf{General} & \textbf{Integral}\\ \hline
        \arrayrulecolor{gray!30}
        \textsc{IM} 
            & $\mathsf{P}$ (Thm.~\ref{thm:im_infty}) 
            & $\mathsf{P}$ (Rem.~\ref{rem:integral-IM-infty}) 
            & $\mathsf{P}$ (Thm.~\ref{thm:IM_L1}) 
            & $\mathsf{P}$ (Rem.~\ref{rem:integral-IM-L1})\\ \hdashline[1pt/1pt]
        \textsc{IM-Exists} 
            & $\mathsf{P}$ (Thm.~\ref{thm:IM-Exists-combAlgo}, $\star$\cite{li2016algorithm,zhang2016partialmatroid}) 
            & $\mathsf{P}$ (Rem.~\ref{rem:integral-IM-Exists-infty}) 
            & $\NP$-hard (Thm.~\ref{thm:IM-Exists_L1_hardness}) 
            & ---\\ \hdashline[1pt/1pt]
        \textsc{IM-All} 
            & $\mathsf{P}$ (Thm.~\ref{thm:IM-All_algorithm}) 
            & $\mathsf{P}$ (Rem.~\ref{rem:IM-All_integral}) 
            & $\mathsf{P}$ (Thm.~\ref{thm:algo-IM-All-Only-L1}) 
            & $\mathsf{P}$ (Thm.~\ref{thm:algo-IM-All-Only-L1})\\ \hdashline[1pt/1pt]
        \textsc{IM-Only} 
            & \cellcolor{greenish!20} ---
            & \cellcolor{greenish!20} $\mathsf{P}$ (Thm.~\ref{thm:IM-Only_algorithm}, $\star$\cite{ahmadian2018algorithms}) 
            & \cellcolor{greenish!20} ---
            & \cellcolor{greenish!20} $\mathsf{P}$  (Thm.~\ref{thm:algo-IM-All-Only-L1}) \\ \hdashline[1pt/1pt]
        \textsc{IM-Not-Exists} 
            & \cellcolor{greenish!20} ---
            & \cellcolor{greenish!20} $\mathsf{P}$ (Thm.~\ref{thm:NOT-IM-Exists-algorithm}) 
            & \cellcolor{greenish!20} ---
            & \cellcolor{greenish!20} $\mathsf{P}$ (Thm.~\ref{thm:algorithmINMOTEXISTS}) \\ \hdashline[1pt/1pt]
        \textsc{IM-Not-All} 
            & \cellcolor{greenish!20} ---
            & \cellcolor{greenish!20} $\mathsf{P}$ (Thm.~\ref{thm:NOT-IM-All-algorithm}) 
            & \cellcolor{greenish!20} ---
            & \cellcolor{greenish!20} $\mathsf{P}$ (Thm.~\ref{thm:NOT-IM-All-algorithm}) \\ \hdashline[1pt/1pt]
        \textsc{IM-Not-Only} 
            & \cellcolor{greenish!20} --- 
            & \cellcolor{greenish!20} $\mathsf{P}$ (Cor.~\ref{cor:NOT-IM-Only-infty}) 
            & \cellcolor{greenish!20} --- 
            & \cellcolor{greenish!20} $\mathsf{P}$ (Cor.~\ref{cor:NOT-IM-Only-L1})  \\ \hdashline[1pt/1pt] 
        \textsc{IM-Outside} 
            & $\mathsf{P}$ (Thm.~\ref{thm:algo-IM-Outside-infty}) 
            & $\mathsf{P}$ (Cor.~\ref{cor:rine}) 
            & $\mathsf{P}$ (Thm.~\ref{thm:algo-IM-Outside-L1}) 
            & $\mathsf{P}$ (Thm.~\ref{thm:algo-IM-Outside-L1})\\
        \arrayrulecolor{black}\hline
    \end{tabular}
    }
\end{table}

\subsection{Further implications}
\label{sec:implications}

The six variants above are formulated with respect to the family
\[
    \{B\in\cB \colon B\subseteq S_0\}
\]
of bases contained in a prescribed set $S_0$. 
Our algorithms extend verbatim
to the following three families, which correspond, respectively, to
\emph{forcing} the elements of $S_0$ into the basis, \emph{forbidding} the
elements of $S_0$ from the basis, and forcing the elements outside $S_0$ into
the basis:
\[
    \{B\in\cB \colon S_0\subseteq B\},\qquad
    \{B\in\cB \colon B\cap S_0=\emptyset\},\qquad
    \{B\in\cB \colon S\setminus B\subseteq S_0\}.
\]
We explain the reductions. First, the condition
$S_0\subseteq B$ can be handled by matroid duality. If $B$ is a basis of $M$,
then $S\setminus B$ is a basis of $M^*$. Moreover, $B$ is maximum-weight with
respect to $w$ if and only if $S\setminus B$ is maximum-weight in $M^*$ with
respect to $-w$. Hence
\[
    S_0\subseteq B
    \quad\Longleftrightarrow\quad
    S\setminus B\subseteq S\setminus S_0 ,
\]
so this case reduces to the subset-containment variants for $(M^*,-w)$ with
target set $S\setminus S_0$. Second, the condition $B\cap S_0=\emptyset$ is equivalent to
\[
    B\subseteq S\setminus S_0 .
\]
Thus this case is obtained directly from the original subset-containment
variants by replacing $S_0$ with $S\setminus S_0$. Third, the condition $S\setminus B\subseteq S_0$ is already a subset-containment condition for the dual basis $S\setminus B$. Therefore this case reduces to the subset-containment variants for $(M^*,-w)$ with target set
$S_0$. Consequently, after possibly replacing $S_0$ by $S\setminus S_0$, and
replacing $(M,w)$ by $(M^*,-w)$ when complements of bases are used, these
variants reduce to the subset-containment variants studied in the paper.

For \textsc{IM-Exists}, this gives the standard partial inverse
interpretations. If $P\subseteq S$ is a prescribed set of forced elements and
the goal is to make some maximum-weight basis contain $P$, then we apply
\textsc{IM-Exists} to $(M^*,-w)$ with target set $S\setminus P$. Indeed, a
basis $B$ of $M$ contains $P$ if and only if the dual basis $S\setminus B$ is
contained in $S\setminus P$. Similarly, if $F\subseteq S$ is a prescribed set
of forbidden elements and the goal is to make some maximum-weight basis avoid
$F$, then we apply \textsc{IM-Exists} directly to $M$ with target set
$S\setminus F$.

These reductions transfer both polynomial-time solvability and hardness
results. For algorithmic results, the stated running times may change if the
independence oracle of $M^*$ is simulated from the independence oracle of
$M$. We return to these partial inverse interpretations in
Section~\ref{sec:motiv}.

\subsection{Motivation and related work}
\label{sec:motiv}

Inverse combinatorial optimization is a well-established field with a vast
literature; see the comprehensive surveys by Heuberger~\cite{heuberger2004inverse}, Demange and Monnot~\cite{demange2014introduction}, and Chan et al.~\cite{chan2023inverse}, and the recent book on inverse combinatorial optimization by Guan et al.~\cite{Guan2025InverseBook}. Inverse matroid problems under subset
constraints are closely related to several areas in combinatorial optimization.
While previous techniques are well suited for fixed-target inverse problems
and for some partial inverse problems, they do not readily extend to the
subset-constrained variants introduced in this paper.

\paragraph{Motivation and examples.}
Applications of \textsc{IM-Exists}, \textsc{IM-All}, and \textsc{IM-Only}
arise naturally in inverse problems with forced or forbidden elements. In
network design problems modeled by the graphic matroid, for instance, one may
wish to perturb edge weights as little as possible so that a preferred
spanning tree, or a collection of spanning trees, becomes optimal. Such
settings capture leader-follower scenarios in which a decision-maker selects
an optimal solution while another agent seeks to influence this choice through
limited modifications of the input.

Similarly, \textsc{IM-Not-Exists} and \textsc{IM-Not-All} model situations
where undesirable optimal solutions should be ruled out. For example, in
procurement or hiring, one may want to prevent a single supplier or group from
capturing all selected slots. Such problems can be modeled by matroid
constraints, for instance by laminar matroids imposing upper bounds on nested
categories. In such settings, one may seek to modify item costs so that no
optimal solution is contained in a prescribed undesired subset.

A further motivation stems from dynamic pricing schemes. A combinatorial
market consists of a set of indivisible items and a set of agents, where each
agent has a valuation function assigning a value to each subset of items.
Cohen-Addad, Eden, Feldman, and Fiat~\cite{cohen2016invisible} and
independently Hsu, Morgenstern, Rogers, Roth, and
Vohra~\cite{hsu2016prices} observed that Walrasian prices alone may not be
enough to achieve optimal social welfare based solely on buyers' decisions. To
address this,~\cite{cohen2016invisible} introduced dynamic pricing schemes,
where prices can be updated between buyer arrivals. A central open question
posed in~\cite{cohen2016invisible} asked whether every market with gross
substitutes valuations admits a dynamic pricing scheme achieving optimal
social welfare.

Since matroid rank functions form a subclass of gross substitutes valuations,
a natural first step is to investigate whether a pricing scheme exists for
markets with matroid rank valuations~\cite{berczi2021market}. In this
setting, the problem reduces to an inverse-type formulation: given $k$
matroids $M_1,\dots,M_k$ over the same ground set, find a weight function
$w\in\mathbb{R}^S$ such that, for any maximum $w$-weight basis $B$ of $M_i$,
the complement $S\setminus B$ is a basis of the sum of the remaining
matroids. This resembles the structural requirements of \textsc{IM-Only},
where one prescribes a family of bases that should be exactly the family of
maximum-weight bases.

\paragraph{Related work: inverse matroid optimization.}
The classical \textsc{IM} problem was first introduced by
Cai~\cite{mao1999inverse}, who studied matroid intersection under the
weighted $\ell_1$-norm objective. Cai derived an LP formulation and
transformed the problem into a circulation model. For \textsc{IM}, this model
implies an LP with $\frac{|S|^2}{2}+2|S|$ constraints and $2|S|$ variables.
Dell'Amico, Maffioli, and Malucelli~\cite{dell2003base} later considered
\textsc{IM} under the $\ell_1$-norm and proposed an auxiliary matroid
construction that runs in $O(|S|r+r^3+|S|\varphi)$ time, where $\varphi$ is
the time to find the unique circuit in $M$ formed by adding to the given basis
$B$ an element not in $B$.

For the $\ell_\infty$-norm, classical \textsc{IM} is covered by several more
general settings. For example, both Li, Zhang, and
Lai~\cite{li2016algorithm} and Zhang, Li, Lai, and
Du~\cite{zhang2016partialmatroid} solve partial inverse problems for matroids
and their algorithms imply solutions to \textsc{IM} in one-sided settings. In
inverse optimization for $0$-$1$ linear programs, the work of Ahuja and
Orlin~\cite{Ahuja2001Inverse} implies an LP description for \textsc{IM} via
the ellipsoid method, while the results of Zhang and
Liu~\cite{zhang1999further,zhang2002general} imply an LP with one variable
and constraints corresponding to fundamental circuits with respect to the
target basis. Over general set systems, Bérczi, Mendoza-Cadena, and
Varga~\cite{berczi2023infinity,berczi2024spanLATIN} studied inverse problems
under the $\ell_\infty$-norm and span objectives, implying an algorithm that
makes $O(|S|)$ calls to an oracle for finding a maximum-weight feasible set.
In the matroid setting, our direct algorithm makes $O(r\log r)$ calls to the
independence oracle once a maximum-weight basis is known. Variants of the problem have also been explored under other distance
measures, including weighted Hamming distance~\cite{aman2016inverse} and the
$\ell_\infty$-norm~\cite{tayybi2021inverse}.

Frank and Murota~\cite{frank2022discrete} developed a general min--max
framework for minimizing an integer-valued separable discrete convex function
over the integral points of a box-TDI polyhedron, and showed that this
framework covers a broad class of inverse combinatorial optimization
problems. In particular, it captures fixed-solution inverse problems with
separable deviation objectives such as the $\ell_1$ deviation. However, this
framework does not subsume the results of the present paper. First, the
$\ell_\infty$ objective is not separable since it couples all coordinates through a maximum. Second, our subset-constrained
variants are not fixed-target inverse problems. For example,
\textsc{IM-Exists} asks that some basis contained in $S_0$ be optimal, so the
feasible set of admissible weight functions is a union of normal cones, one
for each such basis, rather than the normal cone of a prescribed solution.
The \textsc{IM-Only} and negated variants impose further exclusion or
strict-separation requirements. Finally, the results
of~\cite{frank2022discrete} are primarily min--max characterizations and do
not, by themselves, provide algorithms for computing an optimal modified
weight function.

\paragraph{Related work: forced and forbidden elements.}
Partial inverse optimization problems aim to adjust weights so that a desired
subset of elements is either included in or excluded from an optimal solution.
In the matroid setting, Li, Zhang, and Lai~\cite{li2016algorithm} and Zhang,
Li, Lai, and Du~\cite{zhang2016partialmatroid} considered the problem of
modifying weights, only decreasing or only increasing them, respectively, so
that a maximum-weight basis contains a given independent set. The partial
inverse problem where only a set of elements must be part of a basis was
studied by Lai and Orlin~\cite{lai2003complexity} under the weighted
$\ell_\infty$-norm.

Similar problems have been studied for spanning trees, cuts, and matchings~\cite{dong2022partial,li2019capacited,li2020approximation,li2022partial,li2021capacited}, with different complexities depending on the structure of the constraints and the norm used as an objective~\cite{gassner2010,lai2003complexity,gassner2009}. In the exclusion variant, where certain elements must be avoided in any optimal solution, many well-known problems become $\NP$-hard even under simple norms~\cite{lai2003complexity,gassner2009}. Ahmadian, Bhaskar, Sanit{\`a}, and Swamy~\cite{ahmadian2018algorithms} considered integral inverse optimization problems in which a prescribed collection of feasible solutions is required to be exactly the set of optimal solutions. For matroid bases, their results include an LP-based treatment of the corresponding inverse min-cost and max-cost basis problems, as well as extensions to minimum-deviation variants such as $\ell_\infty$ distance minimization. In our terminology, this coincides with the integral $\ell_\infty$ version of \textsc{IM-Only}. Our results recover the overlapping case by a direct combinatorial algorithm based on the connected components of $M|S_0$; more broadly, we treat all six subset-constrained variants and compare the $\ell_\infty$- and $\ell_1$-norms.

As detailed in Section~\ref{sec:implications}, the unconstrained versions of
these standard partial inverse problems map directly to instances of our
subset-constrained variants through dual matroid constructions.

\section{Preliminaries}
\label{sec:preliminaries}
Here we provide a brief summary of basic definitions and notation. For further details on matroid theory, we refer the reader to~\cite{oxley2011matroid,frank2011book}.

We denote the sets of \emph{real} and \emph{integer} numbers by $\bR$ and $\bZ$, respectively, and add $+$ as a subscript whenever non-negativity is assumed. For a positive integer $k$, we use $[k]\coloneqq \{1,\dots,k\}$. Let $S$ be a ground set of size $n$. Given subsets $X,Y\subseteq S$, the \emph{symmetric difference} of $X$ and $Y$ is denoted by $X\triangle Y\coloneqq (X\setminus Y)\cup(Y\setminus X)$. If $Y$ consists of a single element $y$, then $X\setminus\{y\}$ and $X\cup \{y\}$ are abbreviated as $X-y$ and $X+y$, respectively. Given a weight function $w\in \bR^S$, the weight of a set $F \in \cF$ is $w(F) \coloneqq \sum_{s \in F} w(s)$. By convention, we define $\min \emptyset = +\infty$ and $\max \emptyset = -\infty$. Given a vector $p\in\bR^S$, its \emph{$\ell_1$-norm} is $\|p\|_1\coloneqq \sum_{s\in S}|p(s)|$, and its \emph{$\ell_\infty$-norm} is $\|p\|_\infty\coloneqq\max_{s\in S}|p(s)|$. The \emph{characteristic vector} of a set $Z\subseteq S$ is $\chi_Z(s)=1$ if $s\in Z$ and $0$ otherwise.

\medskip

A \emph{matroid} $M=(S,\cI)$ consists of a finite \emph{ground set} $S$ and a family $\cI\subseteq 2^S$ of \emph{independent sets} satisfying the independence axioms: (I1) $\emptyset\in\cI$, (I2) $X\subseteq Y,\ Y\in\cI\Rightarrow X\in\cI$, and (I3) if $X,Y\in\cI$ and $|X|<|Y|$, then there exists $e\in Y\setminus X$ such that $X+e\in\cI$. We set $n\coloneqq |S|$. The \emph{rank} $r(X)$ of a set $X\subseteq S$ is the maximum size of an independent set contained in $X$, and the rank of the matroid is $r(S)$, denoted simply by $r$ when the matroid is clear from the context. The maximal independent sets are called \emph{bases}, and their family is denoted by $\cB$. If a matroid is specified by its rank function or by its family of bases, then we write $M=(S,r)$ or $M=(S,\cB)$, respectively.

A set $X\subseteq S$ is said to \emph{span} or \emph{generate} a set $Y\subseteq S$ if $r(X\cup Y)=r(X)$. The \emph{closure} of $X$ is defined by $\cl(X)\coloneqq \{s\in S\colon r(X+s)=r(X)\}$. Equivalently, $\cl(X)$ is the set of elements spanned by $X$.

We use the following standard matroid operations. For $S'\subseteq S$, the \emph{restriction} of $M=(S,\cI)$ to $S'$ is the matroid $M|S'=(S',\cI')$, where $\cI'\coloneqq \{I\in\cI\colon I\subseteq S'\}$. Given a matroid $M=(S,\cB)$, its \emph{dual} is the matroid $M^*=(S,\cB^*)$, where $\cB^*\coloneqq \{X\subseteq S\colon S\setminus X\in\cB\}$. Thus, the bases of $M^*$ are precisely the complements of the bases of $M$. If $M_1=(S_1,\cI_1),\dots,M_k=(S_k,\cI_k)$ are matroids on pairwise disjoint ground sets and $S=S_1\cup\dots\cup S_k$, then their \emph{direct sum} is the matroid $M_1\oplus\dots\oplus M_k=(S,\cI)$, where $\cI\coloneqq \{I\subseteq S\colon I\cap S_i\in\cI_i\text{ for every }i\in\left[k\right]\}$.

We will also use graphic matroids. Given a graph $G=(V,E)$, the \emph{graphic matroid} $M(G)$ has ground set $E$, and a set of edges is independent if it is a forest. Thus, the bases of $M(G)$ are the spanning forests of $G$, and if $G$ is connected, they are the spanning trees of $G$.

An inclusionwise minimal dependent set is called a \emph{circuit}. If $X$ is independent and $X+e$ is dependent, then the unique circuit contained in $X+e$ is called the \emph{fundamental circuit} of $e$ with respect to $X$, and it is denoted by $C(X,e)$.
\begin{remark}\label{rem:findingFundCircuit}
Given a basis $B$ and an element $f \notin B$, we can find $C(B,f)$ using $O(r)$ independence-oracle calls. To do this, observe that for each $e \in B$, we query the independence oracle to test whether $B - e + f$ is independent or dependent. For each $e\in B$, we have
$e\in C(B,f)\setminus\{f\}$ if and only if $B-e+f$ is independent.
Thus, $C(B,f)$ can be found using $r=|B|$ independence-oracle calls.
\end{remark}

The \emph{fundamental circuit exchange graph} of a basis $B$ is the bipartite graph $G_B=(B,S\setminus B;E_B)$ in which $ef\in E_B$, for $e\in B$ and $f\in S\setminus B$, if and only if $e\in C(B,f)$, or equivalently, $B-e+f$ is a basis.

The connected components of a matroid are defined through circuits. More precisely, define a relation on $S$ by declaring $x\sim y$ if $x=y$ or if there is a circuit containing both $x$ and $y$. By the circuit elimination property, this is an equivalence relation; equivalently, its equivalence classes are the connected components of the circuit hypergraph of $M$. Let these classes be $S_1,\dots,S_k$. The matroids $M_i=M|S_i$ are called the \emph{connected components} of $M$, and $M=M_1\oplus\dots\oplus M_k$. A matroid is \emph{connected} if it has a single connected component. The connected components can also be obtained from the fundamental circuits with respect to any fixed basis $B$: the connected components of the hypergraph with vertex set $S$ and hyperedge set $\{C(B,f)\colon f\in S\setminus B\}$ are exactly the connected components of $M$. In particular, the connected components of a matroid can be computed using $O(nr)$ independence-oracle calls by the results of Krogdahl~\cite{krogdahl1977dependence}. In the special case of a graphic matroid $M(G)$, the connected components are the loops, the bridges, and the edge sets of the maximal $2$-vertex-connected blocks of $G$.

We use the following form of the basis exchange property; see, e.g.,~\cite[Theorem 5.3.4]{frank2011book}.

\begin{prop}\label{prop:exchange}
For any two bases $B_1$ and $B_2$ of a matroid, there exists a bijection $\varphi\colon B_1\setminus B_2\to B_2\setminus B_1$ such that $B_1-e+\varphi(e)$ is a basis for every $e\in B_1\setminus B_2$.
\end{prop}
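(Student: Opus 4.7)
My plan is to reduce the statement to Hall's marriage theorem applied to a bipartite ``exchange graph''. Define $G$ with vertex classes $A \coloneqq B_1 \setminus B_2$ and $B \coloneqq B_2 \setminus B_1$ -- which have equal cardinality since $|B_1|=|B_2|$ -- and place an edge between $e \in A$ and $f \in B$ precisely when $B_1 - e + f$ is a basis. A bijection $\varphi\colon B_1\setminus B_2\to B_2\setminus B_1$ with the required property is exactly a perfect matching of $G$, so it suffices to verify Hall's condition $|N_G(X)| \geq |X|$ for every $X \subseteq A$.

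To check Hall's condition, I fix $X \subseteq A$ and aim to bound $|B \setminus N_G(X)|$ from above. For every $f \in B \setminus N_G(X)$ and every $e \in X$, the set $B_1 - e + f$ is dependent, equivalently $f \in \cl(B_1 - e)$, so $f \in \bigcap_{e \in X}\cl(B_1 - e)$. The key intermediate identity is
\[
\bigcap_{e \in X} \cl(B_1 - e) \;=\; \cl(B_1 \setminus X),
\]
which I would verify through fundamental circuits: for $z \notin B_1$, the element $z$ lies in $\cl(B_1 - e)$ iff $e \notin C(B_1, z)$, so $z$ belongs to the whole intersection iff $C(B_1, z) \cap X = \emptyset$, iff $z \in \cl(B_1 \setminus X)$; the case $z \in B_1$ is immediate. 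Combining the identity with the trivial inclusion $B_1 \cap B_2 \subseteq B_1 \setminus X \subseteq \cl(B_1 \setminus X)$, which is valid because $X \subseteq B_1 \setminus B_2$, I obtain $B_2 \subseteq N_G(X) \cup \cl(B_1 \setminus X)$. Taking ranks yields
\[
r(S) \;=\; r(B_2) \;\leq\; |N_G(X)| + r\bigl(\cl(B_1\setminus X)\bigr) \;=\; |N_G(X)| + (r(S) - |X|),
\]
so $|N_G(X)| \geq |X|$, and Hall's theorem supplies the desired perfect matching, which encodes the required bijection $\varphi$.

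The main technical point is the closure identity, which already contains essentially all of the matroidal content; a slightly slicker route that avoids some bookkeeping is to first contract $B_1 \cap B_2$, reducing to two disjoint bases of the contracted matroid, and then run exactly the same Hall argument in cleaner notation. Either way, the remaining steps -- subadditivity of rank and closure preserves rank -- are entirely routine.
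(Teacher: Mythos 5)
Your proof is correct: the reduction to Hall's condition on the exchange bipartite graph, the closure identity $\bigcap_{e\in X}\cl(B_1-e)=\cl(B_1\setminus X)$ via fundamental circuits, and the rank count $r(B_2)\leq |N_G(X)|+r(B_1\setminus X)$ all check out. The paper does not prove this proposition at all -- it cites it as Theorem 5.3.4 of Frank's book -- and your argument is precisely the standard proof of this classical exchange-bijection result, so there is nothing to compare beyond noting that you have supplied the omitted details correctly.
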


We also use the symmetric basis-exchange property; see, e.g.,~\cite[Theorem 5.3.3]{frank2011book}.

\begin{prop}[Symmetric basis exchange]\label{prop:symmetric-exchange}
Let $B_1$ and $B_2$ be bases and let $f\in B_2\setminus B_1$. Then there exists $e\in B_1\setminus B_2$ such that both $B_1-e+f$ and $B_2-f+e$ are bases.
\end{prop}

The following characterization of maximum-weight bases will be used repeatedly; see, e.g.,~\cite[Theorem 5.5.3]{frank2011book}.

\begin{prop}\label{prop:charact_optimum_basis}
A basis $B$ is maximum-weight with respect to a weight function $w\in\bR^S$ if and only if $w(f)\leq w(e)$ for every $f\in S\setminus B$ and every $e\in C(B,f)\setminus\{f\}$.
\end{prop}

We will often use an equivalent closure formulation. For a set $S'\subseteq S$ and an element $x\in S$, define
\[
    S'\!\left[w\geq w(x)\right]\coloneqq \{s\in S'\colon w(s)\geq w(x)\}.
\]
If $B$ is a basis and $f\in S\setminus B$, then $C(B,f)-f\subseteq X$ for $X\subseteq B$ if and only if $f\in\cl(X)$. Applying this to $X=B\!\left[w\geq w(f)\right]$ gives the following corollary.

\begin{cor}[Closure condition]\label{cor:charact_optimum_basis_Closure}
A basis $B$ is maximum-weight with respect to $w$ if and only if
\[
    f\in\cl\bigl(B\!\left[w\geq w(f)\right]\bigr)
\]
for every $f\in S\setminus B$.
\end{cor}

As usual in matroid algorithms, we assume that the matroid is given by an independence oracle, and we measure the running time by the number of oracle calls and elementary operations. A polynomial number of oracle calls means a number polynomial in $n=|S|$.

\section{\textsc{Inverse Matroid}}
\label{sec:offline}

As a warm-up, we consider \textsc{IM}, where the underlying optimization problem is to find a maximum-weight basis of a matroid. We include this case mainly for completeness and to keep the presentation self-contained. Moreover, the algorithm for \textsc{IM-Exists} under the $\ell_\infty$-norm relies on this problem as a subroutine, and the algorithms for \textsc{IM-All} in Section~\ref{sec:IM-All} also solve this problem by taking $S_0 = B_0$ for both the $\ell_\infty$- and $\ell_1$-norms. The results in this section should be viewed as a simple starting point rather than as the main technical contribution: their role is to isolate, in the classical setting, the exchange and perturbation ideas that will be used later. While, to the best of our knowledge, \textsc{IM} has not been studied explicitly under the $\ell_\infty$-norm, closely related versions have already been addressed in the literature, including for set systems, more general matroid settings, and spanning trees. Readers already familiar with classical inverse matroid optimization may read this section mainly as preparation for the subset-constrained variants. Section~\ref{sec:IM_infty} presents our algorithm for the $\ell_\infty$-norm, exploiting the matroid structure to refine the min--max theorem from~\cite{berczi2023infinity} and derive additional structural consequences. Section~\ref{sec:IM-L1} treats the $\ell_1$-norm.

Formally, \textsc{IM} is defined as follows.

\problemdef{Inverse Matroid (\textsc{IM})}
    {A matroid $M=(S,\mathcal{I})$, a fixed basis $B^*$, a weight function $w\in\bR^S$, and an objective function $\|\cdot\|$ defined on $\bR^S$.}
    {Find a weight function $w^* \in \bR^S$ such that 
    \begin{enumerate}[label=(\alph*),topsep=1px,partopsep=2px]\itemsep0em
        \item \label{it:IM_a} $B^*$ is a maximum-weight basis of $M$ with respect to $w^*$, and
        \item \label{it:IM_c} $\|w-w^*\|$ is minimized.
    \end{enumerate}}

A weight function is called \textit{feasible} if condition~\ref{it:IM_a} holds. Note that \textsc{IM} is always feasible: set $w^*(s)=0$ for all $s\in S$.

\begin{remark}
    Given a weight function $w'$, we can verify in polynomial time whether $w'$ is feasible for \textsc{IM}. To do so, compute a maximum $w'$-weight basis of $M$ and compare its weight to that of $B^*$.
\end{remark}

\subsection{IM under \texorpdfstring{$\ell_\infty$}{l-infinity}-norm}
\label{sec:IM_infty}
The goal of this section is to design an algorithm for \textsc{IM} under the $\ell_\infty$-norm.
As mentioned in Section~\ref{sec:motiv}, several works address this problem in more general settings~\cite{berczi2023infinity,Ahuja2001Inverse,zhang2002general,zhang1999further,li2016algorithm,zhang2016partialmatroid}, yet none of them targets matroids specifically.
We exploit the structure of matroids and prove a refined version of~\cite{berczi2023infinity} for \textsc{IM}.
 
A common technique for \textsc{IO} under the $\ei$-norm is to show that an optimal weight function $w^*$ of a special form exists; see, e.g.,~\cite{mao1999inverse,berczi2023infinity}. We use this result here.

\begin{prop}\label{prop:ell_infty_devVector_of_special_form}     
 Let $\delta^*$ be the optimum value of an \textsc{IO} instance $( S, \cF, F^*, w, \|\cdot\|_\infty )$. Then, for any $\delta\geq \delta^*$, 
    $w_\delta= w +\delta\cdot \chi_{F^*} - \delta\cdot \chi_{S\setminus F^*}$
    is a feasible weight function. In particular, $w_{\delta^*}$ is an optimal solution.
\end{prop}

We proceed to our main theorem, which has an algorithmic proof. An illustration of the algorithm is shown in Figure~\ref{fig:example_IM_infty}.

\begin{thm}\label{thm:im_infty}
Let $B_{\max}$ be a maximum $w$-weight basis of $M$, and let $B^*$ be a fixed basis. If $B^*$ is already a maximum $w$-weight basis, then the optimum value of the \textsc{IM} instance $(M=(S,\cI),B^*,w,\|\cdot\|_\infty)$ is $0$. Otherwise, there exist elements $f\in B_{\max}\setminus B^*$ and $e\in B^*\setminus B_{\max}$ such that $B^*-e+f$ is a basis, $w(f)>w(e)$, and the optimum value is $\delta^*=(w(f)-w(e))/2$. Moreover,
$\delta^*$ can be found using $O(n\log n)$ elementary operations and $O(n + r\log r)$ independence-oracle calls.
\end{thm}
\begin{proof}
If $B^*$ is already a maximum $w$-weight basis, then $w$ is feasible and the
optimum value is $0$. We assume from now on that $B^*$ is not maximum-weight.
Label $B^*=\{e_1,\dots,e_r\}$ so that $w(e_1)\ge\dots\ge w(e_r)$.  For each $f\in B_{\max}\setminus B^*$, we define $\ind(f)=\min\{i\colon f\in\cl(\{e_1,\dots,e_i\})\}$ and $\gamma_f=\max\{w(f)-w(e_{\ind(f)}),0\}$. Finally, let $\delta^*=\max\{\gamma_f/2\colon f\in B_{\max}\setminus B^*\}$.

We claim that $\delta^*$ is the optimum value of the \textsc{IM} instance considered. To prove this, define a modified weight function $w^*$ as in Proposition~\ref{prop:ell_infty_devVector_of_special_form}, namely
\[
w^*(s)=
\begin{cases}
w(s)+\delta^* & \text{if $s\in B^*$},\\
w(s)-\delta^* & \text{if $s\notin B^*$}.
\end{cases}
\]
We define a total order $s_1, \dots, s_n$ on the ground set $S$, sorted in decreasing order of $w^*$-value. To break ties, elements in $B^*$ come before those in $B_{\max} \setminus B^*$, which in turn come before elements in $S \setminus (B^* \cup B_{\max})$. That is, if $w^*(s_i) = w^*(s_j)$, then $s_i$ precedes $s_j$ whenever $s_i \in B^*$ and $s_j \notin B^*$, or when $s_i \in B_{\max}$ and $s_j \notin B^* \cup B_{\max}$. Ties not resolved by these rules are broken using any fixed total order on $S$.

We claim that the greedy algorithm applied to $S$ under this order returns $B_G = B^*$. To see this, note that since $B_{\max}$ is a maximum $w$-weight basis, any element $g \in S \setminus (B_{\max} \cup B^*)$ satisfies
\[
g \in \cl\bigl(B_{\max}[w \geq w(g)]\bigr) \subseteq \operatorname{cl}\bigl(\{s_j \in S \colon j < i\}\bigr),
\]
where $s_i = g$ is its position in the ordering. Hence, $g\notin B_G$. Furthermore, for every $f\in B_{\max}\setminus B^*$, 
$f\in \cl\bigl(\{e_1,\dots,e_{\ind(f)}\}\bigr) \subseteq \cl\bigl(B^*[w \geq w(f)-\gamma_f]\bigr) \subseteq \cl\bigl(\{s_j \in S \colon j < i\}\bigr)$,
 where $s_i = f$ is its position in the ordering. Hence, $f \notin B_G$. Therefore, no element outside $B^*$ is selected by the greedy algorithm, which implies that $B_G \subseteq B^*$. Since $B_G$ and $B^*$ are both bases, it follows that $B_G = B^*$.

It remains to show that no smaller value is feasible. Let $\delta<\delta^*$. Then there exists $f\in B_{\max}\setminus B^*$ with $2\delta<\gamma_f$. Let $e=e_{\ind(f)}$. By the minimality in the definition of $\ind(f)$, we have $e\in C(B^*,f)$, and hence $B^*-e+f$ is a basis. Let $w'$ be any weight function with $\|w'-w\|_\infty\le\delta$. Then $w'(f)\geq w(f)-\delta$ and $w'(e)\leq w(e)+\delta$, thus 
\[
w'(f)-w'(e)\geq w(f)-w(e)-2\delta>0.
\]
Hence $B^*-e+f$ has larger weight than $B^*$ with respect to $w'$, and so $w'$ is not feasible.

This gives an algorithm. If $B^*$ is already a maximum $w$-weight basis, then the optimum value is
$0$. Otherwise, sort the elements of $B^*$ in nonincreasing order of
their $w$-weights. For each $f\in B_{\max}\setminus B^*$, compute
$\ind(f)$, determine the corresponding value $\gamma_f$, and finally
compute $\delta^*$ and $w^*$ as stated above.

\textit{Running time.}
Obtaining $B_{\max}$ requires $O(|S| \log |S|)$ time for sorting the elements in $S$, and $O(|S|)$ independence-oracle calls for running the Greedy algorithm. 
Sorting the elements of $B^*$ requires
$O(r\log r)$ elementary operations. For each
$f\in B_{\max}\setminus B^*$, the value $\ind(f)$ can be computed using
binary search over the ordered basis $B^*$, requiring
$O(\log r)$ independence-oracle calls and elementary operations.
Since $|B_{\max}\setminus B^*|\le r$, computing all values $\ind(f)$,
and hence all values $\gamma_f$, requires
$O(r\log r)$ elementary operations and
$O(r\log r)$ independence-oracle calls. The remaining computations take
linear time. 
\end{proof}

\begin{figure}[ht!]
\centering
    \begin{subfigure}[b]{0.45\textwidth}
        \centering
        \begin{tikzpicture}[scale=0.9,
            myEdge/.style={line width = 1.5pt},     
            state/.style={circle,  minimum size=1em, draw, line width = 1.2pt}]   
            \node (a) at (0,2) [state] {$a$};
            \node (b) at (4,2) [state] {$b$};
            \node (c) at (1,1) [state] {$c$};
            \node (d) at (3,1) [state] {$d$};
            \node (e) at (0,0) [state] {$e$};
            \node (f) at (4,0) [state] {$f$}; 
            \draw [myEdge] (a) to node[midway, above] {7} (b) ; 
            \draw [myEdge] (e) to node[midway, below] {6} (f) ;
            \draw [myEdge] (a) to node[midway, left] {0} (e);
            \draw [myEdge] (d) to node[midway, left] {3} (f);
            \foreach \u \v \w \pos in {%
            a/c/{\rlap{0}\phantom{3.5..}}/right, c/d/6/above, d/b/6/left,  e/c/1/right, b/f/{\rlap{1}\phantom{3.5..}}/right%
            }{%
                \draw [MyOrange!30, line width = 5pt] (\u) to (\v) ; 
                \draw [myEdge] (\u) to node[midway,\pos] {\w} (\v) ; 
            }     
        \end{tikzpicture}
        \caption{Original instance.}
        \label{fig:example_IM_infty_original}
    \end{subfigure}%
    \begin{subfigure}[b]{0.45\textwidth}
        \centering
        \begin{tikzpicture}[scale=0.9,
            myEdge/.style={line width = 1.5pt},     
            state/.style={circle,  minimum size=1em, draw, line width = 1.2pt}]   
            \node (a) at (0,2) [state] {$a$};
            \node (b) at (4,2) [state] {$b$};
            \node (c) at (1,1) [state] {$c$};
            \node (d) at (3,1) [state] {$d$};
            \node (e) at (0,0) [state] {$e$};
            \node (f) at (4,0) [state] {$f$}; 
            \draw [myEdge] (a) to node[midway, above] {3.5} (b) ; 
            \draw [myEdge] (e) to node[midway, below] {2.5} (f) ;
            \draw [myEdge] (a) to node[midway, left] {-3.5} (e);
            \draw [myEdge] (d) to node[midway, left] {0.5} (f);
            \foreach \u \v \w \pos in {%
            a/c/{\rlap{3.5}\phantom{3.5..}}/right, c/d/9.5/above, d/b/9.5/left,  e/c/4.5/right, b/f/{\rlap{4.5}\phantom{3.5..}}/right%
            }{%
                \draw [MyOrange!30, line width = 5pt] (\u) to (\v) ; 
                \draw [myEdge] (\u) to node[midway,\pos] {\w} (\v) ; 
            }     
        \end{tikzpicture}
        \caption{Optimal $w^*$ under the $\ell_\infty$-norm.}
        \label{fig:example_IM_infty_final}
    \end{subfigure}%
    \caption{Illustration of algorithms for \textsc{IM} on the graphic matroid in Figure~\ref{fig:example_IM_infty_original} under the $\ell_\infty$-norm (the algorithm from Theorem~\ref{thm:im_infty}). The basis $B^* = \{bd,cd,bf,ce,ac\}$, ordered by nonincreasing weight, is highlighted in orange; $B_{\max}=\{ab,bd,cd,ef,df\}$. 
    We have that $\ind(ab)=5$ and $\gamma_{ab}=\max\{w(ab)-w(ac),0\}=7-0=7$, $\ind(ef)=3$ and $\gamma_{ef}=\max\{w(ef)-w(bf),0\}=6-1=5$, $\ind(df)=3$ and $\gamma_{df}=\max\{w(df)-w(bf),0\}=3-1=2$, and thus $\delta^*=\max\{7/2,5/2,2/2\}=3.5$. The optimal weight function is shown in Figure~\ref{fig:example_IM_infty_final}.
    } 
    \label{fig:example_IM_infty}
\end{figure}

The proof of Theorem~\ref{thm:im_infty} also yields the following min--max characterization: the maximum of
\[
    \frac{w(B)-w(B^*)}{|B\triangle B^*|}
\]
over all $B\in\cB\setminus\{B^*\}$ is attained by a basis of the form $B'=B^*-e+f$. Thus the maximum is attained by a basis with $|B'\triangle B^*|=2$, refining the matroidal specialization of the min--max result in~\cite{berczi2023infinity}. We record a closely related statement too. In Figure~\ref{fig:example_IM_infty}, the basis is $B^*-ac+ab$.

\begin{cor} \label{cor:IM-offline-infty_symmDiff_equals_2}
    Let $M=(S,\cB)$ be a matroid, $B^*\in\cB$ be a fixed basis, and $w\in\bR^S$ be a weight function.
    \begin{enumerate}[label=(\alph*)]\itemsep0em
        \item If $B^*$ is not of maximum $w$-weight, then there exists $B'\in\cB$ with $|B'\triangle B^*|=2$ that maximizes $(w(B)-w(B^*))/|B\triangle B^*|$ over all $B\in\cB\setminus\{B^*\}$. \label{it:a}
        \item If $B^*$ is not of minimum $w$-weight, then there exists $B'\in\cB$ with $|B^*\triangle B'|=2$ that maximizes $(w(B^*)-w(B))/|B^*\triangle B|$ over all $B\in\cB\setminus\{B^*\}$. \label{it:b}
    \end{enumerate}
\end{cor}
\begin{proof}
Part~\ref{it:a} is a direct consequence of Theorem~\ref{thm:im_infty}, and~\ref{it:b} follows from~\ref{it:a} by multiplying $w$ by $-1$.
  \end{proof}

\begin{remark}\label{rem:integral-IM-infty}
For the \textsc{Integral-IM} problem under the $\ell_\infty$-norm, where $w\in\bZ^S$ and the modified weight function $w^*$ is required to be integer-valued, Corollary~\ref{cor:IM-offline-infty_symmDiff_equals_2} implies that the optimum value is $\lceil\delta^*\rceil$, where $\delta^*$ is the optimum value of the corresponding real-valued instance. Indeed, the algorithm of Theorem~\ref{thm:im_infty} computes $\delta^*$, and applying the same perturbation with $\lceil\delta^*\rceil$ yields an integer-valued feasible weight function.
\end{remark}

\subsection{IM under \texorpdfstring{$\ell_1$}{l-1}-norm}
\label{sec:IM-L1}

As in the $\ell_\infty$ case, our approach relies on the following known structural theorem for \textsc{IM}~\cite{mao1999inverse,dell2003base,zhang1999further}.

\begin{lemma}\label{lem:IM-L1-nice-structure}
Let $(M=(S,\cI),B^*,w,\|\cdot\|_1)$ be an IM instance.
Then there exists an optimal solution $w^*$ such that $w^*(s)\geq w(s)$ for every $s\in B^*$ and $w^*(s)\leq w(s)$ for every $s\in S\setminus B^*$.
\end{lemma}

The $\ell_1$ version of inverse matroid optimization is closely related to earlier work on inverse matroid intersection. Cai~\cite{mao1999inverse} studied inverse matroid intersection under a weighted $\ell_1$ objective and gave an LP formulation that can be transformed into a circulation problem. For the single-matroid basis problem considered here, Dell'Amico, Maffioli, and Malucelli~\cite{dell2003base} also derive a matching-based formulation before giving a faster algorithm via an auxiliary matroid construction. We include the short vertex-labeling proof below to keep the paper self-contained and to fix the notation used later; an example is shown in
Figure~\ref{fig:example_IM_L1}.

\begin{thm}\label{thm:IM_L1}
\textsc{IM} under the $\ell_1$-norm can be solved in polynomial time by making $O(rn)$ independence-oracle calls, plus the time needed to solve the minimum-cost feasible vertex-labeling problem on a bipartite graph with at most $O(rn)$ edges.
\end{thm}
\begin{proof}
If the original weight function $w$ is feasible, then it is already optimal. We may therefore assume that $B^*$ is not maximum-weight with respect to $w$. Consider an \textsc{IM} instance $(M=(S,\cI),B^*,w,\|\cdot\|_1)$. By Lemma~\ref{lem:IM-L1-nice-structure}, there is an optimal solution $w'$ such that $w'(e)\geq w(e)$ for every $e\in B^*$ and $w'(f)\leq w(f)$ for every $f\in S\setminus B^*$. Thus it is enough to search for an optimal solution of this form. Write
\[
    \pi(e)=w'(e)-w(e)\quad\text{for }e\in B^*
\]
and
\[
    \pi(f)=w(f)-w'(f)\quad\text{for }f\in S\setminus B^* .
\]
Then $\pi(s)\geq 0$ for every $s\in S$, and $\|w'-w\|_1=\sum_{s\in S}\pi(s)$.

We construct the fundamental circuit exchange graph of $B^*$, namely the bipartite graph $G=(B^*,S\setminus B^*;E)$ in which $ef\in E$, for $e\in B^*$ and $f\in S\setminus B^*$, if and only if $e\in C(B^*,f)$. By Proposition~\ref{prop:charact_optimum_basis}, the basis $B^*$ is maximum-weight with respect to $w'$ if and only if $w'(e)\geq w'(f)$ for every such pair $e,f$. Substituting $w'(e)=w(e)+\pi(e)$ and $w'(f)=w(f)-\pi(f)$, this condition becomes
\[
    \pi(e)+\pi(f)\geq w(f)-w(e)\qquad\text{for every }ef\in E .
\]
Edges with $w(e)\geq w(f)$ give redundant inequalities, since their right-hand side is nonpositive. We therefore keep only the edges with $w(f)>w(e)$, and for each remaining edge $ef$ we set
\[
    \alpha(ef)\coloneqq w(f)-w(e)>0 .
\]

The above shows that the optimum value of the \textsc{IM} instance is equal to the optimum value of
\[
\begin{aligned}
    \min \quad & \sum_{s\in S}\pi(s) \\
    \text{s.t.}\quad & \pi(e)+\pi(f)\geq \alpha(ef) \qquad &&\text{for every }ef\in E,\\
    & \pi(s)\geq 0 &&\text{for every }s\in S .
\end{aligned}
\]
Indeed, every feasible weight function $w'$ of the monotone form above gives a feasible vector $\pi$ of the same $\ell_1$-cost. Conversely, every feasible vector $\pi$ defines a weight function by setting $w'(e)=w(e)+\pi(e)$ for $e\in B^*$ and $w'(f)=w(f)-\pi(f)$ for $f\in S\setminus B^*$. For this weight function, the inequalities imply
\[
    w'(e)-w'(f)=\pi(e)+\pi(f)-\alpha(ef)\geq 0
\]
for every $f\in S\setminus B^*$ and every $e\in C(B^*,f)\setminus\{f\}$ with $w(f)>w(e)$. The remaining exchange inequalities satisfy $w(e)\geq w(f)$ and are automatic by the monotonicity of $w'$. Hence $B^*$ is maximum-weight by Proposition~\ref{prop:charact_optimum_basis}. Note that this is the minimum-cost feasible vertex-labeling problem on the bipartite graph $G$, equivalently the dual of the maximum-weight matching problem; see~\cite[Section~3.3.1]{frank2011book}. Hence it can be solved in polynomial time.

This gives an algorithm. We construct $G$ by testing all pairs $e\in B^*$ and $f\in S\setminus B^*$. For each pair, we check whether $B^*-e+f$ is a basis. If it is a basis and $w(f)>w(e)$, we add the edge $ef$ with demand $\alpha(ef)=w(f)-w(e)$. We then solve the minimum-cost feasible vertex-labeling problem, equivalently the dual of maximum-weight bipartite matching, and output the weight function $w'$ defined above.

\textit{Running time.} 
Constructing $G$ takes $r(n-r)=O(rn)$ independence-oracle calls and produces a graph with at most $r(n-r)=O(rn)$ edges. Therefore,
 overall we require $O(rn)$ independence-oracle calls, plus the time required to solve the
minimum-cost feasible vertex-labeling problem on the resulting
bipartite graph.
\end{proof}

\begin{remark}
The proof shows that \textsc{IM} under the $\ell_1$-norm reduces to a minimum-cost vertex-labeling problem on a bipartite graph, or equivalently, by Egerváry's theorem, to maximum-weight matching in $G$. This is in contrast with the $\ell_\infty$ case, where the optimum is determined by a single exchange.
\end{remark}

\begin{figure}[ht!]
\centering
    \begin{subfigure}[b]{0.25\textwidth}
        \centering
        \resizebox{\textwidth}{!}{
        \begin{tikzpicture}[
            myEdge/.style={line width = 1.5pt},     
            state/.style={circle,  minimum size=2em, draw, line width = 1.2pt}]   
            \node (a) at (0,2) [state] {$a$};
            \node (b) at (4,2) [state] {$b$};
            \node (c) at (1,1) [state] {$c$};
            \node (d) at (3,1) [state] {$d$};
            \node (e) at (0,0) [state] {$e$};
            \node (f) at (4,0) [state] {$f$}; 
            \draw [myEdge] (a) to node[midway, above] {7} (b) ; 
            \draw [myEdge] (e) to node[midway, below] {6} (f) ;
            \draw [myEdge] (a) to node[midway, left] {0} (e);
            \draw [myEdge] (d) to node[midway, left] {3} (f);
            \foreach \u \v \w \pos in {%
            a/c/{\rlap{0}\phantom{3.5..}}/right, c/d/6/above, d/b/6/left,  e/c/1/right, b/f/{\rlap{1}\phantom{3.5..}}/right%
            }{%
                \draw [MyOrange!30, line width = 5pt] (\u) to (\v) ; 
                \draw [myEdge] (\u) to node[midway,\pos] {\w} (\v) ; 
            }     
        \end{tikzpicture}}
        \caption{Original instance.}
        \label{fig:example_IM_L1_original}
    \end{subfigure}%
    \hfill
    \begin{subfigure}[b]{0.3\textwidth}
        \centering
        \resizebox{\textwidth}{!}{
        \begin{tikzpicture}[
            myEdge/.style={line width=1pt},
            dashEdge/.style={line width=1pt, dashed, gray},
            state/.style={rectangle, rounded corners=4pt,
                          minimum width=5em, minimum height=1em,
                          draw, line width=1pt},
            every node/.style={font=\small}
        ]
        
        \node (ac) at (0,4) [state] {$ac,\ w=0$};
        \node (ce) at (0,3) [state] {$ce,\ w=1$};
        \node (cd) at (0,2) [state] {$cd,\ w=6$};
        \node (db) at (0,1) [state] {$db,\ w=6$};
        \node (bf) at (0,0) [state] {$bf,\ w=1$};
        
        \node (ab) at (6,3.5) [state] {$ab,\ w=7$};
        \node (ae) at (6,2.5) [state] {$ae,\ w=0$};
        \node (ef) at (6,1.5) [state] {$ef,\ w=6$};
        \node (df) at (6,0.5) [state] {$df,\ w=3$};
        
        \node at (0,4.8) {$B^*$};
        \node at (6,4.8) {$S\setminus B^*$};
        
        \draw [myEdge] (ac) to node[pos=0.4, above, sloped] {$7$} (ab);
        \draw [myEdge] (cd) to node[pos=0.1, below, sloped] {$1$} (ab);
        \draw [myEdge] (db) to node[pos=0.3, above, sloped] {$1$} (ab);
        
        \draw [dashEdge] (ac) to node[pos=0.15, below, sloped] {$0$} (ae);
        \draw [dashEdge] (ce) to node[pos=0.2, above, sloped] {$-1$} (ae);
        
        \draw [myEdge] (ce) to node[pos=0.2, above, sloped] {$5$} (ef);
        \draw [dashEdge] (cd) to node[pos=0.25, below, sloped] {$0$} (ef);
        \draw [myEdge] (bf) to node[pos=0.15, above, sloped] {$5$} (ef);
        
        \draw [dashEdge] (db) to node[pos=0.2, above, sloped] {$-3$} (df);
        \draw [myEdge] (bf) to node[pos=0.2, below, sloped] {$2$} (df);
        
        \end{tikzpicture}}
        \caption{Exchange graph.}
        \label{fig:example_exchangeGraph}
    \end{subfigure}
    \hfill
    \begin{subfigure}[b]{0.25\textwidth}
        \centering
        \resizebox{\textwidth}{!}{
        \begin{tikzpicture}[
            myEdge/.style={line width = 1.5pt},     
            state/.style={circle,  minimum size=2em, draw, line width = 1.2pt}]   
            \node (a) at (0,2) [state] {$a$};
            \node (b) at (4,2) [state] {$b$};
            \node (c) at (1,1) [state] {$c$};
            \node (d) at (3,1) [state] {$d$};
            \node (e) at (0,0) [state] {$e$};
            \node (f) at (4,0) [state] {$f$}; 
            \draw [myEdge] (a) to node[midway, above] {6} (b) ; 
            \draw [myEdge] (e) to node[midway, below] {1} (f) ;
            \draw [myEdge] (a) to node[midway, left] {0} (e);
            \draw [myEdge] (d) to node[midway, left] {1} (f);
            \foreach \u \v \w \pos in {%
            a/c/{\rlap{6}\phantom{3.5..}}/right, c/d/6/above, d/b/6/left,  e/c/1/right, b/f/{\rlap{1}\phantom{3.5..}}/right%
            }{%
                \draw [MyOrange!30, line width = 5pt] (\u) to (\v) ; 
                \draw [myEdge] (\u) to node[midway,\pos] {\w} (\v) ; 
            }     
        \end{tikzpicture}}
        \caption{Optimal $w^*$ under the $\ell_1$-norm.}
        \label{fig:example_IM_L1_final}
    \end{subfigure}%
    \caption{Illustration of algorithms for \textsc{IM} on the graphic matroid in Figure~\ref{fig:example_IM_L1_original} under the $\ell_1$-norm (algorithm from Thm.~\ref{thm:IM_L1}). We obtain the exchange graph $G=(B^*, S\setminus B^*)$ as shown in Figure~\ref{fig:example_exchangeGraph}, where dashed edges correspond to redundant constraints and are omitted from the reduced exchange graph. The associated minimum-cost vertex-labeling problem has optimum value $14$, attained, for example, by $\pi=(6,0,0,0,0,1,0,5,2)$, indexed by $(ac,ce,cd,db,bf,ab,ae,ef,df)$, yielding the optimal weight function $w'=(6,1,6,6,1,6,0,1,1)$.
    } 
    \label{fig:example_IM_L1}
\end{figure}

\begin{remark}\label{rem:integral-IM-L1}
For the \textsc{Integral-IM} problem under the $\ell_1$-norm, where $w\in\bZ^S$ and the modified weight function $w^*$ is required to be integer-valued, the algorithm of Theorem~\ref{thm:IM_L1} also returns an optimal integer-valued solution. Integrality follows from the total dual integrality of the feasible vertex-labeling LP on bipartite graphs.
\end{remark}

\section{\textsc{Inverse Matroid Exists}}\label{sec:IM-Exists}

We now introduce several extensions of \textsc{IM} in which, instead of a fixed basis $B^*$, a subset $S_0$ is given, and various constraints are imposed on the set of optimal solutions depending on $S_0$. Our first goal is to find a weight function $w^*$ close to $w$ that ensures that $S_0$ contains a maximum $w^*$-weight basis.

\problemdef{Inverse Matroid Exists (IM-Exists)}
    {A matroid $M=(S,\mathcal{I})$, a subset $S_0 \subseteq S$ that contains a basis, a weight function $w\in\bR^S$, and an objective function $\|\cdot\|$ defined on $\bR^S$.}
    {Find a weight function $w^* \in \bR^S$ such that 
    \begin{enumerate}[label=(\alph*),topsep=1px,partopsep=2px]\itemsep0em
        \item \label{it:feas_cond_IM-Exists} there exists a basis contained in $S_0$ that has maximum $w^*$-weight, and
        \item \label{it:IM_c_IM-Exists} $\|w-w^*\|$ is minimized.
    \end{enumerate}}

As in \textsc{IM}, we say that a weight function is \emph{feasible} if condition~\ref{it:feas_cond_IM-Exists} holds. Note that \textsc{IM-Exists} is always feasible: set $w^*(s)=0$ for all $s \in S$.

\begin{remark}\label{rem:IM-Exists_feas}
    Given a weight function $w'$, we can verify in polynomial time whether $w'$ is feasible for \textsc{IM-Exists}. To do so, it is enough to find a maximum $w'$-weight basis $B_0$ of $M|S_0$ and a maximum $w'$-weight basis $A$ of $M$. Then, $w'$ is feasible if and only if $w'(B_0) = w'(A)$.
\end{remark}

\subsection{IM-Exists under \texorpdfstring{$\ell_\infty$}{l-infinity}-norm}

The $\ell_\infty$-norm is particularly well suited to \textsc{IM-Exists}. We first establish structural lemmas and then reduce \textsc{IM-Exists} under the $\ell_\infty$-norm to classical \textsc{IM}.


\subsubsection{Structural Properties}
\label{sec:prepexists}

First, we give a characterization of feasible weight functions. Recall from Section~\ref{sec:preliminaries} that, for any subset $S'\subseteq S$ and $x\in S$, we define $S'[w\geq w(x)]\coloneqq \{s\in S'\colon w(s)\geq w(x)\}$.

\begin{lemma}\label{lem:IM-Exists_closure_for_feas_weight}
    A weight function $w'$ is feasible for an \textsc{IM-Exists} instance $( M=(S, \cI),  S_0, w, \|\cdot\| )$ if and only if 
    $e \in \cl\bigl(S_0[w' \geq w'(e)]\bigr)$
    for all $e \in S\setminus S_0$.
\end{lemma}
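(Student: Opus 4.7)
The plan is to establish both directions of the biconditional using the optimality characterization from Proposition~\ref{prop:charact_optimum_basis} for the forward implication and a tie-broken greedy argument, in the spirit of the proof of Theorem~\ref{thm:im}, for the backward implication.

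For the ``only if'' direction, I would assume that $w'$ is feasible and pick a maximum $w'$-weight basis $B \subseteq S_0$. For an arbitrary $e \in S\setminus S_0$, since $B \subseteq S_0$ we automatically have $e \notin B$, so the fundamental circuit $C(B,e)$ of $e$ with respect to $B$ is well defined and satisfies $C(B,e) - e \subseteq B \subseteq S_0$. Proposition~\ref{prop:charact_optimum_basis} then gives $w'(e) \le w'(e')$ for every $e' \in C(B,e) - e$, so
\[
 C(B,e) - e \;\subseteq\; \{f \in S_0 \mid w'(e)\le w'(f)\},
\]
and $e$ lies in the closure of $C(B,e) - e$, hence in $\cl(\{f \in S_0 \mid w'(e)\le w'(f)\})$.

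For the ``if'' direction, I would run the matroid greedy algorithm on $S$ with respect to the total order on $S$ that sorts elements by decreasing $w'$-value, breaking ties so that elements of $S_0$ strictly precede elements of $S\setminus S_0$ (and breaking further ties arbitrarily). Greedy produces a maximum $w'$-weight basis $B_G$, so it suffices to show $B_G \subseteq S_0$. Pick any $e \in S\setminus S_0$ and let $A_e$ denote the set of elements strictly preceding $e$ in the ordering. By the tie-breaking rule, every $f \in S_0$ with $w'(f) \ge w'(e)$ belongs to $A_e$, so the hypothesis yields
\[
 e \;\in\; \cl(\{f \in S_0 \mid w'(e)\le w'(f)\}) \;\subseteq\; \cl(A_e).
\]
Therefore greedy cannot select $e$, giving $B_G \cap (S\setminus S_0) = \emptyset$ and hence $B_G \subseteq S_0$, which shows $w'$ is feasible.

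The only delicate point I expect is the choice of tie-breaking in the greedy ordering: without preferring $S_0$ on ties, an element $e \in S\setminus S_0$ sharing a weight with its spanning witnesses in $S_0$ could be picked first, breaking the inclusion $B_G \subseteq S_0$. This is precisely the same device used in the proof of Theorem~\ref{thm:im}, so once it is in place the rest of the argument is routine.
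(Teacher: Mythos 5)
Your proof is correct. The backward direction coincides with the paper's: both run the greedy algorithm in decreasing $w'$-order with ties broken in favour of $S_0$ and observe that each $e \in S\setminus S_0$ is already spanned by the elements preceding it, so it is never selected. The forward direction, however, takes a different and arguably cleaner route. The paper argues via the greedy algorithm again, asserting that greedy with $S_0$-preferring tie-breaking outputs the basis $B_0 \subseteq S_0$ and reading off the closure condition from the elements it skips; making that assertion fully rigorous in fact requires an argument about fundamental circuits. You instead invoke Proposition~\ref{prop:charact_optimum_basis} directly: for a maximum $w'$-weight basis $B \subseteq S_0$ and $e \in S \setminus S_0$, the set $C(B,e)-e$ lies in $B \subseteq S_0$, consists of elements of $w'$-weight at least $w'(e)$, and spans $e$, so $e \in \cl(\{f \in S_0 \mid w'(e) \le w'(f)\})$ by monotonicity of the closure. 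This avoids any appeal to what the greedy algorithm outputs and makes the forward implication a one-line consequence of the local optimality criterion; the paper's version has the aesthetic advantage of using the same greedy device in both directions. Your remark about the necessity of the $S_0$-preferring tie-break is exactly the delicate point, and you handle it correctly.
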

\begin{proof}
    Suppose that $w'$ is feasible and that $B_0 \subseteq S_0$ is a basis of maximum $w'$-weight. Run the greedy algorithm with weight function $w'$, breaking ties in favor of elements in $S_0$. No element $e \in S \setminus S_0$ is selected, since the algorithm outputs $B_0$. This means that each $e \in S \setminus S_0$ is spanned by the set $B_0[w' \geq w'(e)]$. Therefore,
    \[
        e \in \cl\bigl(S_0[w' \geq w'(e)]\bigr),
    \]
    as claimed.

    Conversely, assume that
    \[
        e \in \cl\bigl(S_0[w' \geq w'(e)]\bigr)
    \]
    holds for all $e \in S \setminus S_0$. Then the greedy algorithm for weight function $w'$, where ties are broken by giving priority to elements in $S_0$ (and arbitrarily otherwise), returns a maximum $w'$-weight basis contained in $S_0$. Indeed, for every $e \in S \setminus S_0$, the set of elements considered before $e$---that is, those with weight at least $w'(e)$---already spans $e$. This concludes the proof.
  \end{proof}

Similarly to \textsc{IM}, \textsc{IM-Exists} admits an optimal weight function $w^*$ that has a special structure. 

\begin{lemma}\label{lem:IM-Exists_special_weight}
    Let $\delta^*$ be the optimum value of an \textsc{IM-Exists} instance $( M=(S, \cI),  S_0, w, \|\cdot\|_\infty )$. Then  
    \begin{equation*}
        w^*(s) = \begin{cases}
            w(s) +\delta^* &  \text{if $s \in S_0$}, \\
            w(s) -\delta^* & \text{if $s \in S \setminus S_0$}
        \end{cases}
    \end{equation*}
     is an optimal weight function.
\end{lemma}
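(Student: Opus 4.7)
I would establish two things: (a) $\|w-w^*\|_\infty = \delta^*$, and (b) $w^*$ is a feasible weight function. Part (a) is immediate from the definition of $w^*$, since $|w(s)-w^*(s)|=\delta^*$ for every $s\in S$. Together, (a) and (b) show that $w^*$ attains the optimum, completing the proof.

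For part (b), the guiding observation is that $w^*$ is \emph{extremal} among all weight functions within $\ell_\infty$-distance $\delta^*$ of $w$: for every $s\in S_0$, any $w'$ with $\|w-w'\|_\infty\le \delta^*$ satisfies $w'(s)\le w(s)+\delta^*=w^*(s)$, while for every $s\in S\setminus S_0$ we have $w'(s)\ge w(s)-\delta^*=w^*(s)$. Intuitively, $w^*$ maximizes the weight on $S_0$ and minimizes it outside $S_0$, subject to the $\delta^*$-budget, so making $S_0$ host a maximum-weight basis should only become easier when moving from $w'$ to $w^*$.

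To formalize this, I would fix any optimal weight function $w'$ (which exists by the definition of $\delta^*$) and apply Lemma~\ref{lem:IM-Exists_closure_for_feas_weight}: for every $e\in S\setminus S_0$, the set $A_e:=\{f\in S_0 \mid w'(e)\le w'(f)\}$ spans $e$. I would then verify that $A_e\subseteq A_e^*:=\{f\in S_0 \mid w^*(e)\le w^*(f)\}$: indeed, for any $f\in A_e$, the chain
\[
w^*(e)=w(e)-\delta^*\le w'(e)\le w'(f)\le w(f)+\delta^*=w^*(f)
\]
shows $f\in A_e^*$. By monotonicity of the closure operator, $A_e^*$ also spans $e$, and a second application of Lemma~\ref{lem:IM-Exists_closure_for_feas_weight} yields the feasibility of $w^*$.

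I do not expect a serious obstacle: the whole argument reduces to combining the coordinatewise extremality of $w^*$ with the closure-based characterization of feasibility from the preceding lemma, and the algebra above is the only computation required.
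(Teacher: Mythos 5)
Your proposal is correct and follows essentially the same route as the paper: both reduce the claim to feasibility of $w^*$, take an optimal weight function $w'$, and use Lemma~\ref{lem:IM-Exists_closure_for_feas_weight} together with the containment $\{f\in S_0\mid w'(e)\le w'(f)\}\subseteq\{f\in S_0\mid w^*(e)\le w^*(f)\}$ and monotonicity of the closure operator. Your explicit chain of inequalities is just a spelled-out version of the paper's one-line justification of that containment.
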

\begin{proof}
Since $\|w-w^*\|_\infty = \delta^*$, it suffices to show that $w^*$ is feasible. Let $w^{\text{opt}}$ be an optimal weight function. By Lemma~\ref{lem:IM-Exists_closure_for_feas_weight}, we know that
\[
    e \in \cl\bigl(S_0[w^{\text{opt}} \geq w^{\text{opt}}(e)]\bigr)
\]
for all $e \in S\setminus S_0$. Moreover,
\[
    \cl\bigl(S_0[w^{\text{opt}} \geq w^{\text{opt}}(e)]\bigr) \subseteq \cl\bigl(S_0[w^* \geq w^*(e)]\bigr),
\]
since $w^*$ increases the weight of each element in $S_0$ and decreases the weight of each element in $S\setminus S_0$ at least as much as $w^{\text{opt}}$. Therefore, $w^*$ is also feasible by Lemma~\ref{lem:IM-Exists_closure_for_feas_weight}.
  \end{proof}

\begin{remark}
    This special structure for the optimal weight function does not always extend to other optimization problems. For example, consider perfect matchings on bipartite graphs. Take the graph with edge set $\{ac, ad, bc, bd\}$ and weights $w(ac)=w(ad)=w(bd)=0$ and $w(bc)=1$.
    Let $S_0=\{ac, ad, bd\}$. The optimal value is  $\delta^*=1/4$.
    However, increasing the weights of edges in $S_0$ and decreasing the weights of
    edges outside $S_0$ does not yield an optimal solution supported on $S_0$.
    An optimal weight function is $w^*(ac)=w^*(bd)=1/4$, $w^*(ad)=-1/4$ and $w^*(bc)=3/4$.
\end{remark}

\subsubsection{Reduction to \textsc{IM}}
\label{sec:algred}

The following reduction gives a clean conceptual explanation of the optimum value. It shows that, under the $\ell_\infty$-norm, the freedom to choose any basis contained in $S_0$ can be resolved before solving the inverse problem: it suffices to choose a maximum $w$-weight basis of the restriction $M|S_0$ and make this basis optimal in the original matroid. This procedure is implemented by Algorithm~\ref{algo:problem4_opt_devVector}. An illustration is shown in Figure~\ref{fig:example_IM-EXISTS}.

\begin{algorithm}[ht!]
\caption{Algorithm for \textsc{IM-Exists} via reduction to \textsc{IM} under the $\ell_\infty$-norm}\label{algo:problem4_opt_devVector}
\DontPrintSemicolon
\KwIn{An instance $(M=(S,\mathcal{I}),S_0,w,\|\cdot\|_\infty)$ of \textsc{IM-Exists} where $S_0$ contains at least one basis.}
\KwOut{An optimal weight function $w^*$.}
    Let $B_0$ be a maximum $w$-weight basis of $M|S_0$. \;
    Let $\delta^*$ be the optimum value of the \textsc{IM} instance $(M=(S,\cI),B_0,w,\|\cdot\|_\infty)$.\;
    Set $w^* \coloneqq w +\delta^* (\chi_{S_0} - \chi_{S\setminus S_0}) $.\;
    \Return{$w^*$}\;
\end{algorithm}

\begin{figure}[ht!]
\centering
    \begin{subfigure}[b]{0.33\textwidth}
        \centering
        \begin{tikzpicture}[scale=0.9,
            myEdge/.style={line width = 1.5pt},     
            myEdgeZigzag/.style={line width = 1.5pt, decoration = {zigzag, segment length = 4pt, amplitude = 1pt},decorate},
            state/.style={circle,  minimum size=1em, draw, line width = 1.2pt}]   
            \node (a) at (0,2) [state] {$a$};
            \node (b) at (4,2) [state] {$b$};
            \node (c) at (1,1) [state] {$c$};
            \node (d) at (3,1) [state] {$d$};
            \node (e) at (0,0) [state] {$e$};
            \node (f) at (4,0) [state] {$f$}; 
            \draw [myEdge] (a) to node[midway, above] {7} (b) ; 
            \draw [myEdge] (e) to node[midway, below] {6} (f) ;
            \foreach \u \v \w \pos in { a/c/0/right, c/d/6/above, d/b/6/left,  e/c/1/right, d/f/3/left}{
                \draw [MyOrange!30, line width = 5pt] (\u) to (\v) ; 
                \draw [myEdgeZigzag] (\u) to node[midway,\pos] {\w} (\v) ; 
            } 
            \foreach \u \v \w \pos in {a/e/0/left,b/f/1/right}{
                \draw [MyOrange!30, line width = 5pt] (\u) to (\v) ; 
                \draw [myEdge] (\u) to node[midway,\pos] {\w} (\v) ; 
            }
        \end{tikzpicture} 
        \caption{Original instance.}
        \label{fig:example_IM-EXISTS-original}
    \end{subfigure}%
    \begin{subfigure}[b]{0.33\textwidth}
        \centering
        \begin{tikzpicture}[scale=0.9,
            myEdge/.style={line width = 1.5pt},     
            state/.style={circle,  minimum size=1em, draw, line width = 1.2pt}]   
            \node (a) at (0,2) [state] {$a$};
            \node (b) at (4,2) [state] {$b$};
            \node (c) at (1,1) [state] {$c$};
            \node (d) at (3,1) [state] {$d$};
            \node (e) at (0,0) [state] {$e$};
            \node (f) at (4,0) [state] {$f$}; 
            \draw [myEdge] (a) to node[midway, above] {3.5} (b) ; 
            \draw [myEdge] (e) to node[midway, below] {2.5} (f) ;
            \foreach \u \v \w \pos in { a/c/3.5/right, c/d/9.5/above, d/b/9.5/left,  e/c/4.5/right, d/f/6.5/left}{
                \draw [MyOrange!30, line width = 5pt] (\u) to (\v) ; 
                \draw [myEdge] (\u) to node[midway,\pos] {\w} (\v) ; 
            } 
            \foreach \u \v \w \pos in {a/e/3.5/left,b/f/4.5/right}{
                \draw [MyOrange!30, line width = 5pt] (\u) to (\v) ; 
                \draw [myEdge] (\u) to node[midway,\pos] {\w} (\v) ; 
            }
        \end{tikzpicture}
        \caption{Optimal weight.}
        \label{fig:example_IM-EXISTS-FINAL}
    \end{subfigure}%
    \caption{Illustration of Algorithm~\ref{algo:problem4_opt_devVector} on the graphic matroid in Figure~\ref{fig:example_IM-EXISTS-original}. The set $S_0 = \{ac, ae, ce, cd, db, df, bf\}$ is highlighted in orange. 
    A maximum-weight basis $B_0\subseteq S_0$ is shown in zigzagged edges. Running the IM algorithm with target basis $B_0$ yields $\delta^*=3.5$. The optimal weight function $w^*$ is shown in Figure~\ref{fig:example_IM-EXISTS-FINAL}.
    } 
    \label{fig:example_IM-EXISTS}
\end{figure}

\begin{thm}\label{thm:IM-Exists-combAlgo}
    Algorithm~\ref{algo:problem4_opt_devVector} determines an optimal weight function $w^*$ for the \textsc{IM-Exists} instance $(M=(S, \cI),  S_0, w, \|\cdot\|_\infty )$ 
    using $O(n\log n)$ elementary operations and $O(n + r\log r)$ independence-oracle calls.
\end{thm}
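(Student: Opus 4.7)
The plan is to verify that the weight function $w^*$ returned by Algorithm~\ref{algo:problem4_opt_devVector} is feasible for the \textsc{IM-Exists} instance and that $\|w-w^*\|_\infty$ equals the instance optimum $\delta^*$. The polynomial running time is immediate, since $B_0$ is obtained by the greedy algorithm applied to $M|S_0$ and the \textsc{IM} subroutine runs in polynomial time by Theorem~\ref{thm:im}.

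For feasibility, I would first observe that the hypothesis that $S_0$ contains a basis of $M$ forces $r(S_0)=r(M)$, so a maximum $w$-weight independent subset $B_0$ of $S_0$ is in fact a basis of $M$. The \textsc{IM} subroutine then returns $w^*$ under which $B_0$ is a maximum $w^*$-weight basis of $M$; since $B_0\subseteq S_0$, condition~\ref{it:feas_cond_IM-Exists} holds, and in particular $\|w-w^*\|_\infty\geq\delta^*$.

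The main step is the reverse inequality, which I would obtain by bounding the \textsc{IM} optimum on $(M,B_0,w,\|\cdot\|_\infty)$ from above by $\delta^*$. To do so, I would exhibit a weight function feasible for that \textsc{IM} instance with deviation exactly $\delta^*$. Lemma~\ref{lem:IM-Exists_special_weight} supplies a canonical candidate,
\[
w'(s)\;=\;\begin{cases} w(s)+\delta^* & \text{if } s\in S_0,\\ w(s)-\delta^* & \text{if } s\in S\setminus S_0, \end{cases}
\]
which is optimal, hence feasible, for \textsc{IM-Exists} and satisfies $\|w-w'\|_\infty=\delta^*$. To check that $B_0$ is a maximum $w'$-weight basis of $M$, I would split into two cases. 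For a basis $B\subseteq S_0$, the identity $w'(B)=w(B)+\delta^*\,r(M)$ together with the maximality of $B_0$ in $M|S_0$ yields $w'(B)\leq w'(B_0)$. For a basis $B\not\subseteq S_0$, feasibility of $w'$ for \textsc{IM-Exists} supplies some basis $B^*\subseteq S_0$ with $w'(B^*)\geq w'(B)$, and the first case then gives $w'(B_0)\geq w'(B^*)\geq w'(B)$.

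Consequently, the \textsc{IM} optimum on $(M,B_0,w,\|\cdot\|_\infty)$ is at most $\delta^*$, whence $\|w-w^*\|_\infty\leq\delta^*$, and equality holds. The only genuinely substantive point is the second case above: it is not a priori obvious that the maximum-$w$-weight basis $B_0$ inside $S_0$ remains maximum once we compare it against bases that leave $S_0$, and it is precisely the feasibility of $w'$ for \textsc{IM-Exists} (guaranteed by Lemma~\ref{lem:IM-Exists_special_weight}) that bridges this gap.
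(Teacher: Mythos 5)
Your proof is correct and follows essentially the same route as the paper: both use the canonical weight function $w'$ from Lemma~\ref{lem:IM-Exists_special_weight} to show that the maximum $w$-weight basis $B_0$ of $M|S_0$ is also a maximum $w'$-weight basis of $M$, and then sandwich the \textsc{IM} optimum on $(M,B_0,w,\|\cdot\|_\infty)$ between $\delta^*$ from both sides. Your write-up is in fact slightly more explicit than the paper's on the second case (comparing $B_0$ against bases leaving $S_0$), which the paper handles implicitly.
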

\begin{proof}
    Consider an optimal weight function $w^*$ provided by Lemma~\ref{lem:IM-Exists_special_weight} with $\|w-w^*\|_\infty=\delta^*$. Let $B$ and $B_0$ be maximum $w^*$- and $w$-weight bases contained in $S_0$, respectively. Then 
\begin{equation*}
    w^*(B) = w(B) + \delta^* \cdot |B| \leq w(B_0) + \delta^* \cdot |B_0| =  w^*(B_0),
\end{equation*}
hence $B_0$ is also a maximum $w^*$-weight basis contained in $S_0$. Let $w'$ denote an optimal solution for the \textsc{IM} instance $( M=(S, \cI),  B_0, w, \|\cdot\|_\infty )$ and let $\delta'=\|w-w'\|_\infty$; clearly, $\delta' \leq \delta^*$. On the other hand, $\delta^*$ can be seen as the minimum value among all $\delta$'s that make at least one basis $B \subseteq S_0$ maximum-weight, and thus $\delta^* \leq  \delta'$. Thus the optimum value $\delta'$ of the fixed-basis \textsc{IM} instance equals the optimum value $\delta^*$ of the original \textsc{IM-Exists} instance. By Lemma~\ref{lem:IM-Exists_special_weight}, the uniform perturbation $w+\delta^*\bigl(\chi_{S_0}-\chi_{S\setminus S_0}\bigr)$, which is exactly the weight function returned by the algorithm, is feasible and has distance $\delta^*$. Hence it is optimal.

\textit{Running time.}
 Line 1 can be done using  $O(|S_0|\log |S_0|)$ elementary operations and $O(|S_0|)$ independence-oracle calls, and Line 2 requires $O(|S|\log |S|)$ elementary operations and $O(|S| + r\log r)$ independence-oracle calls by Theorem~\ref{thm:im_infty}.
  \end{proof}

Note that \textsc{IM-Exists} admits an algorithm with the same time complexity as \textsc{IM}. This algorithm extends naturally to handle integrality constraints, as the following remark shows.
\begin{remark}\label{rem:integral-IM-Exists-infty}
    For the \textsc{Integral-IM-Exists} problem under the $\ell_\infty$-norm, where $w\in\bZ^S$ and $w^*$ is required to be integer-valued, the optimum value is $\lceil\delta^*\rceil$, since every integral feasible solution has integral distance and the proof of Lemma~\ref{lem:IM-Exists_special_weight} shows that $w+\lceil\delta^*\rceil(\chi_{S_0}-\chi_{S\setminus S_0})$ is feasible.
\end{remark}

\subsection{IM-Exists under \texorpdfstring{$\ell_1$}{l-1}-norm}
\label{sec:hardness}

We now turn to the $\ell_1$-norm. This is the point where the behavior of \textsc{IM-Exists} changes sharply. In Section~\ref{sec:IM_infty}, we showed that the $\ell_\infty$ version admits a direct combinatorial algorithm, and Section~\ref{sec:IM-L1} showed that the fixed-basis problem under the $\ell_1$-norm reduces to a bipartite matching problem. In contrast, once the target basis is replaced by the requirement that some basis contained in $S_0$ become optimal, the additive nature of the $\ell_1$-norm is strong enough to encode the Steiner tree problem. We prove that the decision version of \textsc{IM-Exists} under the $\ell_1$-norm is strongly $\NP$-complete, even for graphic matroids. Thus the hardness is already present in one of the most basic matroid classes.

Let us recall that, in a graphic matroid, the ground set $S$ is the edge set of an undirected graph $G=(V,S)$, and the independent sets are the forests. Thus, if $G$ is connected, the bases are precisely the spanning trees of $G$. The decision problem under rational data is the following. 

\problemdef{Decision Inverse Graphic Matroid Exists (D-IM-Exists) }
    {A graphic matroid $M=(S,\mathcal{I})$, a subset $S_0 \subseteq S$ that contains a basis, a weight function $w\in\bQ^S$, a function $\|\cdot\|$ defined on $\bQ^S$, and $k\in \bQ_+$.}
    {Decide whether there exists a weight function $w^* \in \bQ^S$ such that 
    \begin{enumerate}[label=(\alph*),topsep=1px,partopsep=2px]\itemsep0em
        \item \label{it:feas_cond_D-IM-Exists} there exists a basis contained in $S_0$ that has maximum $w^*$-weight, and
        \item \label{it:IM_c_D-IM-Exists} $\|w-w^*\| \leq k$.
    \end{enumerate}}
We say that a weight function $w' \in \bQ^S$ is \emph{valid} for an instance of \textsc{D-IM-Exists} if it satisfies conditions \ref{it:feas_cond_D-IM-Exists} and \ref{it:IM_c_D-IM-Exists}.

\begin{thm}\label{thm:IM-Exists_L1_hardness}
    \textsc{D-IM-Exists} under the $\ell_1$-norm is strongly $\NP$-complete. 
\end{thm}
\begin{proof}
The problem belongs to $\NP$. A certificate is a basis $B\subseteq S_0$.
Given $B$, we first verify that it is a basis. Then Theorem~\ref{thm:IM_L1}
computes, in polynomial time, an optimal rational solution $w^*$ to the
corresponding \textsc{IM} instance under the $\ell_1$-norm; moreover, the
construction in its proof yields a solution of polynomial encoding length.
If the original instance is a yes-instance, then some feasible weight
function makes $B$ maximum-weight, and hence $\|w-w^*\|_1\le k$.
Conversely, if $\|w-w^*\|_1\le k$, then $B$ is maximum-weight under $w^*$
and Remark~\ref{rem:IM-Exists_feas} implies that $w^*$ is feasible for
\textsc{IM-Exists}.

    We reduce from the following problem, which is $\NP$-complete even in the unweighted case (i.e., when the objective is to minimize the number of edges) and when there are at least two terminals~\cite{Garey1979computers}.
    
    \problemdef{Steiner Tree}
    {An undirected connected graph $H$, a subset of terminals $X\subseteq V(H)$, and $k\in \mathbb{Z}_+$.}
    {Decide whether there exists a Steiner tree $F\subseteq E(H)$ with $|F|\leq k$, that is, a tree that connects all vertices in $X$.}

    Let $(H,X,k)$ be an instance of \textsc{Steiner Tree}. We may assume that $1\le k<|V(H)|$. Indeed, if $k=0$, the instance is a No-instance because $X$ contains at least two terminals. If $k\ge |V(H)|-1$, any spanning tree of $H$ is a Steiner tree with at most $k$ edges.
    
    We construct an instance of \textsc{D-IM-Exists} as follows: We define a graph $G$ with vertex set $V(G)=V(H)$ and edge set $E(G) = E(H) \cup P$, where for every two distinct terminals $u,v\in X$ we add $k|V(H)|$ new parallel copies of the edge $uv$ to $P$.
    Let $S_0=E(H)$. Let $S=E(G)$ and therefore $S_0=E(G)\setminus P$. The weight function $w$ is defined as $w(s)=0$ if $s\in S_0$ and $w(s)=1$ if $s\in P$. The budget of the \textsc{D-IM-Exists} instance is $k$.
    This concludes the description of the \textsc{D-IM-Exists} instance, which can be obtained in polynomial time. Below we show that this is a reduction.

Suppose that $(M,S_0,w,\|\cdot\|_1,k)$ is a Yes-instance of
\textsc{D-IM-Exists}. Let $\widehat w$ be an optimal solution to the
corresponding \textsc{IM-Exists} instance. Then
$\|w-\widehat w\|_1\le k$, and there exists a maximum-weight spanning
tree $T$ with respect to $\widehat w$ such that $T\subseteq S_0$.

Consider the \textsc{IM} instance with prescribed basis $T$. The weight
function $\widehat w$ is feasible for this instance. Conversely, every feasible solution to this \textsc{IM} instance makes
$T\subseteq S_0$ maximum-weight, and is therefore feasible for the
original \textsc{IM-Exists} instance. Therefore, the optimal deviation of the \textsc{IM} instance with
prescribed basis $T$ equals the optimal deviation of the original
\textsc{IM-Exists} instance.

By Theorem~\ref{thm:IM_L1} and
Remark~\ref{rem:integral-IM-L1}, the \textsc{IM} instance has an optimal
integer-valued solution $w'$. This solution is also optimal for
\textsc{IM-Exists}, and hence
\[
    \|w-w'\|_1=\|w-\widehat w\|_1\le k.
\]
Since $w$ and $w'$ are integer-valued, the weights of at most $k$
edges are changed.

Fix $u,v \in X$. Since there are $k|V(H)|>k$ parallel copies of the edge
$uv$ in $G$, at least one copy must have its weight unchanged in $w'$. Hence, there
exists an edge $e_{uv}\in P$ such that $w'(e_{uv}) = w(e_{uv}) = 1$.
Since $T$ is maximum-weight, every edge on the unique $u$-$v$ path in $T$
must have $w'$-weight at least $w'(e_{uv})=1$; otherwise, one could replace a
lighter edge on that path by the copy $e_{uv}$ and strictly increase the total
$w'$-weight.

Define $F_G \coloneqq  \{e \in T : w'(e) \ge 1 \}$. By the previous paragraph, one connected component of $F_G$ contains
all terminals. Since $F_G\subseteq T\subseteq S_0=E(H)$, this component
is a Steiner tree in $H$. Moreover, since $w$ is zero on $S_0$, each
edge of $F_G$ contributes at least $1$ to $\|w'-w\|_1$. Therefore, this
Steiner tree has at most
\[
    |F_G|\le \|w'-w\|_1\le k
\]
edges. Hence $H$ contains a Steiner tree with at most
$k$ edges, so $(H,X,k)$ is a Yes-instance of \textsc{Steiner Tree}.

Conversely, suppose that $(H,X,k)$ is a Yes-instance of \textsc{Steiner Tree},
and let $F$ be a Steiner tree for $X$ in $H$ with $|F|\le k$.
Define $w'$ by setting $w'(s)=1$ for $s\in F$ and $w'(s)=w(s)$ otherwise.
Then $\|w-w'\|_1=|F|\le k$.

Extend $F$ to a spanning tree $T$ of $H$ (so $T\subseteq S_0$) and note that
$w'(T)=|F|$. Let $E_1\coloneqq P\cup F$. For every $e=uv\in P$, $u$ and $v$ are
connected in $F$, hence $e\in\cl(F)$ and thus $E_1\subseteq \cl(F)$.
Therefore $r(E_1)\le r(F)=|F|$, and for any spanning tree $B$ of $G$ we have
$w'(B)=|B\cap E_1|\le r(E_1)\le |F|=w'(T)$. Hence $T$ has maximum $w'$-weight,
so $w'$ is valid for \textsc{D-IM-Exists}.

Since $k<|V(H)|$ and all weights in the constructed instance belong to
$\{0,1\}$, the reduction establishes strong $\NP$-hardness.
\end{proof}

Theorem~\ref{thm:IM-Exists_L1_hardness} also transfers to the other set
constraints discussed in Section~\ref{sec:implications}. In the following
variants, the input and budget condition are the same as in
\textsc{D-IM-Exists}; only condition~\ref{it:feas_cond_D-IM-Exists} is replaced.

\begin{cor}\label{cor:IM-Exists-four-hardness}
Together with Theorem~\ref{thm:IM-Exists_L1_hardness}, the following three
decision problems are strongly $\NP$-hard under the $\ell_1$-norm.
\begin{enumerate}[label=(\roman*)]\itemsep0em
    \item The partial inverse matroid problem with forced elements: decide
    whether there exists a weight function $w^*$ with $\|w-w^*\|_1\le k$
    such that some maximum $w^*$-weight basis $B$ satisfies $S_0\subseteq B$.
    This problem is strongly $\NP$-hard even for cographic matroids.

    \item The partial inverse matroid problem with forbidden elements: decide
    whether there exists a weight function $w^*$ with $\|w-w^*\|_1\le k$
    such that some maximum $w^*$-weight basis $B$ satisfies
    $B\cap S_0=\emptyset$. This problem is strongly $\NP$-hard even for
    graphic matroids.

    \item The complement-containment variant: decide whether there exists a
    weight function $w^*$ with $\|w-w^*\|_1\le k$ such that some maximum
    $w^*$-weight basis $B$ satisfies $S\setminus B\subseteq S_0$. This
    problem is strongly $\NP$-hard even for cographic matroids.
\end{enumerate}
\end{cor}

\begin{proof}
Theorem~\ref{thm:IM-Exists_L1_hardness} gives strong $\NP$-hardness for the
condition $B\subseteq S_0$ on graphic matroids.

The forbidden-element variant follows by replacing $S_0$ with
$S\setminus S_0$. Indeed, $B\subseteq S_0$ if and only if
$B\cap (S\setminus S_0)=\emptyset$. This reduction preserves the matroid,
the weight function, and the budget.

For the forced-element and complement-containment variants, we pass to the
dual matroid and negate the weights. Let $N=M^*$ and use the initial weight
function $-w$. A basis $B$ of $M$ is maximum-weight under $w'$ if and only if
the dual basis $C=S\setminus B$ is maximum-weight in $N$ under $-w'$. Moreover,
$\|w-w'\|_1=\|(-w)-(-w')\|_1$.

The condition $B\subseteq S_0$ is equivalent to
$S\setminus S_0\subseteq C$. This gives the forced-element variant with
prescribed set $S\setminus S_0$. Since $B=S\setminus C$, the same condition
is also $S\setminus C\subseteq S_0$, which gives the
complement-containment variant with prescribed set $S_0$.

The dual of a graphic matroid is cographic, so both variants are strongly
$\NP$-hard even for cographic matroids.

\end{proof}

\section{\textsc{IM-All} and \textsc{IM-Only}}
\label{sec:IM-All}
In this section, we consider two variants in which the whole family of bases contained in $S_0$ has to be controlled. In \textsc{IM-All}, every basis contained in $S_0$ must become maximum-weight; in \textsc{IM-Only}, these bases must be exactly the maximum-weight bases. The key new issue is that one no longer chooses a single target basis. Instead, all bases of the restriction $M|S_0$ have to be made indistinguishable by weight. For matroids, this requirement has a clean structural form: it is equivalent to making the weight function constant on each connected component of $M|S_0$. This homogenization step is the main reason the problems remain tractable.

\problemdef{Inverse Matroid All (\textsc{IM-All})}
    {A matroid $M=(S,\mathcal{I})$, a subset $S_0 \subseteq S$, a weight function $w\in\bR^S$, and an objective function $\|\cdot\|$ defined on $\bR^S$.}
    {Find a weight function $w^* \in \bR^S$ such that 
    \begin{enumerate}[label=(\alph*),topsep=1px,partopsep=2px]\itemsep0em
        \item{\label{it:feas_cond_IM-All}} all bases contained in $S_0$ have maximum $w^*$-weight, and
        \item \label{it:IM_c_IM-All} $\|w- w^*\|$ is minimized.
    \end{enumerate}}

We say that a weight function is \emph{feasible} if condition~\ref{it:feas_cond_IM-All} holds. If $S_0$ does not contain a basis of $M$, then the condition is vacuous and the optimal solution is $w^*=w$. Hence, in the rest of the section, we assume that $S_0$ contains at least one basis of $M$.

\begin{remark}\label{rem:feasibility_check_IM-All}
    Given a weight function $w'$, we can verify in polynomial time whether $w'$ is feasible for \textsc{IM-All}. To do so, it is enough to find a minimum $w'$-weight basis $B^1_0$ of $M|S_0$,  a maximum $w'$-weight basis $B^2_0$ of $M|S_0$, and a maximum $w'$-weight basis $A$ of $M$. Then, $w'$ is feasible if and only if $w'(B^2_0)=w'(B^1_0)\geq w'(A)$.
\end{remark}

In \textsc{IM-Only}, the bases contained in $S_0$ must be exactly the maximum-weight bases. If real weights are allowed, an optimal solution need not exist. Indeed, a weight function may make all bases in $S_0$ maximum-weight while tying with some basis outside $S_0$. An arbitrarily small perturbation can break this tie. Hence the infimum may fail to be attained. We avoid this by requiring $w^*$ to be integer-valued: then every strict separation has gap at least one.

Ahmadian, Bhaskar, Sanit{\`a}, and Swamy~\cite{ahmadian2018algorithms} considered an integral inverse optimization problem in which a prescribed collection of feasible solutions is required to be exactly the set of optimal solutions. For matroid bases, this overlaps with the integral $\ell_\infty$ version of \textsc{IM-Only}. Our contribution here is a direct matroidal treatment based on the connected components of $M|S_0$. The same viewpoint also yields the algorithms for \textsc{IM-All}, the $\ell_1$ versions, and the negated variants studied later.

Let us state the problem.

\problemdef{Inverse Matroid Only (\textsc{IM-Only})}
    {A matroid $M=(S,\mathcal{I})$, a subset $S_0 \subseteq S$ that contains at least one basis, a weight function $w\in\bZ^S$, and an objective function $\|\cdot\|$ defined on $\bR^S$.}
    {Find a weight function $w^* \in \bZ^S$ such that 
    \begin{enumerate}[label=(\alph*),topsep=1px,partopsep=2px]\itemsep0em
        \item{\label{it:feas_cond_IM-Only}} the bases contained in $S_0$ are exactly the ones of maximum $w^*$-weight, and
        \item \label{it:IM_c_IM-Only} $\|w- w^*\|$ is minimized.
    \end{enumerate}}
    
We say that a weight function is \emph{feasible} if condition~\ref{it:feas_cond_IM-Only} holds. Note that \textsc{IM-Only} is always feasible: set $w^*(s)=1$ for all $s\in S_0$ and $w^*(s)=0$ for all $s\in S\setminus S_0$.

\begin{remark}\label{rem:feas_IM-ONLY}
    Given a weight function $w'$, we can verify in polynomial time whether $w'$ is feasible for \textsc{IM-Only}. By assumption, $S_0$ contains at least one basis of $M$. Thus it is enough to find a minimum $w'$-weight basis $B_0^1$ of $M|S_0$ and a maximum $w'$-weight basis $B_0^2$ of $M|S_0$, and check if $w'(B_0^2)=w'(B_0^1)$. If not, then $w'$ is not feasible. Otherwise, run the greedy algorithm while breaking ties in favor of elements in $S\setminus S_0$. If the resulting basis is not contained in $S_0$, then $w'$ is not feasible; otherwise it is feasible.
\end{remark}

\subsection{A general convex formulation}

Before exploiting the matroid structure, we record a general convex formulation for the analogue of \textsc{IM-All} over an arbitrary family of feasible sets. This shows that the requirement that all feasible sets contained in $S_0$ be optimal is not inherently matroidal: if the underlying optimization problem can be solved, then the corresponding \textsc{IO-All} problem can be solved in weakly polynomial time for any convex deviation objective with a suitable separation oracle. The formulation extends the standard inverse-optimization approach of Ahuja and Orlin~\cite{Ahuja2001Inverse} by adding constraints that force all feasible solutions contained in $S_0$ to have the same weight. What is special about matroids is that this condition admits the much more explicit connected-component formulation developed in the next subsection.

\problemdef{Inverse Optimization All (\textsc{IO-All})}
    {A ground set $S$, a set of feasible solutions $\mathcal{F}$, a subset $S_0 \subseteq S$ that contains at least one feasible solution, a weight function $w\in\bR^S$, and an objective function $\phi$ defined on $\bR^S$.}
    {Find a weight function $w^* \in \bR^S$ such that
    \begin{enumerate}[label=(\alph*),topsep=1px,partopsep=2px]\itemsep0em
        \item \label{it:feas_cond_all} all feasible solutions contained in $S_0$ have maximum $w^*$-weight, and
        \item \label{it:opt_cond_all} $\phi(w-w^*)$ is minimized.
    \end{enumerate}}

\begin{lemma}\label{lem:IO-All-convexFunc}
Suppose the underlying optimization problem $(S,\mathcal{F})$ can be solved for maximization with respect to every weight function $c$ using an oracle $\mathcal{O}$. Then the feasible region of the \textsc{IO-All} instance $(S,\mathcal{F},S_0,w,\phi)$ admits a polynomial-time separation oracle. Consequently, under the standard oracle and boundedness assumptions for convex minimization, an optimal weight function for \textsc{IO-All} can be computed in weakly polynomial time.
\end{lemma}
\begin{proof}

Let $\mathcal F_0$ be the set of feasible solutions contained in $S_0$.
Since $\mathcal F_0\neq\emptyset$, one call to $\mathcal O$ with weights
$0$ on $S_0$ and $-1$ on $S\setminus S_0$ returns a set
$F_0\in\mathcal F_0$.

    Consider the following convex program over variables $\Delta, w^* \in \mathbb{R}^S$:
 \begin{alignat}{3}
        &\min\quad & \phi(\Delta) & \notag \\
        &\text{\rm s.t.}\quad & \Delta(e) & = w(e) - w^*(e) &\qquad& \forall e \in S, \label{eq:IO-ALL-delta} \\
        & & w^*(F_0) & \leq w^*(F) && \forall F \in \mathcal{F}_0, \label{eq:LP-IO-ALL-HOMOGEN}\\
        & & w^*(F_0) & \geq w^*(H) && \forall H \in \mathcal{F}. \label{eq:LP-IO-ALL-MAX}
    \end{alignat}
    Observe that the two families of inequalities imply that $w^*(F_0)=w^*(F)$ for all $F\in\mathcal{F}_0$. Hence all feasible solutions contained in $S_0$ have the same $w^*$-weight, and by~\eqref{eq:LP-IO-ALL-MAX} this common weight is maximum.
    
    We next describe a separation oracle. Let $(\Delta', w')$ be a candidate assignment. Checking the equality constraint~\eqref{eq:IO-ALL-delta} takes linear time. To check~\eqref{eq:LP-IO-ALL-MAX}, compute a maximum-weight feasible set $F^{\max}$ with respect to $w'$ using the oracle $\mathcal{O}$. If $w'(F^{\max})> w'(F_0)$, then~\eqref{eq:LP-IO-ALL-MAX} is violated by $H=F^{\max}$; otherwise it is satisfied. 
    To check~\eqref{eq:LP-IO-ALL-HOMOGEN}, compute a minimum-weight feasible set 
    $F_0^{\min}\in\mathcal{F}_0$ with respect to $w'$. This can be done with one call to $\mathcal{O}$ by maximizing $-w'$ after assigning weight $-L$ to every element in $S\setminus S_0$, where $L$ is sufficiently large, for example, $L=2|S|\|w'\|_\infty+1$.  If $w'(F_0)> w'(F_0^{\min})$, then~\eqref{eq:LP-IO-ALL-HOMOGEN} is violated by $F=F_0^{\min}$; otherwise it is satisfied.
    
    Thus the separation problem can be solved using two calls to $\mathcal{O}$. Together with the standard oracle assumptions for the convex objective $\phi$, this separation oracle allows the convex program to be solved in weakly polynomial time by the ellipsoid method.
\end{proof}

\subsection{Matroid reformulation using connected components}

For matroids, the general formulation above can be replaced by a much smaller and more transparent one. The key structural fact is that all bases of a matroid have the same weight if and only if the weight function is constant on each connected component. Applied to the restriction $M|S_0$, this means that the first part of \textsc{IM-All} is exactly a homogenization problem on the connected components of $M|S_0$. The same formulation will also be used for \textsc{IM-Only}, where one additionally has to separate bases contained in $S_0$ from bases not contained in $S_0$. We include the proof of the structural fact for completeness.

\begin{lemma}\label{lem:connected_components_decomp}
    Let $M=(S,\cI)$ be a matroid and let $w\in\mathbb{R}^S$ be a weight function. Then, every basis has the same $w$-weight if and only if $w$ is constant on the connected components of $M$.
\end{lemma}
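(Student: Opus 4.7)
The plan is to prove both directions of the equivalence separately. The easy direction (constant on components implies equal basis weights) comes from the decomposition $M = M_1 \oplus \dots \oplus M_k$ into connected components $M_i = (S_i, \cI_i)$: any basis $B$ of $M$ splits as $B = \bigcup_i (B \cap S_i)$ where $B \cap S_i$ is a basis of $M_i$, so $|B \cap S_i| = r(M_i)$. If $w \equiv c_i$ on $S_i$, then $w(B) = \sum_i c_i \cdot r(M_i)$, which does not depend on the choice of $B$.

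For the nontrivial direction, I would assume all bases have equal $w$-weight and prove that $w(x) = w(y)$ whenever $x$ and $y$ lie in the same connected component of $M$. Since lying in a common circuit is the equivalence relation whose classes are the connected components, it suffices by transitivity to handle the case when $x \neq y$ belong to a single circuit $C$. Then $C - y$ is independent, so it can be extended to a basis $B$ of $M$ containing $C - y$; in particular $x \in B$. Since $C \subseteq B + y$ is dependent, we must have $y \notin B$, and the fundamental circuit $C(B, y)$, being the unique circuit of $B + y$, coincides with $C$ and therefore contains $x$. Hence $B' = B - x + y$ is also a basis of $M$. The hypothesis $w(B) = w(B')$ then gives $w(x) = w(y)$, as required.

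The main potential pitfall is appealing cleanly to the characterization of connected components via circuits and to the transitivity of that relation; once this is set up, the circuit-level exchange argument is routine. No computation beyond a single basis exchange is needed, and no further matroid machinery beyond the definition of the fundamental circuit and the fact stated in Proposition~\ref{prop:exchange} (specialized to a single swap).
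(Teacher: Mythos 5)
Your proof is correct and follows essentially the same route as the paper: the direct-sum decomposition for the easy direction, and for the hard direction a circuit $C$ containing the two elements, a basis extending $C-y$, the identification of the fundamental circuit with $C$, and a single exchange. The only cosmetic difference is that the paper argues the hard direction by contraposition, whereas you argue it directly; the underlying exchange argument is identical.
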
 
\begin{proof}
    Let $S=S_1\cup\dots\cup S_k$ be the partition of $S$ into the connected components of $M$, and let $M_i=M|S_i$ for $i\in[k]$. Since $M=M_1\oplus\dots\oplus M_k$,  a set $B\subseteq S$ is a basis of $M$ if and only if $B_i=B\cap S_i$ is a basis of $M_i$ for all $i\in[k]$.

    Suppose first that $w$ is constant on each $S_i$. Then all bases of $M_i$ have the same $w$-weight, namely $w(e_i)r(M_i)$ for any $e_i\in S_i$. Since every basis of $M$ is the disjoint union of bases of $M_1,\dots,M_k$, all bases of $M$ have the same $w$-weight.

    Conversely, suppose that $w$ is not constant on some component. Then there are $i\in[k]$ and $e,f\in S_i$ with $w(e)\neq w(f)$. Since $e$ and $f$ are in the same component of $M$, there exists a circuit $C$ containing both of them. Extend the independent set $C-f$ to a basis $B$ of $M$. Then $f\not\in B$ and $C=C(B,f)$ is the fundamental circuit of $f$ with respect to $B$. As $e\in C(B,f)$, the set $B'=B-e+f$ is a basis with $w(B')-w(B)=w(f)-w(e)\neq 0$, showing that $M$ has bases with different $w$-weights.
  \end{proof}

\begin{remark}
The connected-component formulation is specific to matroids. For example, in matching problems, two matchings may have the same total weight without imposing any comparable equality condition on individual edge weights.
\end{remark}
  
A weight function is feasible for \textsc{IM-All} precisely when the bases contained in $S_0$ have a common weight and this common weight is maximum among all bases of $M$. We analyze the two requirements separately. First, apply Lemma~\ref{lem:connected_components_decomp} to the restriction $M|S_0$: The bases contained in $S_0$ have the same weight if and only if the weight function is constant on each connected component of $M|S_0$. We denote these components by $K_1,\dots,K_\ell$.
  
    Second, the requirement that this common weight be maximum means that no element can strictly improve a basis contained in $S_0$. To make this precise, fix an arbitrary basis $B_0\subseteq S_0$. By Proposition~\ref{prop:charact_optimum_basis}, $B_0$ is a maximum-weight basis of $M$ if and only if $w^*(e)\ge w^*(f)$ for every $f\in S\setminus B_0$ and every $e\in C(B_0,f)\setminus\{f\}$. 
    
    Since $w^*$ is constant on the connected components of $M|S_0$, these inequalities hold with equality for all $f\in S_0\setminus B_0$. Indeed, in that case $C(B_0,f)$ is a circuit of $M|S_0$, so all its elements lie in the same component of $M|S_0$. Thus, it remains only to impose the inequalities for $f\in S\setminus S_0$. For such an element $f$, we have $C(B_0,f)\setminus\{f\}\subseteq B_0\subseteq S_0$, and therefore every element of $C(B_0,f)\setminus\{f\}$ belongs to a unique component $K_i$. Define $K(f)\subseteq[\ell]$ as the set of indices $i$ such that $(C(B_0,f)\setminus\{f\})\cap K_i\neq\emptyset$. We now check that this definition does not depend on the choice of $B_0$.
    
Indeed, fix $i\in[\ell]$. Since $B_0\cap(S_0\setminus K_i)$ is a basis of $M|(S_0\setminus K_i)$, the fundamental-circuit characterization of closure gives \[ (C(B_0,f)\setminus\{f\})\cap K_i=\emptyset \quad\iff\quad f\in\cl_M(S_0\setminus K_i). \] Equivalently, this holds if and only if $f$ is a loop in the contraction $M/(S_0\setminus K_i)$. This condition does not involve $B_0$, so $K(f)$ is well defined. Hence $w^*$ is feasible for \textsc{IM-All} if and only if it assigns a constant value $x_i$ to each component $K_i$, assigns a value $y_f$ to each element $f\in S\setminus S_0$, and satisfies $x_i\ge y_f$ whenever $i\in K(f)$.  This characterization reduces \textsc{IM-All} to a polynomial-size separable convex minimization program with $O(|S|)$ variables and $O(|S|^2)$ constraints.
\begin{lemma}\label{lem:separable_convex_prog}
    Let $(M=(S,\cI),S_0,w,\phi)$ be an instance of \textsc{IM-All}, and let $K_1,\dots,K_\ell$ denote the connected components of the restriction $M|_{S_0}$. Then the following hold: 
    \begin{enumerate}[label=(\alph*)]\itemsep0em 
    \item the problem is equivalent to the convex minimization program over variables $\Delta\in\mathbb{R}^S$, $x\in\mathbb{R}^\ell$, and $y\in\mathbb{R}^{S\setminus S_0}$ given by
    \begin{alignat}{3}
        &\min\quad & \phi(\Delta) & \notag \\
        &\text{\rm s.t.}\quad & \Delta(e) & = w(e) - x_i &\qquad& \forall i\in [\ell],\ \forall e\in K_i, \label{eq:separable_delta_e} \\
        & & \Delta(f) & = w(f) - y_f && \forall f\in S\setminus S_0, \label{eq:separable_delta_f} \\
        & & x_i & \ge y_f && \forall f\in S\setminus S_0,\ \forall i\in K(f). \label{eq:separable_constraints}
    \end{alignat}
        and 
        \item given an optimal solution $(\Delta,x,y)$, an optimal weight function is obtained by setting \[ w^*(e)=x_i \quad \text{for every } e\in K_i, \qquad\text{and}\qquad w^*(f)=y_f \quad \text{for every } f\in S\setminus S_0. \]
    \end{enumerate}
\end{lemma}
\begin{proof}
    This is exactly the characterization above, with $\Delta=w-w^*$ used to write the objective as $\phi(\Delta)$.
\end{proof}

\begin{remark}\label{obs:im_only_strict}
    The \textsc{IM-Only} problem requires the weight function $w^*$ to be integral. A weight function $w^*$ is feasible for \textsc{IM-Only} if and only if the bases contained in $S_0$ are the only maximum-weight bases of $M$. Over the integers, this strict condition corresponds to the shifted inequality $x_i \ge y_f + 1$. Consequently, \textsc{IM-Only} is equivalent to the integer version of the program in Lemma~\ref{lem:separable_convex_prog}, with the additional constraints $x \in \mathbb{Z}^\ell$, $y \in \mathbb{Z}^{S \setminus S_0}$, and replacing~\eqref{eq:separable_constraints} by $x_i \ge y_f + 1$.
\end{remark}

We use this separable formulation as inspiration to derive combinatorial algorithms for the $\ell_\infty$- and $\ell_1$-norms.

\subsection{IM-All and IM-Only under \texorpdfstring{$\ell_\infty$}{l-infinity}-norm}

\subsubsection{Structural Properties}

For the $\ell_\infty$-norm, Lemma~\ref{lem:separable_convex_prog} gives the objective
$\max \big\{ \max_{i \in [\ell], e \in K_i} |w(e)-x_i|,\ \max_{f \in S \setminus S_0} |w(f)-y_f| \big\}$,
subject to $x_i \ge y_f$ for all $f \in S \setminus S_0$ and $i \in K(f)$.
\begin{lemma}\label{lem:hm}
    Let $w_\textsc{Feas}$ be a feasible solution for the \textsc{IM-All} or \textsc{IM-Only} instance $(M=(S,\cI),S_0,w,\allowbreak \|\cdot\|_\infty)$, and let $\delta=\|w-w_\textsc{Feas}\|_\infty$. Then there exists a feasible solution $w'_\textsc{Feas}$ satisfying the following:
    \begin{enumerate}[label=(\alph*)]\itemsep0em
        \item $\|w-w'_\textsc{Feas}\|_\infty=\delta$, \label{prop:a}
        \item for every component $K_i$ of $M|S_0$, there exists $e\in K_i$ with $w'_\textsc{Feas}(e)-w(e)=\delta$, and \label{prop:b}
        \item $w'_\textsc{Feas}(f)-w(f)=-\delta$ for all $f\in S\setminus S_0$. \label{prop:c}
    \end{enumerate}
\end{lemma}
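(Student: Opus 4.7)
The plan is to construct $w'_\textsc{Feas}$ explicitly by deflating every element of $S\setminus S_0$ to $w(s)-\delta$ and setting the restriction of $w'_\textsc{Feas}$ on each connected component of $M|S_0$ to a carefully chosen constant. I will focus on the case where $S_0$ contains a basis of $M$ (the complementary case is either vacuous or handled by a trivial modification in which each component of $M|S_0$ receives a single $+\delta$ bump at one chosen element).

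The starting point is that feasibility of $w_\textsc{Feas}$ forces all bases of $M|S_0$ to share the same $w_\textsc{Feas}$-weight, so by Lemma~\ref{lem:connected_components_decomp} applied to $M|S_0$ the function $w_\textsc{Feas}$ is constant on each connected component $S_i$ of $M|S_0$, say $w_\textsc{Feas}|_{S_i}\equiv d_i$. Combined with $|w_\textsc{Feas}(e)-w(e)|\le\delta$, this yields $w(e)\in[d_i-\delta,d_i+\delta]$ for every $e\in S_i$, and in particular the range bound $\max_{e\in S_i}w(e)-\min_{e\in S_i}w(e)\le 2\delta$.

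Next I would set $c_i\coloneqq\min_{e\in S_i}w(e)+\delta$ and define
\begin{equation*}
w'_\textsc{Feas}(s)\coloneqq\begin{cases} c_i & \text{if } s\in S_i,\\ w(s)-\delta & \text{if } s\in S\setminus S_0.\end{cases}
\end{equation*}
Property~(c) is immediate; property~(b) holds because the minimizer $e\in S_i$ satisfies $w'_\textsc{Feas}(e)-w(e)=\delta$; and property~(a) follows from the range bound above, which gives $c_i-w(e)\in[-\delta,\delta]$ for every $e\in S_i$. To check feasibility, note that $w'_\textsc{Feas}$ is constant on components of $M|S_0$, so by Lemma~\ref{lem:connected_components_decomp} again all bases of $M|S_0$ have the same $w'$-weight; it therefore suffices to show that one fixed basis $B_0\subseteq S_0$ is of maximum $w'$-weight in $M$. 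I would apply Proposition~\ref{prop:charact_optimum_basis}: for each $f\in S-B_0$ and $e\in C(B_0,f)$, if $f\in S_0$ then $C(B_0,f)\subseteq B_0+f\subseteq S_0$ lies entirely inside a single component $S_i$, forcing $w'(f)=w'(e)=c_i$; if $f\in S\setminus S_0$, feasibility of $w_\textsc{Feas}$ applied to the exchange $B_0-e+f$ yields $w_\textsc{Feas}(f)\le d_{i(e)}$, hence $w'(f)=w(f)-\delta\le w_\textsc{Feas}(f)\le d_{i(e)}\le c_{i(e)}=w'(e)$.

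The main obstacle lies in the \textsc{IM-Only} case, where the inequality above must be strict for $f\in S\setminus S_0$. This is where integrality becomes essential: \textsc{IM-Only} feasibility of $w_\textsc{Feas}$ upgrades $w_\textsc{Feas}(f)\le d_{i(e)}$ to $w_\textsc{Feas}(f)<d_{i(e)}$, and integrality of $w_\textsc{Feas}$ turns this into $w_\textsc{Feas}(f)\le d_{i(e)}-1$, so $w'(f)\le c_{i(e)}-1<w'(e)$. For an arbitrary basis $B\not\subseteq S_0$, picking any $f\in B\setminus S_0$ and applying the bijection $\varphi$ from Proposition~\ref{prop:exchange} between $B_0\setminus B$ and $B\setminus B_0$, the telescoping identity $w'(B)-w'(B_0)=\sum_{e\in B_0\setminus B}(w'(\varphi(e))-w'(e))$ contains a strictly negative term at $\varphi^{-1}(f)$ while all other terms are $\le 0$ by the preceding analysis, yielding $w'(B)<w'(B_0)$ as required.
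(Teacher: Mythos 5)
Your proof is correct, and the weight function you construct is in fact the same one the paper arrives at: the paper shifts each component $S_i$ of $M|S_0$ upward by $\min\{\delta-(w_\textsc{Feas}(e)-w(e))\mid e\in S_i\}$, which (since $w_\textsc{Feas}$ is constant on $S_i$) lands exactly on your value $c_i=\min_{e\in S_i}w(e)+\delta$, and pushes everything outside $S_0$ down to $w-\delta$. The difference lies entirely in how feasibility is certified. The paper's argument is a two-line perturbation claim -- uniformly raising weights inside one component of $M|S_0$ and lowering weights outside $S_0$ ``preserves feasibility'' -- which implicitly rests on the fact that every basis contained in $S_0$ meets each component $S_i$ in exactly $r(S_i)$ elements while any other basis meets $S_i$ in at most that many. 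You instead verify feasibility of the final function from scratch, via the circuit characterization of Proposition~\ref{prop:charact_optimum_basis} for a single basis $B_0\subseteq S_0$ and the exchange bijection of Proposition~\ref{prop:exchange} for arbitrary bases. Your route is longer, but it makes explicit two points the paper's proof glosses over: the identification of the component constants $d_i$ via Lemma~\ref{lem:connected_components_decomp}, and -- more importantly -- the strictness argument (using integrality of $w$ and $w_\textsc{Feas}$, hence of $\delta$) needed so that the conclusion also covers \textsc{IM-Only}, for which the paper's one-line justification offers no detail. The only loose end is the degenerate case where $S_0$ contains no basis of $M$ (possible for \textsc{IM-All}), which you correctly flag as vacuous and which the paper's surrounding algorithm also treats separately.
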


\begin{proof}
    By Lemma~\ref{lem:separable_convex_prog}, $w_\textsc{Feas}$ corresponds to an assignment $(x,y)$ satisfying the feasibility constraints. Define $x'_i=\min_{e\in K_i}\{w(e)+\delta\}$ for every $i\in[\ell]$, and $y'_f=w(f)-\delta$ for every $f\in S\setminus S_0$.

    Since $\|w-w_\textsc{Feas}\|_\infty=\delta$, we have $w(e)-\delta\le x_i\le w(e)+\delta$ for every $e\in K_i$. Hence $x'_i\ge x_i$. Also, $x'_i\le w(e)+\delta$ for every $e\in K_i$, so $|w(e)-x'_i|\le\delta$. If $e_i\in K_i$ attains the minimum defining $x'_i$, then $x'_i-w(e_i)=\delta$, proving condition~\ref{prop:b}.

    Similarly, $y'_f\le y_f$ and $y'_f-w(f)=-\delta$ for every $f\in S\setminus S_0$, proving condition~\ref{prop:c}. Thus the $\ell_\infty$-distance from $w$ is exactly $\delta$.

    Finally, feasibility is preserved because each $x_i$ is only increased and each $y_f$ is only decreased. Thus every constraint $x_i\ge y_f$ remains valid. For \textsc{IM-Only}, the same argument preserves the shifted constraints $x_i\ge y_f+1$. Therefore $(x',y')$ defines a feasible weight function $w'_\textsc{Feas}$ satisfying the three conditions.
\end{proof}

\subsubsection{Combinatorial Algorithm for \textsc{IM-All} under \texorpdfstring{$\ell_\infty$}{l-infty}-norm}
\label{sec:algall}

Our algorithm for \textsc{IM-All} relies on
Proposition~\ref{prop:charact_optimum_basis} and
Lemma~\ref{lem:connected_components_decomp}. These results imply that, for any
basis $B_0\subseteq S_0$, a feasible weight function $w'$ must satisfy
$w'(y)=w'(x)$ for all $y\in S_0\setminus B_0$ and
$x\in C(B_0,y)$, and $w'(y)\le w'(x)$ for all
$y\in S\setminus S_0$ and $x\in C(B_0,y)$. The high-level idea of the algorithm
is as follows.

\paragraph{Phase 1: Homogenization.}
First, we determine the connected components $K_1,\dots,K_\ell$ of $M|S_0$.
For each component $K_i$, we replace the weights of all elements in $K_i$ by a
common value. The common values are chosen so that the largest change inside
$S_0$ is as small as possible. We denote the resulting weight function by
$w_H$.

\paragraph{Phase 2: Maximization.}
We compute the optimal value $\Delta$ of the \textsc{IM-Exists} instance
$(M,S_0,w_H,\|\cdot\|_\infty)$. We then define $w^*$ by setting
$w^*(s)=w_H(s)+\Delta$ for all $s\in S_0$, and
$w^*(s)=w_H(s)-\Delta$ for all $s\in S\setminus S_0$.

The formal description of the algorithm is given as Algorithm~\ref{alg:IM-All}; see Figure~\ref{fig:example_IM-All} for an illustration.

\begin{algorithm}[ht!]
\caption{Combinatorial algorithm for \textsc{IM-All} under the $\ell_\infty$-norm}\label{alg:IM-All}
\DontPrintSemicolon
\KwIn{An instance $(M=(S,\mathcal{I}),S_0,w,\|\cdot\|_\infty)$ of \textsc{IM-All}.}
\KwOut{An optimal weight function $w^*$.}
    \lIf{$M$ has no basis contained in $S_0$}{
        Set $w^* \coloneqq w$.        
    }
    \Else{
    \tcp*[l]{Phase 1: Homogenization}
        Determine the connected components $K_1,\dots,K_\ell$ of $M|S_0$.\;
        \For{$i \in [\ell]$}{
        Set $\delta_i^{\max}  \coloneqq \max \{ w(e) \colon e \in K_i \}$ and $\delta_i^{\min}  \coloneqq \min \{ w(e) \colon e \in K_i \} $.\;
        Set $m_i  \coloneqq (\delta_i^{\max} + \delta_i^{\min})/2$.\;
}
Set $\varrho \coloneqq \max_{i \in [\ell]} \{ \delta_i^{\max} - m_i\}$.\label{step:rho}\;
Set $\operatorname{shift}_i \coloneqq \varrho - \left( \delta_i^{\max} - m_i \right)$ for $i \in [\ell]$.\;
Set $w_H(s)\coloneqq m_i + \operatorname{shift}_i$ if $s \in K_i$ for $i \in [\ell]$ and $w_H(s)\coloneqq w(s) - \varrho$ if $s \in S \setminus S_0$ .\;
    \tcp*[l]{Phase 2: Maximization}
        Let $\Delta$ be the optimum value of the \textsc{IM-Exists} instance $(M=(S, \cI), S_0, w_H, \| \cdot \|_\infty)$.\;
        Set $w^*(s)\coloneqq w_H(s)+\Delta$ if $s\in S_0$, and $w^*(s)\coloneqq w_H(s)-\Delta$ if $s\in S\setminus S_0$. 
        \;
    }
    \Return{$w^*$.}\;
\end{algorithm}

\begin{thm}\label{thm:IM-All_algorithm}
    Algorithm~\ref{alg:IM-All} determines an optimal weight function $w^*$ for the \textsc{IM-All} instance $(M=(S,\cI),S_0,w,\|\cdot\|_\infty)$ using  $O(n\log n)$ elementary operations and $O(nr)$ independence-oracle calls.
\end{thm}
\begin{proof}
    The algorithm outputs a weight function $w^*$ that satisfies the conditions
    of Proposition~\ref{prop:charact_optimum_basis} and
    Lemma~\ref{lem:connected_components_decomp}, and is therefore feasible.

    Let $\delta^*=\|w-w^*\|_\infty$. Since Phase 1 is the best possible
    homogenization of the weights on the components of $M|S_0$, every feasible
    solution has distance at least $\varrho$ from $w$.

    To prove optimality, let $w^*_{\textsc{All}}$ be an optimal solution
    satisfying conditions~\ref{prop:a}--\ref{prop:c} of Lemma~\ref{lem:hm}, and
    let $\delta=\|w-w^*_{\textsc{All}}\|_\infty$. Since feasible solutions are
    constant on each component $K_i$ of $M|S_0$, condition~\ref{prop:b} gives
    \[
    w^*_{\textsc{All}}(s)=\delta_i^{\min}+\delta
    \qquad \forall s\in K_i.
    \]
    On the other hand, Phase 1 sets
    \[
    w_H(s)=\delta_i^{\min}+\varrho
    \qquad \forall s\in K_i.
    \]
    Therefore
    \[
    w_H(s)=w^*_{\textsc{All}}(s)-\delta+\varrho
    \qquad \forall s\in S_0.
    \]
    Also, by condition~\ref{prop:c},
    \[
    w^*_{\textsc{All}}(f)=w(f)-\delta
    \qquad \forall f\in S\setminus S_0,
    \]
    while Phase 1 sets $w_H(f)=w(f)-\varrho$. Hence
    \[
    w_H(f)=w^*_{\textsc{All}}(f)+\delta-\varrho
    \qquad \forall f\in S\setminus S_0.
    \]

    Thus $w^*_{\textsc{All}}$ is obtained from $w_H$ by increasing all weights
    in $S_0$ by $\delta-\varrho$ and decreasing all weights in
    $S\setminus S_0$ by $\delta-\varrho$. Since $w^*_{\textsc{All}}$ is
    feasible, the shift $\delta-\varrho$ is feasible for the
    \textsc{IM-Exists} instance solved in Phase 2. By the definition of
    $\Delta$, we have
    \[
    \Delta\le \delta-\varrho.
    \]
    The output of the algorithm has distance $\varrho+\Delta$ from $w$. Hence
    \[
    \delta^*=\varrho+\Delta\le \delta.
    \]
    Since $\delta$ is the optimum value, $w^*$ is optimal.

   \textit{Running time.} 
   Detecting whether $M$ has a basis contained in $S_0$ can be done by running the Greedy algorithm giving preferences to elements in $S_0$ using $O(n\log n)$ elementary operations and $O(n)$ independence-oracle calls.
   As mentioned earlier in Section~\ref{sec:preliminaries}, the connected     components of $M|S_0$ can be determined using $O(nr)$ independence-oracle calls~\cite{krogdahl1977dependence} by computing the
    connected components of the hypergraph of fundamental circuits for elements
    in $S_0\setminus B_0$, with respect to an arbitrary basis $B_0$ of
    $M|S_0$.
   Once the connected components are known, the weight function $w_H$ can
    be constructed in $O(n)$ time by scanning the elements of $S_0$ to
    compute the minimum and maximum weight in each component and then
    assigning the new weights.
    Finally, computing the value of $\Delta$ amounts to solving an \textsc{IM-Exists} instance, which requires $O(|S|\log |S|)$ elementary operations and $O(|S| + r\log r)$ independence-oracle calls by Theorem~\ref{thm:IM-Exists-combAlgo}.
\end{proof}

\begin{figure}[ht!]
\centering
    \begin{subfigure}[b]{0.33\textwidth}
        \centering
        \begin{tikzpicture}[scale=0.9,
            myEdge/.style={line width = 1.5pt},     
            state/.style={circle,  minimum size=1em, draw, line width = 1.2pt}]   
            \node (a) at (0,2) [state] {$a$};
            \node (b) at (4,2) [state] {$b$};
            \node (c) at (1,1) [state] {$c$};
            \node (d) at (3,1) [state] {$d$};
            \node (e) at (0,0) [state] {$e$};
            \node (f) at (4,0) [state] {$f$}; 
            \draw [myEdge] (a) to node[midway, above] {7} (b) ; 
            \draw [myEdge] (e) to node[midway, below] {6} (f) ;
            \foreach \u \v \w \pos in {a/c/{\rlap{0}\phantom{3.5..}}/right, c/d/6/above, d/b/6/left, a/e/0/left, e/c/1/right, d/f/3/left, b/f/{\rlap{1}\phantom{3.5..}}/right}{
                \draw [MyOrange!30, line width = 5pt] (\u) to (\v) ; 
                \draw [myEdge] (\u) to node[midway,\pos] {\w} (\v) ; 
            }     
        \end{tikzpicture} 
        \caption{Original instance.}
        \label{fig:example_IM-All_original}
    \end{subfigure}%
    \hfill
    \begin{subfigure}[b]{0.33\textwidth}
        \centering
        \begin{tikzpicture}[scale=0.9,
            myEdge/.style={line width = 1.5pt},
            myEdgeZigzag/.style={line width = 1.5pt, decoration = {zigzag, segment length = 4pt, amplitude = 1pt},decorate},
            state/.style={circle,  minimum size=1em, draw, line width = 1.2pt}]   
            \node (a) at (0,2) [state] {$a$};
            \node (b) at (4,2) [state] {$b$};
            \node (c) at (1,1) [state] {$c$};
            \node (d) at (3,1) [state] {$d$};
            \node (e) at (0,0) [state] {$e$};
            \node (f) at (4,0) [state] {$f$}; 
            \draw [myEdge] (a) to node[midway, above] {4.5} (b) ; 
            \draw [myEdge] (e) to node[midway, below] {3.5} (f) ;
            \foreach \u \v \w \pos in { a/c/2.5/right, c/d/8.5/above, d/b/3.5/left,  e/c/2.5/right, d/f/3.5/left}{
                \draw [MyOrange!30, line width = 5pt] (\u) to (\v) ; 
                \draw [myEdgeZigzag] (\u) to node[midway,\pos] {\w} (\v) ; 
            } 
            \foreach \u \v \w \pos in {a/e/2.5/left,b/f/3.5/right}{
                \draw [MyOrange!30, line width = 5pt] (\u) to (\v) ; 
                \draw [myEdge] (\u) to node[midway,\pos] {\w} (\v) ; 
            }
        \end{tikzpicture} 
        \caption{End of Phase 1.}
        \label{fig:example_IM-All_phase1Completed}
    \end{subfigure}%
    \hfill
    \begin{subfigure}[b]{0.33\textwidth}
        \centering
        \begin{tikzpicture}[scale=0.9,
            myEdge/.style={line width = 1.5pt},
            myEdgeZigzag/.style={line width = 1.5pt, decoration = {zigzag, segment length = 4pt, amplitude = 1pt},decorate},
            state/.style={circle,  minimum size=1em, draw, line width = 1.2pt}]   
            \node (a) at (0,2) [state] {$a$};
            \node (b) at (4,2) [state] {$b$};
            \node (c) at (1,1) [state] {$c$};
            \node (d) at (3,1) [state] {$d$};
            \node (e) at (0,0) [state] {$e$};
            \node (f) at (4,0) [state] {$f$}; 
            \draw [myEdge] (a) to node[midway, above] {3.5} (b) ; 
            \draw [myEdge] (e) to node[midway, below] {2.5} (f) ;
            \foreach \u \v \w \pos in { a/c/3.5/right, c/d/9.5/above, d/b/4.5/left,  e/c/3.5/right, d/f/4.5/left, a/e/3.5/left, b/f/4.5/right}{
                \draw [MyOrange!30, line width = 5pt] (\u) to (\v) ; 
                \draw [myEdge] (\u) to node[midway,\pos] {\w} (\v) ; 
            } 
        \end{tikzpicture}
        \caption{End of Phase 2.}
        \label{fig:example_IM-All_end}
    \end{subfigure}%
    \caption{Illustration of Algorithm~\ref{alg:IM-All} on the graphic matroid in Figure~\ref{fig:example_IM-All_original}; see Section~\ref{sec:preliminaries} for a recall on connected components for the graphic matroid. The set $S_0 = \{ac, ae, ce, cd, db, df, bf\}$ is highlighted in orange. The components of $M|S_0$ are $K_1 = \{ac, ae, ce\}$, $K_2 = \{cd\}$, and $K_3 = \{db, df, bf\}$. In Phase 1, we get $m_1 = 0.5$, $m_2 = 6$, $m_3 = 3.5$, yielding $\varrho = 2.5$ and shifts $2$, $2.5$, and $0$, respectively. The resulting weights $w_H$ and basis $B_0 =\{ac, ce, cd, db, df\}$ are shown in Figure~\ref{fig:example_IM-All_phase1Completed}. Phase 2 yields $\Delta = 1$, and the final weight $w^*$ with optimal value $\|w-w^*\|_\infty=3.5$ is shown in Figure~\ref{fig:example_IM-All_end}.
    } 
    \label{fig:example_IM-All}
\end{figure}

\begin{remark}\label{rem:IM-All_integral}
    For the \textsc{Integral-IM-All} problem under the $\ell_\infty$-norm, where $w\in\bZ^S$ and $w^*$ is required to be integer-valued, an analogous approach works: the optimum value is $\lceil\delta^*\rceil$, where $\delta^*$ is the optimum value of the corresponding real-valued instance.
\end{remark}

\subsubsection{Algorithm for \textsc{IM-Only} under \texorpdfstring{$\ell_\infty$}{l-infty}-norm}
\label{sec:algonly}

We show that \textsc{IM-Only} can be solved by first computing an optimal weight function for the corresponding \textsc{Integral-IM-All} instance (see Remark~\ref{rem:IM-All_integral}), and then slightly modifying it, if needed, so that the bases in $S_0$ are exactly the set of maximizers. 
See an illustration in Figure~\ref{fig:example_IM-Only-infty}.

\begin{thm}\label{thm:IM-Only_algorithm}
The integral \textsc{IM-Only} problem under the $\ell_\infty$-norm can be solved using $O(n\log n)$ elementary operations and $O(nr)$ independence-oracle calls. More precisely, let $w^*_{\textsc{All}}$ be an optimal integral solution of the corresponding \textsc{IM-All} instance satisfying Lemma~\ref{lem:hm}. If $w^*_{\textsc{All}}$ is feasible for \textsc{IM-Only}, then it is optimal. Otherwise, $w^*_{\textsc{All}}+\chi_{S_0}-\chi_{S\setminus S_0}$ is an optimal solution for \textsc{IM-Only}.
\end{thm}

\begin{proof}
    Let $w^*_{\textsc{Only}}$ be an optimal weight function for the
    \textsc{IM-Only} instance, and let
    $\delta_{\textsc{Only}}=\|w-w^*_{\textsc{Only}}\|_\infty$. Since
    $w^*_{\textsc{Only}}$ is also feasible for \textsc{Integral-IM-All}, we may assume
    that it satisfies conditions~\ref{prop:a}--\ref{prop:c} of
    Lemma~\ref{lem:hm}. Let
    $\delta_{\textsc{All}}=\|w-w^*_{\textsc{All}}\|_\infty$.

    Suppose first that $\delta_{\textsc{Only}}=\delta_{\textsc{All}}$. The
    conditions of Lemma~\ref{lem:hm}, together with the fact that feasible
    solutions are constant on the components of $M|S_0$, determine the weights
    uniquely from the value of $\delta$. Hence
    $w^*_{\textsc{Only}}=w^*_{\textsc{All}}$. Therefore
    $w^*_{\textsc{All}}$ is feasible for \textsc{IM-Only}, and it is optimal.

    We may now assume that
    $\delta_{\textsc{Only}}>\delta_{\textsc{All}}$. Since the weights are
    integer-valued, this gives
    \[
    \delta_{\textsc{Only}}\ge \delta_{\textsc{All}}+1.
    \]
    Define
    \[
    w'(s)=
    \begin{cases}
    w^*_{\textsc{All}}(s)+1, & \text{if } s\in S_0,\\
    w^*_{\textsc{All}}(s)-1, & \text{if } s\in S\setminus S_0.
    \end{cases}
    \]

    Every basis contained in $S_0$ has maximum $w^*_{\textsc{All}}$-weight, and
    all such bases gain the same amount under $w'$. Hence they still have the
    same $w'$-weight. Let $B_0\subseteq S_0$ be a basis, and let $B$ be a basis
    with $B\not\subseteq S_0$. Then
    \[
    w'(B_0)=w^*_{\textsc{All}}(B_0)+|B_0|.
    \]
    Also,
    \[
    w'(B)=w^*_{\textsc{All}}(B)+|B\cap S_0|-|B\setminus S_0|.
    \]
    Since $B\not\subseteq S_0$ and $|B|=|B_0|$, we have
    $|B\setminus S_0|\ge 1$, and therefore
    \[
    |B\cap S_0|-|B\setminus S_0|\le |B_0|-2.
    \]
    Thus
    \[
    w'(B)
    \le w^*_{\textsc{All}}(B)+|B_0|-2
    < w^*_{\textsc{All}}(B_0)+|B_0|
    = w'(B_0).
    \]
    Therefore the bases contained in $S_0$ are exactly the maximum-weight bases
    under $w'$, so $w'$ is feasible for \textsc{IM-Only}.

    Finally, since $w^*_{\textsc{All}}$ satisfies
    conditions~\ref{prop:a}--\ref{prop:c} of Lemma~\ref{lem:hm}, the change
    from $w^*_{\textsc{All}}$ to $w'$ increases the $\ell_\infty$ distance from
    $w$ by exactly $1$. Hence
    \[
    \|w-w'\|_\infty=\delta_{\textsc{All}}+1\le \delta_{\textsc{Only}}.
    \]
    Since $w'$ is feasible for \textsc{IM-Only}, it is optimal.

    \textit{Running time.} By Theorem~\ref{thm:IM-All_algorithm}, an (integral) solution $w^*_{\textsc{All}}$ of \textsc{IM-All} can be obtained using $O(n\log n)$ elementary operations and $O(nr)$ independence-oracle calls.
    By Remark~\ref{rem:feas_IM-ONLY}, we can verify feasibility of $w^*_{\textsc{All}}$ by running at most three rounds of the Greedy algorithm, which takes $O(n\log n)$ operations and $O(n)$ independence-oracle calls. If $w^*_{\textsc{All}}$ is infeasible for \textsc{IM-Only}, we construct $w'=w^*_{\textsc{All}}+\chi_{S_0}-\chi_{S\setminus S_0}$ in $O(n)$ time.
\end{proof}

\begin{figure}[ht!]
\centering
        \begin{tikzpicture}[scale=0.9,
            myEdge/.style={line width = 1.5pt},
            myEdgeZigzag/.style={line width = 1.5pt, decoration = {zigzag, segment length = 4pt, amplitude = 1pt},decorate},
            state/.style={circle,  minimum size=1em, draw, line width = 1.2pt}]   
            \node (a) at (0,2) [state] {$a$};
            \node (b) at (4,2) [state] {$b$};
            \node (c) at (1,1) [state] {$c$};
            \node (d) at (3,1) [state] {$d$};
            \node (e) at (0,0) [state] {$e$};
            \node (f) at (4,0) [state] {$f$}; 
            \draw [myEdge] (a) to node[midway, above] {3} (b) ; 
            \draw [myEdge] (e) to node[midway, below] {2} (f) ;
            \foreach \u \v \w \pos in { a/c/4/right, c/d/10/above, d/b/5/left,  e/c/4/right, d/f/5/left, a/e/4/left, b/f/5/right}{
                \draw [MyOrange!30, line width = 5pt] (\u) to (\v) ; 
                \draw [myEdge] (\u) to node[midway,\pos] {\w} (\v) ; 
            } 
        \end{tikzpicture}
    \caption{Illustration of the algorithm for \textsc{IM-Only} under the $\ell_\infty$-norm (Theorem~\ref{thm:IM-Only_algorithm}) applied to the graphic matroid in Figure~\ref{fig:example_IM-All_original}. Applying Remark~\ref{rem:IM-All_integral} to the solution shown in Figure~\ref{fig:example_IM-All_end}, we obtain the weight function depicted in this figure, which is optimal and has value $\|w-w^*\|_\infty=4$.
    } 
    \label{fig:example_IM-Only-infty}
\end{figure}
\subsection{IM-All and IM-Only under \texorpdfstring{$\ell_1$}{l-1}-norm}
We provide purely combinatorial algorithms for both problems by exploiting the convex formulation given in Lemma~\ref{lem:separable_convex_prog}. In particular, we show that the two problems are closely related and can be solved using essentially the same algorithm, requiring only a minor modification.

For the $\ell_1$ objective, Lemma~\ref{lem:separable_convex_prog} gives the
following formulation of \textsc{IM-All}. Recall that, for each $f\in S\setminus S_0$, the set $K(f)\subseteq[\ell]$ consists of the indices $i$ such that $(C(B_0,f)\setminus\{f\})\cap K_i\neq\emptyset$.
\begin{align*}
\min\quad
& \sum_{i=1}^{\ell}\sum_{e\in K_i}|w(e)-x_i|
  + \sum_{f\in S\setminus S_0}|w(f)-y_f| \\
\textnormal{s.t.}\quad
& x_i\ge y_f
  \qquad \forall f\in S\setminus S_0,\ \forall i\in K(f).
\end{align*}
If $w$ is integral, then, by Remark~\ref{obs:im_only_strict}, the corresponding
formulation for \textsc{IM-Only} is obtained by replacing the inequalities
$x_i\ge y_f$ with $x_i\ge y_f+1$.

For $C\in\{0,1\}$, put $z_f=y_f+C$. Here $C=0$ corresponds to \textsc{IM-All},
and $C=1$ corresponds to the fractional relaxation of \textsc{IM-Only}. The
formulation becomes
\begin{align}
\min\quad
& \sum_{i=1}^{\ell}\sum_{e\in K_i}|w(e)-x_i|
  + \sum_{f\in S\setminus S_0}|w(f)+C-z_f| \notag \\
\textnormal{s.t.}\quad
& z_f\le x_i
  \qquad \forall f\in S\setminus S_0,\ \forall i\in K(f).
\label{eq:l1_shifted_formulation}
\end{align}
After solving \eqref{eq:l1_shifted_formulation}, we recover $y_f=z_f-C$. For each value of $C$, define
\[
A_C=
\{w(e):e\in S_0\}
\cup
\{w(f)+C\colon f\in S\setminus S_0\}.
\]

\begin{lemma}\label{lem:l1_values_in_AC}
Let $w\in\mathbb R^S$ and let $C\in\{0,1\}$. Then there exists an
optimal solution to \eqref{eq:l1_shifted_formulation} satisfying
\begin{enumerate}[label=(\alph*)]\itemsep0em
    \item $x_i\in A_C$ for every $i\in[\ell]$,
    \item $z_f\in A_C$ for every $f\in S\setminus S_0$.
\end{enumerate}
Moreover, if $w$ is integer-valued, then
\eqref{eq:l1_shifted_formulation} has an integral optimal solution.
\end{lemma}

\begin{proof}
Write $A_C=\{a_1,\ldots,a_m\}$, with $a_1<\cdots<a_m$.

We first reduce to solutions whose variables lie in $[a_1,a_m]$. If a variable
has value below $a_1$, replace it by $a_1$; if it has value above $a_m$, replace
it by $a_m$. This preserves all inequalities $z_f\le x_i$. It also decreases
the objective, since every term has the form $|b-t|$ with $b\in A_C$.

Let $(x^*,z^*)$ be an optimal solution with all variables in $[a_1,a_m]$. We
add bounds around the values of this solution. If the value $v^*$ of a variable
$v$ belongs to $A_C$, add the equality $v=v^*$. Otherwise, let
$a_r<v^*<a_{r+1}$ and add the bounds
\[
a_r\le v\le a_{r+1}.
\]
We do this for every variable $v$, where $v$ is either some $x_i$ or some
$z_f$.

On the resulting feasible region, the objective is linear. Indeed, as functions
of $t$, both
\[
\sum_{e\in K_i}|w(e)-t|
\qquad\text{and}\qquad
|w(f)+C-t|
\]
are linear on every interval between two consecutive values of $A_C$. Thus the
restricted problem is a linear program. It contains $(x^*,z^*)$, and its
objective agrees with the original objective on the restricted feasible region.
Hence it has the same optimum value as \eqref{eq:l1_shifted_formulation}.

The constraints of this linear program are the inequalities $z_f\le x_i$ and
the bounds added above. The matrix is totally unimodular: each row of the form
$z_f\le x_i$ has one $+1$ and one $-1$, and adding variable bounds preserves
total unimodularity. Let $(\bar x,\bar z)$ be an optimal extreme point of the
restricted linear program.

We now show that the values of this extreme point belong to $A_C$. At an
extreme point, the tight constraints determine all variables. A tight
inequality $z_f\le x_i$ only gives the equality $z_f=x_i$; it does not fix a
number. Numbers can only come from tight bounds, and all bounds added above
have values in $A_C$. Therefore every variable value in $(\bar x,\bar z)$
belongs to $A_C$.

If $w$ is integer-valued, then all bounds are integral. Total unimodularity
implies that $(\bar x,\bar z)$ is integral. Since $C\in\{0,1\}$, the recovered
values $y_f=\bar z_f-C$ are also integral. This proves the lemma.

\end{proof}
\begin{remark}
    The integrality of the optimal solutions established in Lemma~\ref{lem:l1_values_in_AC} relies on the $\ell_1$-norm objective. This property does not hold in general for the $\ell_\infty$-norm. For instance, consider the \textsc{IM-All} problem on the uniform matroid of rank~$1$ over $S=\{s_0,s_1\}$, with target subset $S_0=\{s_0\}$ and initial weights $w(s_0)=0$ and $w(s_1)=1$. The optimal value under the $\ell_\infty$-norm is $1/2$, uniquely attained by the strictly fractional weight function $w^*(s_0)=w^*(s_1)=1/2$.
\end{remark}

Instead of solving \eqref{eq:l1_shifted_formulation} directly, we present Algorithm~\ref{alg:IM-All-Only-L1}, which solves the problem via a single minimum $s$--$t$ cut computation. The high-level idea of the algorithm is as follows. First, we determine the connected components of the restriction $M|S_0$ and construct the graph from Lemma~\ref{lem:l1_min_closure}. We compute a minimum $s$--$t$ cut and reconstruct the solution by setting $w^*(e)\coloneqq x_i$ for every $e\in K_i$, $i\in[\ell]$, and $w^*(f)\coloneqq z_f-C$ for every $f\in S\setminus S_0$. An illustration of the algorithm is shown in Figure~\ref{fig:example_IM-All-L1}.

\begin{algorithm}[ht!]
    \caption{Combinatorial algorithm for \textsc{IM-All} and \textsc{IM-Only} under the $\ell_1$-norm}\label{alg:IM-All-Only-L1}
    \DontPrintSemicolon
    \KwIn{An instance $(M=(S,\mathcal{I}),S_0,w,\|\cdot\|_1)$ of \textsc{IM-All} or \textsc{IM-Only}.}
    \KwOut{An optimal weight function $w^*$.}
    Set $C\coloneqq 0$ for \textsc{IM-All} and $C\coloneqq 1$ for \textsc{IM-Only}.\;    
    \lIf{$M$ has no basis contained in $S_0$}{
        Set $w^* \coloneqq w$.
    }
    \Else{
        \tcp*[l]{Phase 1: Decomposition}
        Determine the connected components $K_1,\dots,K_\ell$ of $M|_{S_0}$.\;
    
        Construct the graph $G$ corresponding to
        \eqref{eq:l1_shifted_formulation} with parameter $C$,
        as described in Lemma~\ref{lem:l1_min_closure}.\;
    
        \tcp*[l]{Phase 2: Minimum cut}
        Construct $\widehat{G}$ as described in Lemma~\ref{lem:l1_min_closure}.\;
        Compute a minimum $s$-$t$ cut in $\widehat{G}$.\;
    
        Recover the optimal values of the variables
        $x_1,\dots,x_\ell$ and $z_f$ for all
        $f\in S\setminus S_0$.\;
    
        \tcp*[l]{Phase 3: Reconstruction}
        Set $w^*(e)\coloneqq x_i$ for every $e\in K_i$, $i\in[\ell]$.\\
        Set $w^*(f)\coloneqq z_f-C$ for every
        $f\in S\setminus S_0$. 
    }
    \Return{$w^*$.}\;
\end{algorithm}

\begin{figure}[htp!]
\centering
    \begin{subfigure}[b]{0.33\textwidth}
        \centering
        \begin{tikzpicture}[scale=0.9,
            myEdge/.style={line width = 1.5pt},     
            state/.style={circle,  minimum size=1em, draw, line width = 1.2pt}]   
            \node (a) at (0,2) [state] {$a$};
            \node (b) at (4,2) [state] {$b$};
            \node (c) at (1,1) [state] {$c$};
            \node (d) at (3,1) [state] {$d$};
            \node (e) at (0,0) [state] {$e$};
            \node (f) at (4,0) [state] {$f$}; 
            \draw [myEdge] (a) to node[midway, above] {7} (b) ; 
            \draw [myEdge] (e) to node[midway, below] {6} (f) ;
            \foreach \u \v \w \pos in {a/c/{\rlap{0}\phantom{3.5..}}/right, c/d/6/above, d/b/6/left, a/e/0/left, e/c/1/right, d/f/3/left, b/f/{\rlap{1}\phantom{3.5..}}/right}{
                \draw [MyOrange!30, line width = 5pt] (\u) to (\v) ; 
                \draw [myEdge] (\u) to node[midway,\pos] {\w} (\v) ; 
            }     
        \end{tikzpicture} 
        \caption{Original instance.}
        \label{fig:example_IM-All_original-L1}
    \end{subfigure}%
    \hfill
    \begin{subfigure}[b]{0.6\textwidth}
        \centering
        \resizebox{\textwidth}{!}{
        \begin{tikzpicture}[
          every node/.style={font=\small},
          var/.style={circle,draw,thick,minimum size=8mm,inner sep=0pt},
          colx1/.style={var,draw=blue!60!black,fill=blue!8},
          colx2/.style={var,draw=teal!60!black,fill=teal!8},
          colx3/.style={var,draw=orange!70!black,fill=orange!10},
          colz/.style={var,draw=violet!60!black,fill=violet!8},
          colw/.style={var,draw=purple!60!black,fill=purple!8},
          crossArc/.style={-{Stealth[length=2mm]},black,thin},
          myArc/.style={-{Stealth[length=3mm]},thick},
          ]
        \def\cxone{0}
        \def\cxtwo{2.4}
        \def\cxthree{4.8}
        \def\czab{9.2}
        \def\czef{11.8}
        \def\cs{-3}
        \def\ct{14.8}
        \def\yfive{6}
        \def\yfour{4}
        \def\ythree{2}
        \def\ytwo{0}
        \def\ymid{3}
        \node[colx1] (x15) at (\cxone,\yfive)  {$(x_1,5)$};
        \node[left = 0.03ex of x15] {\scriptsize $\Delta=3$};
        \node[colx1] (x14) at (\cxone,\yfour)  {$(x_1,4)$};
        \node[left = 0.03ex of x14] {\scriptsize $\Delta=9$};
        \node[colx1] (x13) at (\cxone,\ythree) {$(x_1,3)$};
        \node[left = 0.03ex of x13] {\scriptsize $\Delta=6$};
        \node[colx1] (x12) at (\cxone,\ytwo)   {$(x_1,2)$};
        \node[left = 0.03ex of x12] {\scriptsize $\Delta=1$};
        \node[colx2] (x25) at (\cxtwo,\yfive)  {$(x_2,5)$};
        \node[left = 0.03ex of x25] {\scriptsize $\Delta=1$};
        \node[colx2] (x24) at (\cxtwo,\yfour)  {$(x_2,4)$};
        \node[left = 0.03ex of x24] {\scriptsize $\Delta=-3$};
        \node[colx2] (x23) at (\cxtwo,\ythree) {$(x_2,3)$};
        \node[left = 0.03ex of x23] {\scriptsize $\Delta=-2$};
        \node[colx2] (x22) at (\cxtwo,\ytwo)   {$(x_2,2)$};
        \node[left = 0.03ex of x22] {\scriptsize $\Delta=-1$};
        \node[colx3] (x35) at (\cxthree,\yfive)  {$(x_3,5)$};
        \node[left = 0.03ex of x35] {\scriptsize $\Delta=3$};
        \node[colx3] (x34) at (\cxthree,\yfour)  {$(x_3,4)$};
        \node[left = 0.03ex of x34] {\scriptsize $\Delta=3$};
        \node[colx3] (x33) at (\cxthree,\ythree) {$(x_3,3)$};
        \node[left = 0.03ex of x33] {\scriptsize $\Delta=-2$};
        \node[colx3] (x32) at (\cxthree,\ytwo)   {$(x_3,2)$};
        \node[left = 0.03ex of x32] {\scriptsize $\Delta=-3$};
        
        \node[colz] (z5) at (\czab,\yfive)  {$(z_{ab},5)$};
        \node[right  = 0.03ex of z5] {\scriptsize $\Delta=-1$};
        
        \node[colz] (z4) at (\czab,\yfour)  {$(z_{ab},4)$};
        \node[right  = 0.03ex of z4] {\scriptsize $\Delta=-3$};
        
        \node[colz] (z3) at (\czab,\ythree) {$(z_{ab},3)$};
        \node[right  = 0.03ex of z3] {\scriptsize $\Delta=-2$};
        
        \node[colz] (z2) at (\czab,\ytwo)   {$(z_{ab},2)$};
        \node[right  = 0.03ex of z2] {\scriptsize $\Delta=-1$};
        
        \node[colw] (w5) at (\czef,\yfive)  {$(z_{ef},5)$};
        \node[right =  0.03ex of w5] {\scriptsize $\Delta=1$};
        
        \node[colw] (w4) at (\czef,\yfour)  {$(z_{ef},4)$};
        \node[right =  0.03ex of w4] {\scriptsize $\Delta=-3$};
        
        \node[colw] (w3) at (\czef,\ythree) {$(z_{ef},3)$};
        \node[right =  0.03ex of w3] {\scriptsize $\Delta=-2$};
        
        \node[colw] (w2) at (\czef,\ytwo)   {$(z_{ef},2)$};
        \node[right =  0.03ex of w2] {\scriptsize $\Delta=-1$};
        
        \foreach \col in {x1,x2,x3,z,w}{
          \draw[myArc] (\col5)--(\col4);
          \draw[myArc] (\col4)--(\col3);
          \draw[myArc] (\col3)--(\col2);
        }        
        \foreach \row in {5,4,3,2}{
          \draw[crossArc] (z\row) to[bend left=18]  (x1\row);
          \draw[crossArc] (z\row) to[bend left=14]  (x2\row);
          \draw[crossArc] (z\row) to[bend left=10]  (x3\row);
          \draw[crossArc] (w\row) to[bend right=22] (x1\row);
          \draw[crossArc] (w\row) to[bend right=18] (x2\row);
          \draw[crossArc] (w\row) to[bend right=14] (x3\row);
        }
        \end{tikzpicture}
        }
        \caption{Phase 1: Construction of $G$.}
        \label{fig:phase1-G}
    \end{subfigure}%
    \hfill
    \begin{subfigure}[b]{0.6\textwidth}
        \centering
        \resizebox{\textwidth}{!}{
        \begin{tikzpicture}[
          every node/.style={font=\small},
          var/.style={circle,draw,thick,minimum size=8mm,inner sep=0pt},
          st/.style={var,draw=black,font=\bfseries},
          colx1/.style={var,draw=blue!60!black,fill=blue!8},
          colx2/.style={var,draw=teal!60!black,fill=teal!8},
          colx3/.style={var,draw=orange!70!black,fill=orange!10},
          colz/.style={var,draw=violet!60!black,fill=violet!8},
          colw/.style={var,draw=purple!60!black,fill=purple!8},
          capac/.style={midway,fill=white,inner sep=1pt,font=\scriptsize},
          Mlabel/.style={midway,fill=white,inner sep=1pt,font=\scriptsize\itshape},
          crossArc/.style={-{Stealth[length=2mm]},gray,thin},
          myArc/.style={-{Stealth[length=3mm]},thick},
          ]        
        \def\cxone{0}
        \def\cxtwo{2.4}
        \def\cxthree{4.8}
        \def\czab{9.2}
        \def\czef{11.8}
        \def\cs{-3}
        \def\ct{14.8}
        \def\yfive{6}
        \def\yfour{4}
        \def\ythree{2}
        \def\ytwo{0}
        \def\ymid{3}        
        \node[st] (s) at (\cs,\ymid) {$s$};
        \node[st] (t) at (\ct,\ymid) {$t$};
        \foreach \row/\y in {5/\yfive,4/\yfour,3/\ythree,2/\ytwo}{
          \node[colx1] (x1\row) at (\cxone,\y)   {$(x_1,\row)$};
          \node[colx2] (x2\row) at (\cxtwo,\y)   {$(x_2,\row)$};
          \node[colx3] (x3\row) at (\cxthree,\y) {$(x_3,\row)$};
          \node[colz]  (z\row)  at (\czab,\y)    {$(z_{ab},\row)$};
          \node[colw]  (w\row)  at (\czef,\y)    {$(z_{ef},\row)$};
        }
        \foreach \col in {x1,x2,x3,z,w}{
          \draw[myArc] (\col5)-- node[capac, pos=.1] {$M$}(\col4);
          \draw[myArc] (\col4)-- node[capac, pos=.1] {$M$}(\col3);
          \draw[myArc] (\col3)-- node[capac, pos=.1] {$M$}(\col2);
        }
        \foreach \row in {5,4,3,2}{
          \draw[crossArc] (z\row) to[bend left=18]  (x1\row);
          \draw[crossArc] (z\row) to[bend left=14]  (x2\row);
          \draw[crossArc] (z\row) to[bend left=10]  (x3\row);
          \draw[crossArc] (w\row) to[bend right=22] (x1\row);
          \draw[crossArc] (w\row) to[bend right=18] (x2\row);
          \draw[crossArc] (w\row) to[bend right=14] (x3\row);
        }
        \draw[myArc] (s) to[bend right=70] node[capac, pos=.9] {1} (x22);
        \draw[myArc] (s) to[bend left=8] node[capac, pos=.9] {2} (x23);
        \draw[myArc] (s) to[bend left=35] node[capac, pos=.9] {3} (x24);
        \draw[myArc] (s) to[bend right=75] node[capac, pos=.9] {3} (x32);
        \draw[myArc] (s) to[bend right=40] node[capac, pos=.9] {2} (x33);
        \draw[myArc] (s) to[bend right=60] node[capac, pos=.9] {1} (z2);
        \draw[myArc] (s) to[bend right=30] node[capac, pos=.9] {2} (z3);
        \draw[myArc] (s) to[bend left=25] node[capac, pos=.9] {3} (z4);
        \draw[myArc] (s) to[bend left=60] node[capac, pos=.9] {1} (z5);
        \draw[myArc] (s) to[bend right=50] node[capac, pos=.9] {1} (w2);
        \draw[myArc] (s) to[bend left=8] node[capac, pos=.9] {2} (w3);
        \draw[myArc] (s) to[bend left=70] node[capac, pos=.9] {3} (w4);
        \draw[myArc] (x12.south east) to[bend right=70] node[capac, pos=.20] {1} (t);
        \draw[myArc] (x13) to[bend right=80] node[capac, pos=.07] {6} (t);
        \draw[myArc] (x14) to[bend left=80] node[capac, pos=.1] {9} (t);
        \draw[myArc] (x15) to[bend left=60] node[capac, pos=.05] {3} (t);
        \draw[myArc] (x25) to[bend left=60] node[capac, pos=.1] {1} (t);
        \draw[myArc] (x34) to[bend right=10] node[capac, pos=.05] {3} (t);
        \draw[myArc] (x35) to[bend left=60] node[capac, pos=.05] {3} (t);
        \draw[myArc] (w5) to[bend right=0] node[capac, pos=.1] {1} (t);
        \end{tikzpicture}
        }
        \caption{Phase 2: Construction of $\widehat{G}$.}
        \label{fig:example_IM-All_end-L1}
    \end{subfigure}
    \hfill
    \begin{subfigure}[b]{0.33\textwidth}
        \centering
        \begin{tikzpicture}[scale=0.9,
            myEdge/.style={line width = 1.5pt},     
            state/.style={circle,  minimum size=1em, draw, line width = 1.2pt}]   
            \node (a) at (0,2) [state] {$a$};
            \node (b) at (4,2) [state] {$b$};
            \node (c) at (1,1) [state] {$c$};
            \node (d) at (3,1) [state] {$d$};
            \node (e) at (0,0) [state] {$e$};
            \node (f) at (4,0) [state] {$f$}; 
            \draw [myEdge] (a) to node[midway, above] {0} (b) ; 
            \draw [myEdge] (e) to node[midway, below] {0} (f) ;
            \foreach \u \v \w \pos in {a/c/{\rlap{0}\phantom{3.5..}}/right, c/d/6/above, d/b/3/left, a/e/0/left, e/c/0/right, d/f/3/left, b/f/{\rlap{3}\phantom{3.5..}}/right}{
                \draw [MyOrange!30, line width = 5pt] (\u) to (\v) ; 
                \draw [myEdge] (\u) to node[midway,\pos] {\w} (\v) ; 
            }     
        \end{tikzpicture} 
        \caption{End of Phase 3.}
        \label{fig:example_IM-All_final-L1}
    \end{subfigure}%
    \caption{Illustration of Algorithm~\ref{alg:IM-All-Only-L1} on the graphic matroid in Figure~\ref{fig:example_IM-All_original-L1} for $C=0$; see Section~\ref{sec:preliminaries} for a recall on connected components for the graphic matroid. The set $S_0 = \{ac, ae, ce, cd, db, df, bf\}$ is highlighted in orange. 
    In Phase 1, we find that the components of $M|S_0$ are $K_1 = \{ac, ae, ce\}$, $K_2 = \{cd\}$, and $K_3 = \{db, df, bf\}$. $A_0=  \{ 0,1,3,6 \} \cup \{6, 7 \}$ and hence $a_1=0,a_2=1,a_3=3,a_4=6,a_5=7$. As there are 3 connected components in the graphic matroid and 2 elements in $S\setminus S_0$, the variables are $x_1,x_2, x_3, z_{ab},z_{ef}$. Since $C(B_0,ab)-ab = \{ ac,cd, db \}$ and $C(B_0,ef)-ef = \{ ec,cd,df \}$, it follows that $K(ab)=\{1,2,3\}$ and $K(ef)=\{1,2,3\}$, yielding the constraints $z_{ab}\leq x_i$ and $z_{ef}\leq x_i$ for $i=1,2,3$. We calculate the node weights: for variable $x_1$, the weights of $K_1$ are $\{0,0,1\}$ and thus $\phi_{x_1}(t)= 2|t|+|1-t|$; whose successive differences yield the node weights $\Delta(x_1,2)=\phi_{x_1}(2) - \phi_{x_1}(1)=1, \Delta(x_1,3)= 6, \Delta(x_1,4)= 9, \Delta(x_1,5)= 3$. Similarly for the rest of the nodes, as shown in Figure~\ref{fig:phase1-G}.
    In Phase 2, we obtain the arc-capacitated graph $\widehat{G}$, where $M>51$ and all gray arcs also have capacity $M$. A minimum $s$-$t$ cut is obtained by placing the nodes $U^{\operatorname{cut}}=\{(x_2,2),(x_2,3),(x_2,4),(x_3,2),(x_3,3)\}$ together with $s$ on the source side, and all remaining nodes together with $t$ on the sink side. 
    The set $U^{\operatorname{cut}}$ is precisely the minimum-weight closed set of $G$.    
    The cut has value $13$, corresponding to a closure weight of $-11$. 
    Therefore, the recovered solution is $(x_1,x_2,x_3,z_{ab},z_{ef})=(a_1,a_4,a_3,a_1,a_1)$. 
    Finally, Phase 3 constructs the optimal weight function for the \textsc{IM-All} instance under the $\ell_1$-norm with $\| w-w^* \| = 19$, shown in Figure~\ref{fig:example_IM-All_final-L1}.
    } 
    \label{fig:example_IM-All-L1}
\end{figure}

The formulation above is a separable convex minimization problem with order constraints of the form $z_f\le x_i$. For integral weights, an optimum can be chosen from the finite set of values appearing in the input. This allows us to replace the convex program by a minimum-weight closure problem, and hence by a single minimum $s$-$t$ cut computation. Thus the $\ell_1$ algorithms for \textsc{IM-All} and \textsc{IM-Only} remain fully combinatorial.

Before proving the correctness of the algorithm, we show that we can indeed use the $s$-$t$ minimum cut problem.
\begin{lemma}\label{lem:l1_min_closure}
Let $C\in\{0,1\}$. Then formulation \eqref{eq:l1_shifted_formulation} can be solved by a single $s$-$t$ minimum cut computation on a directed graph with $O(n^2)$ nodes and $O(n^3)$ arcs. Moreover, such a graph can be constructed in $O(n^3)$ time.
\end{lemma}

\begin{proof}
    By Lemma~\ref{lem:l1_values_in_AC}, it is enough to consider solutions whose
    variables take values in $A_C=\{a_1,\ldots,a_m\}$, with $a_1<\cdots<a_m$.
    Since $A_C$ is defined from the weights of the elements of $S$, we have
    $m\le |S|$.
    
    We use the threshold-variable idea from Ahuja, Hochbaum, and
    Orlin~\cite{AhujaHochbaumOrlin2004}, specialized to the values in $A_C$.
    Since our constraints are only $z_f\le x_i$, the construction gives a
    minimum-weight closure instance on a graph $G$; we later reduce this graph to a minimum $s$-$t$ cut instance on a graph $\widehat{G}$. We give the details.
    
    For every variable $v$ in \eqref{eq:l1_shifted_formulation}, and every
    $j=2,\ldots,m$, create a node $(v,j)$. The node $(v,j)$ represents the condition
    $v\ge a_j$. There are $O(|S|)$ variables and $m\le |S|$, so this gives $O(|S|^2)$
    nodes.
    
    We add two types of arcs. First, for every variable $v$ and every
    $j=2,\ldots,m-1$, add
    \[
    (v,j+1)\to (v,j).
    \]
    This enforces that $v\ge a_{j+1}$ implies $v\ge a_j$. Second, for every
    inequality $z_f\le x_i$ and every $j=2,\ldots,m$, add
    \[
    (z_f,j)\to (x_i,j).
    \]
    This enforces that $z_f\ge a_j$ implies $x_i\ge a_j$.
    
    We now assign the node weights. For a variable $v$, define
    \[
    \phi_v(t)=
    \begin{cases}
    \sum_{e\in K_i}|w(e)-t|, & \text{if } v=x_i,\\
    |w(f)+C-t|, & \text{if } v=z_f.
    \end{cases}
    \]
    The node $(v,j)$ receives weight
    \[
    \Delta_{v,j}=\phi_v(a_j)-\phi_v(a_{j-1}).
    \]

    This concludes the description of the graph $G$. 
        
    A closed set selects, for each variable $v$, either no node or a prefix $(v,2),(v,3),\ldots,(v,j)$.
    
    We interpret no selected node as $v=a_1$, and this selected prefix as $v=a_j$.
    The weight selected in the column of $v$ is
    \[
    \sum_{h=2}^j \Delta_{v,h}
    =
    \phi_v(a_j)-\phi_v(a_1).
    \]
    Thus the total weight of the closed set is the objective value of the encoded
    solution, minus the constant $\sum_v\phi_v(a_1)$.
    
    The arcs encode exactly the constraints. The vertical arcs force each column to
    select a prefix. The arcs $(z_f,j)\to(x_i,j)$ force $z_f\le x_i$. Conversely,
    any feasible assignment with values in $A_C$ selects the nodes $(v,j)$ with
    $v\ge a_j$, and these nodes form a closed set. Hence feasible assignments with
    values in $A_C$ correspond exactly to closed sets.
    
    It remains to solve the resulting minimum-weight closure problem on $G$.
    This graph is node-weighted, where each node $(v,j)$ has weight $\Delta_{v,j}$. We reduce it to a minimum $s$-$t$ cut instance using the standard construction. We begin by constructing a graph $\widehat{G}$ as a copy of $G$. Introduce a source vertex $s$ and a sink vertex $t$.
    For each node $u$ with weight
    $\Delta_u<0$, add an arc $s\to u$ of capacity $-\Delta_u$. For each node $u$
    with weight $\Delta_u>0$, add an arc $u\to t$ of capacity $\Delta_u$. 
    Finally, assign every arc enforcing the closure constraints a capacity strictly larger than $\sum_u |\Delta_u|$. This concludes the construction of $\widehat{G}$.

    A finite cut leaves a closed set on the source side. Its capacity differs from
    the weight of that closed set by a constant. Hence a minimum cut gives a
    minimum-weight closed set, and therefore an optimal solution of
    \eqref{eq:l1_shifted_formulation}.

    Summarizing, let $U^{\operatorname{cut}}\cup\{s\}$ be the source side of a minimum
    $s$--$t$ cut in $\widehat{G}$. Then $U^{\operatorname{cut}}$ is the corresponding
    minimum-weight closed set. For each variable $v$, let $j$ be the largest index
    such that $(v,j)\in U^{\operatorname{cut}}$, and set $v=a_j$ (recall that
    $j\ge2$). If no node corresponding to $v$ belongs to $U^{\operatorname{cut}}$,
    set $v=a_1$. This yields an optimal solution to
    \eqref{eq:l1_shifted_formulation}. Consequently, the corresponding optimal
    weight function is obtained by assigning $w^*(e)=x_i$ for every
    $e\in K_i$, $i\in[\ell]$, and $w^*(f)=z_f-C$ for every
    $f\in S\setminus S_0$.
    
    \textit{Construction time}. The graph $G$ has $O(|S|^2)$ nodes. The vertical arcs contribute $O(|S|^2)$ arcs. There
    are at most $O(|S|^2)$ inequalities of the form $z_f\le x_i$, and each one gives
    one arc for each value of $A_C$. Since $|A_C|\le |S|$, these arcs contribute
    $O(|S|^3)$ arcs. The reduction from closure to minimum cut adds at most one arc
    incident to $s$ or $t$ per node. Thus the final graph $\widehat{G}$ has $O(|S|^2)$ nodes and $O(|S|^3)$ arcs. The same bound covers the construction time.
\end{proof}

We present the main result. When $w$ is integer-valued, Lemma~\ref{lem:l1_values_in_AC} implies that the problems admit an optimal solution with integer variable values (and hence an integer-valued optimal weight function).
\begin{thm}\label{thm:algo-IM-All-Only-L1}
    Algorithm~\ref{alg:IM-All-Only-L1} determines an optimal weight function
    for \textsc{IM-All} instances $(M=(S,\cI),S_0,w,\|\cdot\|_1)$ and for
    \textsc{IM-Only} instances of the same form with $w\in\mathbb Z^S$
    using $O(n^3 + \varphi)$ operations and $O(nr)$ independence-oracle calls,
    where $\varphi$ is the time required to compute one minimum $s$-$t$ cut on a graph with $O(n^2)$ vertices and $O(n^3)$ arcs.
\end{thm}
\begin{proof}
    The correctness of the algorithm follows from
    Lemmas~\ref{lem:separable_convex_prog},
    \ref{lem:l1_values_in_AC}, and
    \ref{lem:l1_min_closure}. Indeed,
    Lemma~\ref{lem:separable_convex_prog} gives an equivalent convex
    formulation of the problems,
    Lemma~\ref{lem:l1_values_in_AC} shows that it suffices to consider
    solutions whose variables take values in $A_C$, and
    Lemma~\ref{lem:l1_min_closure} reduces the resulting optimization problem
    to a single $s$-$t$ minimum cut computation. 
    
    \textit{Running time.}
    Detecting whether $M$ has a basis contained in $S_0$ can be done by running the greedy algorithm, giving priority to elements in $S_0$, using $O(n\log n)$ elementary operations and $O(n)$ independence-oracle calls. If such a basis exists, the same computation returns a basis $B_0\subseteq S_0$. As mentioned earlier in Section~\ref{sec:preliminaries}, the connected components of $M|S_0$ can be determined using $O(nr)$ independence-oracle calls~\cite{krogdahl1977dependence}. For each $f\in S\setminus S_0$, Remark~\ref{rem:findingFundCircuit} computes $C(B_0,f)$ using $O(r)$ independence-oracle calls, and $K(f)$ is obtained by recording the components that intersect $C(B_0,f)\setminus\{f\}$. Thus all sets $K(f)$ can be computed using $O(r|S\setminus S_0|)=O(nr)$ independence-oracle calls. By Lemma~\ref{lem:l1_min_closure}, the auxiliary graph can be constructed in $O(n^3)$ time. The optimal solution is then obtained by computing a minimum $s$--$t$ cut in this graph.
\end{proof}

\section{An auxiliary problem: \textsc{IM-Outside}}
\label{sec:relaxed-not-IM-Exists}

The negated variants require a second basic operation: making at least one maximum-weight basis lie outside $S_0$. We isolate this task as an auxiliary problem, \textsc{IM-Outside}. This problem is a relaxation of \textsc{IM-Not-Exists}: bases contained in $S_0$ are still allowed to be optimal, but at least one optimal basis must use an element outside $S_0$. This relaxation is exactly the separation step needed later for \textsc{IM-Not-Exists} and \textsc{IM-Not-Only}. Unlike \textsc{IM-Not-Exists}, it does not require strict separation, and therefore we state it over real-valued weight functions. It also arises naturally in our setting and, to the best of our knowledge, has not been studied before.

\problemdef{Inverse Matroid Outside (IM-Outside)}
    {A matroid $M=(S,\mathcal{I})$, a subset $S_0\subseteq S$ such that some basis of $M$ is not contained in $S_0$, a weight function $w\in\bR^S$, and an objective function $\|\cdot\|$ defined on $\bR^S$.}
    {Find a weight function $\widehat{w} \in \bR^S$ such that 
    \begin{enumerate}[label=(\alph*),topsep=1px,partopsep=2px]\itemsep0em
        \item \label{it:feas_cond_relaxed} there exists a basis not contained in $S_0$ that is of maximum $\widehat{w}$-weight, and
        \item \label{it:opt_cond_relaxed} $\|w-\widehat{w}\|$ is minimized.
    \end{enumerate}}

A weight function is \emph{feasible} if condition~\ref{it:feas_cond_relaxed} holds. Note that \textsc{IM-Outside} is always feasible: set $w'$ such that $w'(s)=0$ for all $s \in S_0$ and $w'(s)=1$ for all $s \in S \setminus S_0$.

\begin{remark}\label{rem:relaxedProb_feasibility}
    Given a weight function $w'$, we can verify in polynomial time whether $w'$ is feasible for \textsc{IM-Outside}. To do so, run the greedy algorithm while breaking ties in favor of elements in $S\setminus S_0$. Then $w'$ is feasible if and only if the resulting basis is not contained in $S_0$.
\end{remark}

\subsection{Structural Properties}
We first characterize the structure of an optimal solution under the $\ell_\infty$-norm.
Although the lemma follows from the correspondence between \textsc{IM-Outside} and \textsc{IM-Exists} on the dual matroid, we give a direct proof, which is independent of this reduction and more transparent.

\begin{lemma}\label{lem:relaxedProb_Nicer_special_weight_2}
    Let $\widehat{w}$ be a feasible solution for the \textsc{IM-Outside} instance $(M=(S,\cI),S_0,w,\|\cdot\|_\infty)$, and let $\delta=\|w-\widehat{w}\|_\infty$. Then there exists a feasible solution $w'$ satisfying the following:
    \begin{enumerate}[label=(\alph*)]\itemsep0em
        \item $\|w-w'\|_\infty=\delta$, and 
         \item $w'=w + \delta\cdot \left( \chi_{ S \setminus S_0} -\chi_{S_0} \right)$.
    \end{enumerate}    
     Furthermore, if $w$ is not feasible and $B_0\subseteq S_0$ is a maximum $w$-weight basis, then the optimal value is $\delta^* = \min\{ (w(e)-w(f))/2 \colon f\in S\setminus S_0,\ e\in C(B_0,f)\setminus\{f\}\}$. In particular, $\delta^*$ is half-integral if $w\in\mathbb Z^S$.
\end{lemma}
\begin{proof}
    If $w$ is feasible, the lemma holds trivially. So assume that $w$ is not feasible, i.e., every maximum $w$-weight basis of $M$ is contained in $S_0$. Let $B_0\subseteq S_0$ be a basis of maximum $w$-weight; then $B_0$ is a maximum-weight basis of $M$. By Proposition~\ref{prop:charact_optimum_basis}, for every pair $f\notin B_0$ and $e\in C(B_0,f)\setminus\{f\}$, we have $w(e)\geq w(f)$.

    Let $\delta^*=\min\{(w(e)-w(f))/2\colon f\in S\setminus S_0,\ e\in C(B_0,f)\setminus\{f\}\}$, and for $\lambda\geq 0$ write $w_\lambda\coloneqq w+\lambda\cdot(\chi_{S\setminus S_0}-\chi_{S_0})$. Let $(e',f')\in\argmin\{(w(e)-w(f))/2\colon f\in S\setminus S_0,\ e\in C(B_0,f)\setminus\{f\}\}$, so $w(e')-w(f')=2\delta^*$.
    
   We claim that $w_\lambda$ is feasible for every $\lambda\geq\delta^*$. First, for any basis $B\subseteq S_0$, all $r$ elements of $B$ lie in $S_0$ and are shifted by $-\lambda$, so $w_\lambda(B)=w(B)-\lambda r$; this is a constant shift of $w(B)$ and since $B_0$ maximizes $w(B)$ among bases $B\subseteq S_0$, it also maximizes $w_\lambda(B)$ among them:
    \[
    w_\lambda(B_0)\geq w_\lambda(B)\qquad\text{for every basis }B\subseteq S_0. \tag{1}
    \]
    Second, since $e'\in C(B_0,f')\setminus\{f'\}$, the set $B_0-e'+f'$ is a basis, and
    \[
    w_\lambda(B_0-e'+f')-w_\lambda(B_0)=w_\lambda(f')-w_\lambda(e')=(w(f')+\lambda)-(w(e')-\lambda)=2\lambda-\big(w(e')-w(f')\big)=2\lambda-2\delta^*,
    \]
    which is $\geq 0$ whenever $\lambda\geq\delta^*$. Hence
    \[
    w_\lambda(B_0-e'+f')\geq w_\lambda(B_0)\qquad\text{for every }\lambda\geq\delta^*. \tag{2}
    \]
    Combining (1) and (2), $w_\lambda(B_0-e'+f')\geq w_\lambda(B)$ for every basis $B\subseteq S_0$. Since $B_0-e'+f'\not\subseteq S_0$ (as $f'\in S\setminus S_0$), it follows that a basis of maximum $w_\lambda$-weight over all bases of $M$ cannot be contained in $S_0$: either $B_0-e'+f'$ itself attains this maximum, or some other basis exceeds the weight $w_\lambda(B_0-e'+f')$ and therefore also exceeds the weight of every $S_0$-basis, hence is itself not contained in $S_0$. Thus $w_\lambda$ is feasible for every $\lambda\geq\delta^*$.

    We claim that every feasible weight function is at $\ell_\infty$-distance at least $\delta^*$ from $w$. Suppose, for a contradiction, that there exists a feasible weight function $w''$ with
    $ \|w-w''\|_\infty<\delta^*. $
    Let $B'$ be a maximum $w''$-weight basis that is not contained in $S_0$. Choose an element $f\in B'\setminus S_0$. By Proposition~\ref{prop:symmetric-exchange}, there exists $e\in B_0\setminus B'$ such that both $B_0-e+f$ and $B'-f+e$ are bases.
    Since $B'$ is of maximum $w''$-weight,
    $ w''(B')\geq w''(B'-f+e), $
    and therefore
    $ w''(f)\geq w''(e). $
    On the other hand, by the definition of $\delta^*$ and given that $e\in C(B_0,f)\setminus\{f\}$,
    $ w(e)-w(f)\geq 2\delta^*. $
    Moreover, since $\|w-w''\|_\infty<\delta^*$, we have
    $ w''(e)>w(e)-\delta^* $
    and
    $ w''(f)<w(f)+\delta^*. $
    Combining these inequalities gives
    $ w''(e)>w(e)-\delta^*\geq w(f)+\delta^*>w''(f), $
    contradicting $w''(f)\geq w''(e)$.

    Hence every feasible solution has $\ell_\infty$-distance at least $\delta^*$ from $w$; in particular, applying this to the feasible solution $\widehat w$ of the statement, $\delta=\|w-\widehat w\|_\infty\geq \delta^*$.

    Setting $w'\coloneqq w_\delta=w+\delta\cdot(\chi_{S\setminus S_0}-\chi_{S_0})$, since $\delta\geq\delta^*$, $w'$ is feasible by the first part, so it satisfies (b) by construction, and, since $w'-w$ equals $\delta$ on $S\setminus S_0$ and $-\delta$ on $S_0$, $\|w-w'\|_\infty=\delta$, so it satisfies (a) as well.

    Finally, since $w_{\delta^*}$ is feasible with $\|w-w_{\delta^*}\|_\infty=\delta^*$, and every feasible solution is at $\ell_\infty$-distance at least $\delta^*$ from $w$, the optimal value equals $\delta^*$. If $w\in\bZ^S$, every term $w(e)-w(f)$ is an integer, so each term $(w(e)-w(f))/2$ in the definition of $\delta^*$ is a half-integer, and hence $\delta^*$ is half-integral.
  \end{proof}

We obtain a similar statement under the $\ell_1$-norm.
\begin{lemma}\label{lem:relaxed-nice-form-L1}
Let $\widehat{w}$ be a feasible solution for the \textsc{IM-Outside}
instance $(M=(S,\cI),S_0,w,\|\cdot\|_1)$, and let
$\delta=\|w-\widehat{w}\|_1$. Suppose that $w$ has a maximum-weight
basis $B_0\subseteq S_0$, and let $B_{\max}$ be a maximum
$\widehat{w}$-weight basis not contained in $S_0$.

Then there exists a feasible solution $w'$ satisfying the following:
\begin{enumerate}[label=(\alph*)]\itemsep0em
    \item $\|w-w'\|_1=\delta$, and 
     \item $w'=w + \delta\cdot \chi_f$ for some element $f \in B_{\max}\setminus S_0$.
\end{enumerate}
Furthermore, $f$ is contained in a basis of maximum $w'$-weight. 
Moreover, if no maximum $\widehat w$-weight basis is contained in $S_0$, then the same choice of $f$ satisfies $f\notin\cl(S_0[w'\ge w'(f)])$.
\end{lemma}
\begin{proof}
If $\delta=0$, the statement clearly holds. Suppose $\delta>0$.
By Proposition~\ref{prop:exchange}, there exists a bijection  $\varphi\colon B_0\setminus B_{\max}\to B_{\max}\setminus B_0$ such that $B_0-e+\varphi(e)$ is a basis for every $e\in B_0\setminus B_{\max}$. Therefore,
     \[ w(B_0)- w(B_{\max}) = \sum_{e\in B_0\setminus B_{\max}} w(e) - w(\varphi(e)).\]
    As $B_0$ has maximum $w$-weight and $\varphi(e) \in C(B_0,\varphi(e))$, Proposition~\ref{prop:charact_optimum_basis} implies $w(e) - w(\varphi(e))\geq 0$ for every exchanged pair. Hence, for every pair $e\in B_0\setminus B_{\max}, \varphi(e) \in  B_{\max}\setminus B_0$, we have $w(e) - w(\varphi(e)) \leq  w(B_0)- w(B_{\max})$. 
    On the other hand, we show that $w(B_0)- w(B_{\max})\leq \delta$. This follows from feasibility of $\widehat{w}$, since $\widehat{w}(B_0) \leq \widehat{w}(B_{\max})$ implies $0\leq \widehat{w}(B_{\max}) - \widehat{w}(B_0)$. Adding $w(B_0) - w(B_{\max})$ to both sides yields
    \[
       w(B_0) - w(B_{\max}) \leq  \sum_{s \in B_{\max}} (\widehat{w}(s) - w(s)) +  \sum_{s \in B_{0}} (w(s) - \widehat{w}(s)) \leq \sum_{s\in S} |\widehat{w}(s) - w(s)| = \delta.
     \]
    Choose an $e\in B_0\setminus B_{\max}$ for which $\varphi(e)\in S\setminus S_0$; such an $e$ exists since $B_{\max}\not\subseteq S_0$. By the above, $w(e)-w(\varphi(e))\le \delta$. 
    Let $f=\varphi(e)$, and consider $w'=w+\delta\cdot\chi_f$. We show that $w'$ is feasible.
    Let $i$ be the iteration at which the Greedy algorithm (with respect to $w$, breaking ties in favor of $B_0$) selects $e$. Since $B_0-e+f$ is a basis, we have $f\notin \cl(B_0-e)$. Therefore, before iteration $i$, the set selected by the Greedy algorithm is contained in $B_0-e$, and thus $f$ is not in its closure. This set spans $S[w>w(e)]$, so $f\notin\cl(S[w>w(e)])$. Moreover,
    \[
    w'(f)=w(f)+\delta\ge w(e)=w'(e).
    \]
   If no maximum $\widehat w$-weight basis is contained in $S_0$, then the
inequality above is strict. Set $A=S_0[w'\ge w'(f)]$. Since $w'$ agrees
with $w$ on $S_0$ and $B_0$ is a maximum $w$-weight basis of $M|S_0$,
the set $B_0\cap A$ spans $A$. Also, $e\notin A$ and $B_0-e+f$ is a basis,
so $f\notin\cl(B_0-e)$. Hence $
    f\notin\cl(A)=\cl(S_0[w'\ge w'(f)]).$

Run the Greedy algorithm with respect to $w'$, breaking ties in favor of $f$. Before $f$ is considered, only elements in $S[w'>w'(f)]$ have been processed. Since $w'$ differs from $w$ only at $f$ and $w'(f)\geq w(e)$, this set is contained in $S[w>w(e)]$. Therefore, the elements selected before $f$ do not span $f$, and the Greedy algorithm selects $f$. 
 Let $B'$ be the basis returned by the algorithm. Since $f\in B'$ and $f\notin S_0$, the basis $B'$ is not contained in $S_0$. Therefore, $w'$ is feasible. Finally, $\|w-w'\|_1=\delta$
    which completes the proof.
\end{proof}
\subsection{Algorithm for \textsc{IM-Outside} under \texorpdfstring{$\ell_\infty$}{l-infty}-norm}

Lemma~\ref{lem:relaxedProb_Nicer_special_weight_2} gives an algorithm, presented as Algorithm~\ref{algo:relaxed-NOT-IM-Exists}. At a high level, the algorithm  finds a
pair $f \in S \setminus S_0$, $e \in C(B_0, f)\setminus\{f\}$, where $B_0$ is a
maximum $w$-weight basis inside $S_0$, minimizing $w(e) - w(f)$. An illustration is shown in Figure~\ref{fig:example_IM-OUTSIDE-LINFTY}.

\begin{algorithm}[th!]
\caption{Combinatorial algorithm for \textsc{IM-Outside} under the $\ell_\infty$-norm}\label{algo:relaxed-NOT-IM-Exists}
\DontPrintSemicolon
\KwIn{An instance $( M=(S, \cI),  S_0, w, \|\cdot\|_\infty )$ of \textsc{IM-Outside}.}
\KwOut{An optimal weight function $\widehat{w}$.}
   \lIf {$w$ is feasible} {
        Set $\widehat{w}\coloneqq w$.
    }
    \Else{
        Let $B_0\subseteq S_0$ be a basis of maximum $w$-weight. \;
        Let $\delta= \min\{ (w(e)-w(f))/2 \colon f\in S\setminus S_0,\ e\in C(B_0,f)\setminus\{f\}\}$. \;
        Set $\widehat{w}\coloneqq w + \delta\cdot \left( \chi_{ S \setminus S_0} -\chi_{S_0} \right)$. \;
        }
    \Return{$\widehat{w}$}\;
\end{algorithm}

\begin{figure}[ht!]
\centering
    \begin{subfigure}[b]{0.33\textwidth}
        \centering
        \begin{tikzpicture}[scale=0.9,
            myEdge/.style={line width = 1.5pt},
            myEdgeZigzag/.style={line width = 1.5pt, decoration = {zigzag, segment length = 4pt, amplitude = 1pt},decorate},
            state/.style={circle,  minimum size=1em, draw, line width = 1.2pt}]   
            \node (a) at (0,2) [state] {$a$};
            \node (b) at (4,2) [state] {$b$};
            \node (c) at (1,1) [state] {$c$};
            \node (d) at (3,1) [state] {$d$};
            \node (e) at (0,0) [state] {$e$};
            \node (f) at (4,0) [state] {$f$}; 
            \draw [myEdge] (a) to node[midway, above] {3} (b) ; 
            \draw [myEdge] (e) to node[midway, below] {2} (f) ;
            \foreach \u \v \w \pos in { a/c/4/right, c/d/10/above, d/b/5/left,  e/c/4/right, d/f/5/left}{
                \draw [MyOrange!30, line width = 5pt] (\u) to (\v) ; 
                \draw [myEdgeZigzag] (\u) to node[midway,\pos] {\w} (\v) ; 
            } 
            \foreach \u \v \w \pos in {a/e/4/left,b/f/5/right}{
                \draw [MyOrange!30, line width = 5pt] (\u) to (\v) ; 
                \draw [myEdge] (\u) to node[midway,\pos] {\w} (\v) ; 
            }
        \end{tikzpicture}
        \caption{Original instance.}
        \label{fig:example_IM-outside_original}
    \end{subfigure}%
    \begin{subfigure}[b]{0.33\textwidth}
        \centering
        \begin{tikzpicture}[scale=0.9,
            myEdge/.style={line width = 1.5pt},
            myEdgeZigzag/.style={line width = 1.5pt, decoration = {zigzag, segment length = 4pt, amplitude = 1pt},decorate},
            state/.style={circle,  minimum size=1em, draw, line width = 1.2pt}]   
            \node (a) at (0,2) [state] {$a$};
            \node (b) at (4,2) [state] {$b$};
            \node (c) at (1,1) [state] {$c$};
            \node (d) at (3,1) [state] {$d$};
            \node (e) at (0,0) [state] {$e$};
            \node (f) at (4,0) [state] {$f$}; 
            \draw [myEdge] (a) to node[midway, above] {3.5} (b) ; 
            \draw [myEdge] (e) to node[midway, below] {2.5} (f) ;
            \foreach \u \v \w \pos in { a/c/3.5/right, c/d/9.5/above, d/b/4.5/left,  e/c/3.5/right, d/f/4.5/left, a/e/3.5/left, b/f/4.5/right}{
                \draw [MyOrange!30, line width = 5pt] (\u) to (\v) ; 
                \draw [myEdge] (\u) to node[midway,\pos] {\w} (\v) ; 
            } 
        \end{tikzpicture}
        \caption{Optimal weight function.}
        \label{fig:IM-OUSIDE-FINAL}
    \end{subfigure}%
    \caption{Illustration of Algorithm~\ref{algo:relaxed-NOT-IM-Exists} on the graphic matroid in Figure~\ref{fig:example_IM-outside_original}. The basis $B_0$ is shown in zigzagged lines. Thus $\delta=1/2$, where the minimum is attained by the pair $ab$ and $ac$. The optimal weight function $\widehat{w}$ is shown in Figure~\ref{fig:IM-OUSIDE-FINAL}, and a basis of maximum $\widehat{w}$-weight not contained in $S_0$ is the spanning tree $B_0 - ac + ab$.
    } 
    \label{fig:example_IM-OUTSIDE-LINFTY}
\end{figure}

\begin{thm}\label{thm:algo-IM-Outside-infty}
    Algorithm~\ref{algo:relaxed-NOT-IM-Exists} determines an optimal weight function $\widehat{w}$ for the \textsc{IM-Outside} instance $( M=(S, \cI),  S_0, w, \|\cdot\|_\infty )$ with $w\in\bR^S$ 
    using $O(n\log n)$ operations and $O(n+r \cdot |S\setminus S_0|)$ independence-oracle calls.
\end{thm}
\begin{proof}
    The correctness of the algorithm follows from Lemma~\ref{lem:relaxedProb_Nicer_special_weight_2}.

    \textit{Running time.} First, checking feasibility of $w$ is done using the Greedy algorithm in $O(|S|\log |S|)$ time and $O(|S|)$ independence-oracle calls by Remark~\ref{rem:relaxedProb_feasibility}. If $w$ is not feasible, the last feasibility check returned $B_0\subseteq S_0$ of maximum weight. For each $f\in S\setminus S_0$, Remark~\ref{rem:findingFundCircuit} computes $C(B_0,f)$ using $O(|B_0|)$ independence-oracle calls; repeating this for all $f\in S\setminus S_0$ and taking the minimum of $(w(e)-w(f))/2$ over all $e\in C(B_0,f)\setminus\{f\}$ yields $\delta$ using $O(|B_0|\cdot|S\setminus S_0|)$ independence-oracle calls. Together with the initial feasibility check, the algorithm uses $O(n+r\cdot|S\setminus S_0|)$ independence-oracle calls.
\end{proof}

Lemma~\ref{lem:relaxedProb_Nicer_special_weight_2} has another interesting consequence. Namely, if integrality constraints are added on the weight function $\widehat{w}$ in \textsc{IM-Outside}, then an optimal solution can be obtained from an optimal fractional solution in a straightforward manner.
An illustration is shown in Figure~\ref{fig:example_IM-OUTSIDE-INTEGRAL}.

\begin{cor}\label{cor:rine}
    For the \textsc{Integral-IM-Outside} problem under the $\ell_\infty$-norm, where $w\in\bZ^S$ and $\widehat{w}$ is required to be integer-valued, an analogous approach works: it suffices to replace  $\delta$ by its ceiling.
\end{cor}
\begin{figure}[ht!]
\centering
    \begin{subfigure}[b]{0.33\textwidth}
        \centering
        \begin{tikzpicture}[scale=0.9,
            myEdge/.style={line width = 1.5pt},
            myEdgeZigzag/.style={line width = 1.5pt, decoration = {zigzag, segment length = 4pt, amplitude = 1pt},decorate},
            state/.style={circle,  minimum size=1em, draw, line width = 1.2pt}]   
            \node (a) at (0,2) [state] {$a$};
            \node (b) at (4,2) [state] {$b$};
            \node (c) at (1,1) [state] {$c$};
            \node (d) at (3,1) [state] {$d$};
            \node (e) at (0,0) [state] {$e$};
            \node (f) at (4,0) [state] {$f$}; 
            \draw [myEdge] (a) to node[midway, above] {4} (b) ; 
            \draw [myEdge] (e) to node[midway, below] {3} (f) ;
            \foreach \u \v \w \pos in { a/c/3/right, c/d/9/above, d/b/4/left,  e/c/3/right, d/f/4/left, a/e/3/left, b/f/4/right}{
                \draw [MyOrange!30, line width = 5pt] (\u) to (\v) ; 
                \draw [myEdge] (\u) to node[midway,\pos] {\w} (\v) ; 
            } 
        \end{tikzpicture}
    \end{subfigure}%
    \caption{Illustration of Corollary~\ref{cor:rine} on the graphic matroid in Figure~\ref{fig:example_IM-outside_original}, where the optimal value is $1$ and the optimal weight function $\widehat{w}$ is shown, and a basis of maximum $\widehat{w}$-weight not contained in $S_0$ is the spanning tree $B_0 - ac + ab$.
    } 
    \label{fig:example_IM-OUTSIDE-INTEGRAL}
\end{figure}
\subsection{Algorithm for \textsc{IM-Outside} under \texorpdfstring{$\ell_1$}{l-1}-norm}

The proof of Lemma~\ref{lem:relaxed-nice-form-L1} yields a straightforward algorithm, shown in Algorithm~\ref{algo:IM-OUTSIDE-L1}: 
find a maximum $w$-weight basis $B_0\subseteq S_0$, then find the pair $e\in B_0$, 
$f\in S\setminus S_0$ with $e\in C(B_0,f)\setminus\{f\}$ minimizing $w(e)-w(f)$. An illustration is shown in Figure~\ref{fig:example_IM-OUTISDE-L1}.
\begin{algorithm}[th!]
\caption{Algorithm for \textsc{IM-Outside} under the $\ell_1$-norm}\label{algo:IM-OUTSIDE-L1}
\DontPrintSemicolon
\KwIn{An instance $( M=(S, \cI),  S_0, w, \|\cdot\|_1)$ of \textsc{IM-Outside}.}
\KwOut{An optimal weight for \textsc{IM-Outside}.}
    \lIf {$w$ is feasible} {
        Set $\widehat{w}\coloneqq w$.
    }
    \Else{
        Let $B_0\subseteq S_0$ be a basis of maximum $w$-weight. \;
        Let $(e^*,f^*)= \argmin\{ w(e)-w(f) \colon f\in S\setminus S_0,\ e\in C(B_0,f)\setminus\{f\}\}$. \;
        Set $\delta\coloneqq w(e^*)-w(f^*)$. \;
        Set $\widehat{w}\coloneqq w+\delta\cdot\chi_{f^*}$.
        }
    \Return{$\widehat{w}$}\;
\end{algorithm}

\begin{figure}[ht!]
\centering
    \begin{subfigure}[b]{0.33\textwidth}
        \centering
        \begin{tikzpicture}[scale=0.9,
            myEdge/.style={line width = 1.5pt},
            myEdgeZigzag/.style={line width = 1.5pt, decoration = {zigzag, segment length = 4pt, amplitude = 1pt},decorate},
            state/.style={circle,  minimum size=1em, draw, line width = 1.2pt}]   
            \node (a) at (0,2) [state] {$a$};
            \node (b) at (4,2) [state] {$b$};
            \node (c) at (1,1) [state] {$c$};
            \node (d) at (3,1) [state] {$d$};
            \node (e) at (0,0) [state] {$e$};
            \node (f) at (4,0) [state] {$f$}; 
            \draw [myEdge] (a) to node[midway, above] {3} (b) ; 
            \draw [myEdge] (e) to node[midway, below] {2} (f) ;
            \foreach \u \v \w \pos in { a/c/4/right, c/d/10/above, d/b/5/left,  e/c/4/right, d/f/5/left}{
                \draw [MyOrange!30, line width = 5pt] (\u) to (\v) ; 
                \draw [myEdgeZigzag] (\u) to node[midway,\pos] {\w} (\v) ; 
            } 
            \foreach \u \v \w \pos in {a/e/4/left,b/f/5/right}{
                \draw [MyOrange!30, line width = 5pt] (\u) to (\v) ; 
                \draw [myEdge] (\u) to node[midway,\pos] {\w} (\v) ; 
            }
        \end{tikzpicture}
        \caption{Original instance.}
        \label{fig:example_IM-outside_original-L1}
    \end{subfigure}%
    \begin{subfigure}[b]{0.33\textwidth}
        \centering
        \begin{tikzpicture}[scale=0.9,
            myEdge/.style={line width = 1.5pt},
            myEdgeZigzag/.style={line width = 1.5pt, decoration = {zigzag, segment length = 4pt, amplitude = 1pt},decorate},
            state/.style={circle,  minimum size=1em, draw, line width = 1.2pt}]   
            \node (a) at (0,2) [state] {$a$};
            \node (b) at (4,2) [state] {$b$};
            \node (c) at (1,1) [state] {$c$};
            \node (d) at (3,1) [state] {$d$};
            \node (e) at (0,0) [state] {$e$};
            \node (f) at (4,0) [state] {$f$}; 
            \draw [myEdge] (a) to node[midway, above] {4} (b) ; 
            \draw [myEdge] (e) to node[midway, below] {2} (f) ;
            \foreach \u \v \w \pos in { a/c/4/right, c/d/10/above, d/b/5/left,  e/c/4/right, d/f/5/left, a/e/4/left, b/f/5/right}{
                \draw [MyOrange!30, line width = 5pt] (\u) to (\v) ; 
                \draw [myEdge] (\u) to node[midway,\pos] {\w} (\v) ; 
            } 
        \end{tikzpicture}
        \caption{Optimal weight function.}
        \label{fig:IM-OUSIDE-FINAL-L1}
    \end{subfigure}%
    \caption{Illustration of Algorithm~\ref{algo:IM-OUTSIDE-L1} on the graphic matroid in Figure~\ref{fig:example_IM-outside_original-L1}. The basis $B_0$ is shown in zigzagged lines. Thus $\delta=1$, where the minimum is attained by the pair $ab$ and $ac$. The optimal weight function $\widehat{w}$ is shown in Figure~\ref{fig:IM-OUSIDE-FINAL-L1}, and a basis of maximum $\widehat{w}$-weight not contained in $S_0$ is the spanning tree $B_0 - ac + ab$.
    } 
    \label{fig:example_IM-OUTISDE-L1}
\end{figure}

\begin{thm}\label{thm:algo-IM-Outside-L1}
     Algorithm~\ref{algo:IM-OUTSIDE-L1} determines an optimal weight function $\widehat{w}$ for the \textsc{IM-Outside} instance $( M=(S, \cI),  S_0, w, \|\cdot\|_1 )$ with $w\in\bR^S$
     using $O(n\log n)$ operations and $O(n+r\cdot|S\setminus S_0|)$ independence-oracle calls.
     Furthermore, if $w$ is an integer-valued weight function, then the optimal solution is also integer-valued.
\end{thm}
\begin{proof}
    The correctness of the algorithm follows from Lemma~\ref{lem:relaxed-nice-form-L1}.
    The integrality claim follows immediately.
   
   \textit{Running time.} First, checking feasibility of $w$ is done using the Greedy algorithm in $O(|S|\log |S|)$ time and $O(|S|)$ independence-oracle calls by Remark~\ref{rem:relaxedProb_feasibility}.
   If $w$ is not feasible, the last feasibility check returned $B_0\subseteq S_0$ of maximum weight. 
   For each $f\in S\setminus S_0$, Remark~\ref{rem:findingFundCircuit} computes $C(B_0,f)$ using $O(|B_0|)$ independence-oracle calls; repeating this for all $f\in S\setminus S_0$ and taking the minimum of $w(e)-w(f)$ over all $e\in C(B_0,f)\setminus\{f\}$ yields $\delta$ using $O(|B_0|\cdot|S\setminus S_0|)$ independence-oracle calls. Together with the initial feasibility check, the algorithm uses $O(n+r\cdot|S\setminus S_0|)$ independence-oracle calls.
\end{proof}

\section{\textsc{Inverse Matroid Not Exists}}
\label{sec:not-IM-Exists}

We now turn to the first negated variant, \textsc{IM-Not-Exists}, the negation of \textsc{IM-Exists}. The goal is to perturb the weights so that no basis contained in $S_0$ is maximum-weight. This is stronger than \textsc{IM-Outside}: it is not enough to create one optimal basis outside $S_0$; all optimal bases contained in $S_0$ must be eliminated.

Related one-sided versions of partial inverse matroid optimization were considered in~\cite{li2016algorithm,zhang2016partialmatroid}. In those works, the prescribed set is an independent set that has to be contained in a maximum-weight basis, and the weights are allowed to move only in one direction, either only by increases or only by decreases. Our setting is different in two ways: the set $S_0$ is an arbitrary subset of the ground set, and we allow both increases and decreases. This makes the negated variants more flexible, but it also introduces the strict-separation issue already encountered for \textsc{IM-Only}: eliminating every optimal basis contained in $S_0$ requires a strict inequality that, over real-valued weights, need not be attained by any feasible solution.

We resolve this by restricting attention to integer-valued weights, under which the infimum defining $\delta^*$ is always attained. Over the reals, this need not hold: let $M$ be the uniform matroid of rank one on $\{a,b\}$, let $S_0=\{a\}$, and let $w(a)=w(b)=0$. There is no feasible solution with deviation zero. However, for every $\delta>0$, the weight function defined by $w^*(a)=0$ and $w^*(b)=\delta$ is feasible and has deviation $\|w-w^*\|_1=\delta$ and $\|w-w^*\|_\infty=\delta$.

\problemdef{Inverse Matroid Not Exists (\textsc{IM-Not-Exists})}
    {A matroid $M=(S,\mathcal{I})$, a subset $S_0\subseteq S$ such that some basis of $M$ is not contained in $S_0$, a weight function $w\in\bZ^S$, and an objective function $\|\cdot\|$ defined on $\bR^S$.}
    {Find a weight function $w^* \in \bZ^S$ such that 
    \begin{enumerate}[label=(\alph*),topsep=1px,partopsep=2px]\itemsep0em
        \item \label{it:feas_cond_not-IM-Exists} none of the bases contained in $S_0$ is of maximum $w^*$-weight, and
        \item \label{it:opt_cond_not-IM-Exists} $\| w - w^* \|$ is minimized.
    \end{enumerate}}

A weight function is \emph{feasible} if condition~\ref{it:feas_cond_not-IM-Exists} holds. Note that \textsc{IM-Not-Exists} is always feasible: set $w'$ such that $w'(s)=0$ for all $s \in S_0$ and $w'(s)=1$ for all $s \in S \setminus S_0$. Here we point out that the condition on $S_0$ allows us to avoid infeasible instances. In this case, we observe the following. We may first delete loops from $S\setminus S_0$ without changing the optimum: loops belong to no basis, so their weights do not affect the maximum-weight bases, and in an optimal solution their weights are left unchanged. If this leaves $S\setminus S_0=\emptyset$, then every basis is contained in $S_0$, and the instance is infeasible.

\begin{remark}\label{rem:IM-NOT-EXISTS-feasib}
    Given a weight function $w'$, we can verify in polynomial time whether $w'$ is feasible for \textsc{IM-Not-Exists}. To do so, run the greedy algorithm while breaking ties in favor of elements in $S_0$. Then $w'$ is feasible if and only if the resulting basis is not contained in $S_0$.
\end{remark}
\subsection{IM-Not-Exists under \texorpdfstring{$\ell_\infty$}{l-infinity}-norm}
There are three cases. The first two are immediate. In the remaining case, we reduce the problem to \textsc{IM-Outside}. We start with the immediate cases.
If $S_0$ does not contain a maximum $w$-weight basis, then $w$ is already feasible. The second case is stated below.

\begin{lemma}\label{lem:Not-IM-Exists_easy_case}
    Let $(M=(S,\cI),S_0,w,\|\cdot\|_\infty)$ be an instance of \textsc{IM-Not-Exists} and assume that $M$ has a maximum $w$-weight basis contained in $S_0$, and also one not contained in $S_0$. Then $w^*=w-\chi_{S_0}+\chi_{S\setminus S_0}$ is an optimal solution for the problem.
\end{lemma}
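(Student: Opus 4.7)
The plan is to first lower-bound $\delta^*$ using integrality, and then verify that the explicit weight function $w^{*} = w - \chi_{S_0} + \chi_{S\setminus S_0}$ is feasible; since $\|w - w^{*}\|_\infty = 1$, this immediately gives optimality.

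For the lower bound, I would observe that the hypothesis ``$M$ has a maximum $w$-weight basis contained in $S_0$'' says precisely that $w$ itself is infeasible for \textsc{IM-Not-Exists}. Hence every feasible weight function differs from $w$ in at least one coordinate, and since both are integer-valued, the $\ell_\infty$-distance must be at least $1$.

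For the feasibility of $w^{*}$, I would expand $w^{*}(B)$ for an arbitrary basis $B$: using that $|B|=r$, where $r$ is the rank of $M$, a short calculation yields $w^{*}(B) = w(B) + r - 2\,|B \cap S_0|$. Then I would pick a maximum $w$-weight basis $B_2 \not\subseteq S_0$, guaranteed by the second half of the hypothesis, so that $|B_2 \cap S_0| \leq r-1$. For any basis $B \subseteq S_0$, I would compare
\[
w^{*}(B_2) - w^{*}(B) = \bigl(w(B_2) - w(B)\bigr) + 2\bigl(r - |B_2 \cap S_0|\bigr),
\]
and note that the first summand is nonnegative (because $B_2$ is of maximum $w$-weight) while the second is at least $2$. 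Thus $B_2$ strictly beats every basis contained in $S_0$ under $w^{*}$, so no basis contained in $S_0$ can be of maximum $w^{*}$-weight, which is exactly feasibility.

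I do not anticipate any real obstacle beyond this cardinality bookkeeping. The two hypotheses play complementary roles: the existence of a max $w$-weight basis inside $S_0$ supplies the lower bound $\delta^* \geq 1$ via integrality, while the existence of a max $w$-weight basis outside $S_0$ is precisely what guarantees that the simple one-step perturbation $w^{*}$ already suffices.
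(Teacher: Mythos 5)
Your proposal is correct and follows essentially the same route as the paper: a lower bound of $1$ from integrality plus the infeasibility of $w$ itself, followed by a direct verification that $w^{*}$ is feasible by comparing an arbitrary basis $B \subseteq S_0$ against a maximum $w$-weight basis not contained in $S_0$. Your cardinality computation $w^{*}(B_2) - w^{*}(B) = (w(B_2)-w(B)) + 2(r - |B_2\cap S_0|) \geq 2$ is just a rearrangement of the paper's chain of inequalities.
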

\begin{proof}
    By the assumptions of the lemma and the integrality constraints, we know that the optimum value of any feasible solution is at least $1$. Since $\|w-w^*\|_\infty=1$, it suffices to show that $w^*$ is feasible. To see this, let $B_0\subseteq S_0$ be an arbitrary basis of $M$, and let $B_{\max}$ be a maximum $w$-weight basis of $M$ not contained in $S_0$. Then 
    \begin{align*}
        w^*(B_0)
        & = w(B_0) - |B_0| \\
        & \leq w(B_{\max}) - |B_{\max}| \\
        & < w(B_{\max}) - |B_{\max}\cap S_0| + |B_{\max}\setminus S_0|\\
        & = w^*(B_{\max}),
    \end{align*}
This concludes the proof.
\end{proof}

The remaining case can be solved using the following observation, which relates to \textsc{Integral-IM-Outside}.

\begin{lemma}\label{lem:isudufb}
    Let $\widehat{\delta}\in\bZ_+$ be the optimum value for a \textsc{Integral-IM-Outside} instance $( M=(S, \cI),  S_0, w, \allowbreak\|\cdot\|_\infty )$. Then either $w+\widehat{\delta}\cdot(\chi_{S\setminus S_0}-\chi_{S_0})$ or $w+(\widehat{\delta}+1)\cdot(\chi_{S\setminus S_0}-\chi_{S_0})$ is an optimal solution to the \textsc{IM-Not-Exists} problem on the same instance.
\end{lemma}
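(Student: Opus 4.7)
My plan is to split into two cases depending on whether $w_1 \coloneqq w + \widehat{\delta}\cdot(\chi_{S\setminus S_0}-\chi_{S_0})$ is feasible for \textsc{IM-Not-Exists}. In the easy case in which $w_1$ is feasible, its cost equals $\widehat{\delta}$, which matches the relaxed optimum; since every integer weight function that is feasible for \textsc{IM-Not-Exists} is automatically feasible for \textsc{Relaxed-IM-Not-Exists}, the \textsc{IM-Not-Exists} optimum is at least $\widehat{\delta}$, and so $w_1$ is optimal.

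Otherwise $w_1$ is not feasible, meaning some basis $B\subseteq S_0$ attains the maximum $w_1$-weight. First I would verify that $w_2 \coloneqq w + (\widehat{\delta}+1)\cdot(\chi_{S\setminus S_0}-\chi_{S_0})$ is feasible. Because $w_1$ is optimal for \textsc{Relaxed-IM-Not-Exists}, there exists a maximum $w_1$-weight basis $B_1 \not\subseteq S_0$. Using the identity $w_2(B') = w_1(B') - |B'\cap S_0| + |B'\setminus S_0|$ for every basis $B'$, one obtains $w_2(B_1) - w_2(B') = w_1(B_1) - w_1(B') + 2|B_1 \setminus S_0| \geq 2$ for every basis $B' \subseteq S_0$, so no basis contained in $S_0$ is of maximum $w_2$-weight.

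The main obstacle is the optimality of $w_2$, i.e., showing that no integer-valued $w^*$ with $\|w-w^*\|_\infty \leq \widehat{\delta}$ is feasible for \textsc{IM-Not-Exists}. My approach relies on Proposition~\ref{prop:charact_optimum_basis} applied to $B$ under $w_1$: for every $g\in S\setminus B$ and every $h\in C(B,g)$ we have $w_1(g)\leq w_1(h)$; specializing to $g\in S\setminus S_0$, so that $h \in B \subseteq S_0$, this becomes $w(h)-w(g)\geq 2\widehat{\delta}$, and hence $w^*(h)\geq w^*(g)$ for any $w^*$ within $\ell_\infty$-distance $\widehat{\delta}$ of $w$. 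I would then run the greedy algorithm with respect to $w^*$ while breaking ties in favour of elements of $S_0$, and argue that no element of $S\setminus S_0$ is ever selected: when some $g\in S\setminus S_0$ is considered, every element of $C(B,g)-g \subseteq S_0$ has already been considered (by the inequality just derived together with the tie-breaking rule), and since the already-selected elements share the same closure as the set of already-considered elements, they span $g$, which is therefore skipped. Consequently the greedy algorithm returns a basis contained in $S_0$, necessarily of maximum $w^*$-weight, contradicting the feasibility of $w^*$.
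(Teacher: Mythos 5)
Your proof is correct, and in the second case it takes a genuinely different (and more self-contained) route than the paper. The paper handles the case where $w_1=w+\widehat{\delta}\cdot(\chi_{S\setminus S_0}-\chi_{S_0})$ is infeasible by invoking Lemma~\ref{lem:Not-IM-Exists_easy_case} on the shifted instance $(M,S_0,w_1,\|\cdot\|_\infty)$ -- which yields feasibility of $w_2=w+(\widehat{\delta}+1)\cdot(\chi_{S\setminus S_0}-\chi_{S_0})$ and optimality \emph{relative to $w_1$} -- and then relies on the relaxation bound $\mathrm{opt}(\textsc{IM-Not-Exists})\geq\widehat{\delta}$; the step showing that the optimum is in fact strictly larger than $\widehat{\delta}$ is left implicit there. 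You instead prove feasibility of $w_2$ by a direct exchange computation and, crucially, supply the missing lower-bound argument explicitly: from Proposition~\ref{prop:charact_optimum_basis} applied to the maximum $w_1$-weight basis $B\subseteq S_0$ you extract the inequalities $w(h)-w(g)\geq 2\widehat{\delta}$ for $g\in S\setminus S_0$ and $h\in C(B,g)-g$, and then a greedy run with ties broken in favour of $S_0$ shows that \emph{every} weight function within $\ell_\infty$-distance $\widehat{\delta}$ of $w$ still admits a maximum-weight basis inside $S_0$. This is essentially the monotonicity of \textsc{IM-Exists}-feasibility under the extreme shift (it could also be phrased via Lemma~\ref{lem:IM-Exists_closure_for_feas_weight}), and it buys a fully explicit justification of why the optimum jumps to $\widehat{\delta}+1$; the paper's argument is shorter but delegates this point. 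Two cosmetic remarks: when you apply Proposition~\ref{prop:charact_optimum_basis} you should restrict to $h\in C(B,g)-g$ (the case $h=g$ is vacuous, and you do use the correct set later), and the observation that the integral relaxed optimum makes $w_1$ feasible for \textsc{Relaxed-IM-Not-Exists} is exactly Corollary~\ref{cor:rine}, which is worth citing at that point.
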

\begin{proof}
        For ease of discussion, let $w'\coloneqq w+\widehat{\delta}\cdot(\chi_{S\setminus S_0}-\chi_{S_0})$. By Corollary~\ref{cor:rine}, $w'$ is an optimal solution for \textsc{IM-Outside}. If this weight function is feasible for \textsc{IM-Not-Exists}, then it is also optimal, because \textsc{IM-Outside} is a relaxation of \textsc{IM-Not-Exists}. Otherwise, there exists a basis $B_0 \subseteq S_0$ which has maximum $w'$-weight. Note that, by definition of $\widehat{\delta}$, there exists a basis $B\not \subseteq S_0$ that has maximum $w'$-weight. Applying Lemma~\ref{lem:Not-IM-Exists_easy_case} for the instance $( M=(S, \cI),  S_0, w', \|\cdot\|_\infty )$, the statement follows.
      \end{proof}

Our algorithm is presented as Algorithm~\ref{algo:NOT-IM-Exists}. The high-level idea is to distinguish among the three possible cases: (1) the original weight function is already optimal; (2) the optimal solution to \textsc{Integral-IM-Outside} is also optimal for the current problem; or (3) the desired solution is obtained by taking an optimal solution to \textsc{Integral-IM-Outside}, decreasing the weights of the elements in $S_0$ by one unit, and increasing the weights of all remaining elements by one unit. An illustration is shown in Figure~\ref{fig:example_IM-NOT-EXISTS-linfty}.

\begin{algorithm}[th!]
\caption{Algorithm for \textsc{IM-Not-Exists} under the $\ell_\infty$-norm}\label{algo:NOT-IM-Exists}
\DontPrintSemicolon
\KwIn{An instance $( M=(S, \cI),  S_0, w, \|\cdot\|_\infty )$ of \textsc{IM-Not-Exists}.}
\KwOut{An optimal weight for \textsc{IM-Not-Exists}.}
    \lIf {$w$ is feasible} {
        Set $w^*\coloneqq w$.
    }
    \Else{
        \vspace{2pt}
        Let $\widehat{\delta}$ be the optimal value of \textsc{Integral-IM-Outside} for instance $( M=(S, \cI),  S_0, w, \|\cdot\|_\infty )$.\;
        \lIf{$w+\widehat{\delta}\cdot(\chi_{S\setminus S_0}-\chi_{S_0})$ is feasible}
            {
               Set $w^*\coloneqq w+\widehat{\delta}\cdot(\chi_{S\setminus S_0}-\chi_{S_0})$.
            }
        \lElse{
            Set $w^*\coloneqq w+(\widehat{\delta}+1)\cdot(\chi_{S\setminus S_0}-\chi_{S_0})$.
        }
    }   
    \Return{$w^*$}
\end{algorithm}

\begin{figure}[ht!]
\centering
    \begin{subfigure}[b]{0.33\textwidth}
        \centering
        \begin{tikzpicture}[scale=0.9,
            myEdge/.style={line width = 1.5pt},
            myEdgeZigzag/.style={line width = 1.5pt, decoration = {zigzag, segment length = 4pt, amplitude = 1pt},decorate},
            state/.style={circle,  minimum size=1em, draw, line width = 1.2pt}]   
            \node (a) at (0,2) [state] {$a$};
            \node (b) at (4,2) [state] {$b$};
            \node (c) at (1,1) [state] {$c$};
            \node (d) at (3,1) [state] {$d$};
            \node (e) at (0,0) [state] {$e$};
            \node (f) at (4,0) [state] {$f$}; 
            \draw [myEdge] (a) to node[midway, above] {3} (b) ; 
            \draw [myEdge] (e) to node[midway, below] {2} (f) ;
            \foreach \u \v \w \pos in { a/c/4/right, c/d/10/above, d/b/5/left,  e/c/4/right, d/f/5/left}{
                \draw [MyOrange!30, line width = 5pt] (\u) to (\v) ; 
                \draw [myEdge] (\u) to node[midway,\pos] {\w} (\v) ; 
            } 
            \foreach \u \v \w \pos in {a/e/4/left,b/f/5/right}{
                \draw [MyOrange!30, line width = 5pt] (\u) to (\v) ; 
                \draw [myEdge] (\u) to node[midway,\pos] {\w} (\v) ; 
            }
        \end{tikzpicture}
        \caption{Original instance.}
        \label{fig:example_IM-NOT-EXISTS-ORIGINAL}
    \end{subfigure}%
    \begin{subfigure}[b]{0.33\textwidth}
        \centering
        \begin{tikzpicture}[scale=0.9,
            myEdge/.style={line width = 1.5pt},
            myEdgeZigzag/.style={line width = 1.5pt, decoration = {zigzag, segment length = 4pt, amplitude = 1pt},decorate},
            state/.style={circle,  minimum size=1em, draw, line width = 1.2pt}]   
            \node (a) at (0,2) [state] {$a$};
            \node (b) at (4,2) [state] {$b$};
            \node (c) at (1,1) [state] {$c$};
            \node (d) at (3,1) [state] {$d$};
            \node (e) at (0,0) [state] {$e$};
            \node (f) at (4,0) [state] {$f$}; 
            \draw [myEdge] (a) to node[midway, above] {5} (b) ; 
            \draw [myEdge] (e) to node[midway, below] {4} (f) ;
            \foreach \u \v \w \pos in { a/c/2/right, c/d/8/above, d/b/3/left,  e/c/2/right, d/f/3/left, a/e/2/left, b/f/3/right}{
                \draw [MyOrange!30, line width = 5pt] (\u) to (\v) ; 
                \draw [myEdge] (\u) to node[midway,\pos] {\w} (\v) ; 
            } 
        \end{tikzpicture}
        \caption{Optimal weight function.}
        \label{fig:IM-NOT-EXISTS-FINAL-linfty}
    \end{subfigure}%
    \caption{Illustration of Algorithm~\ref{algo:NOT-IM-Exists} on the graphic matroid in Figure~\ref{fig:example_IM-NOT-EXISTS-ORIGINAL}.
    The optimal solution for \textsc{Integral-IM-Outside} shown in Figure~\ref{fig:example_IM-OUTSIDE-INTEGRAL} is not feasible for \textsc{IM-Not-Exists}. The optimal weight function $w^*$ is shown in Figure~\ref{fig:IM-NOT-EXISTS-FINAL-linfty} with optimal value $2$. Under $w^*$, the edge $ab$ is contained in every maximum-weight basis and cannot be exchanged for any other edge. Therefore, no basis contained in $S_0$ has maximum weight, as required for \textsc{IM-Not-Exists}.
    }
    \label{fig:example_IM-NOT-EXISTS-linfty}
\end{figure}

\begin{thm}\label{thm:NOT-IM-Exists-algorithm}
    Algorithm~\ref{algo:NOT-IM-Exists} determines an optimal weight function $w^*$ for the \textsc{IM-Not-Exists} instance  $( M=(S, \cI),  S_0, w, \|\cdot\|_\infty )$ 
    using $O(n\log n)$ operations and $O(n+r\cdot|S\setminus S_0|)$ independence-oracle calls. 
\end{thm}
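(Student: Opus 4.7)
The proof essentially consists of assembling the pieces already established in Section~\ref{sec:relaxed-not-IM-Exists}, so the plan is fairly short.

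First, for correctness, I would directly invoke Lemma~\ref{lem:isudufb}. That lemma guarantees that if $\widehat{\delta}$ is the optimum value of \textsc{Relaxed-IM-Not-Exists} under integrality constraints on the given instance, then one of the two candidate weight functions $w + \widehat{\delta}\cdot(\chi_{S\setminus S_0}-\chi_{S_0})$ or $w + (\widehat{\delta}+1)\cdot(\chi_{S\setminus S_0}-\chi_{S_0})$ is optimal for \textsc{IM-Not-Exists}. Since the algorithm tries the first candidate, checks whether it is feasible for \textsc{IM-Not-Exists}, and falls back to the second candidate if not, it returns an optimal weight function provided we can verify that the second candidate is feasible whenever the first is not; but this is exactly the content of the proof of Lemma~\ref{lem:isudufb}, which shows that in the ``otherwise'' case the instance falls under the easy case handled by Lemma~\ref{lem:Not-IM-Exists_easy_case}, and the second candidate is precisely the weight function produced there.

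Second, for the running time, each step of the algorithm is polynomial. Computing $\widehat{\delta}$ reduces to solving \textsc{Relaxed-IM-Not-Exists} under integrality constraints, which by Theorem~\ref{thm:algo-relaxedProb} combined with Corollary~\ref{cor:rine} can be done in polynomial time (compute an optimal fractional $\widehat{\delta}$ via binary search over $O(|S|^2)$ candidate values, then take the ceiling). Checking whether $w + \widehat{\delta}\cdot(\chi_{S\setminus S_0}-\chi_{S_0})$ is feasible for \textsc{IM-Not-Exists} amounts to running the greedy algorithm while breaking ties in favour of elements of $S_0$ and checking whether the resulting basis avoids being contained in $S_0$, as noted in the remark following the problem definition; this too is polynomial. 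Constructing either candidate weight function is clearly polynomial.

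Putting these two observations together yields the theorem. There is no real obstacle here: the conceptual work was already carried out in proving Theorem~\ref{thm:algo-relaxedProb}, Corollary~\ref{cor:rine}, and Lemma~\ref{lem:isudufb}, and the present theorem is just the packaging of those results into a single algorithm with a polynomial-time guarantee.
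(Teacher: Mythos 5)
Your proposal is correct and follows essentially the same route as the paper: correctness via Lemma~\ref{lem:isudufb}, and polynomial running time via Corollary~\ref{cor:rine} (with Theorem~\ref{thm:algo-relaxedProb}) together with the polynomial feasibility check. Your version just spells out a few more of the details that the paper leaves implicit.
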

\begin{proof}
    The correctness of the algorithm follows from Lemma~\ref{lem:isudufb}. Since one can find a solution of \textsc{IM-Outside} with integrality constraints in polynomial time by Corollary~\ref{cor:rine}, the theorem follows.

    \textit{Running time.} By Remark~\ref{rem:IM-NOT-EXISTS-feasib}, testing feasibility of $w$ and $w+\widehat{\delta}\cdot(\chi_{S\setminus S_0}-\chi_{S_0})$ is done using $O(|S|\log |S|)$ operations and $O(|S|)$ independence-oracle calls.
    By Theorem~\ref{thm:algo-IM-Outside-infty}, we can find $\widehat{\delta}$ using $O(|S|\log |S|)$ operations and $O(|B_0|\cdot|S\setminus S_0|)$ independence-oracle calls. Therefore, the algorithm uses
    $O(n+r\cdot|S\setminus S_0|)$ independence-oracle calls in total.
  \end{proof}

\subsection{IM-Not-Exists under \texorpdfstring{$\ell_1$}{l1}-norm}

We use the following notation in this section. Let $\overline{S_0}\coloneqq S\setminus S_0$ be the set of outside elements. For $A\subseteq S$, a weight function $x$, and a threshold $t$, define
\[
A[x\ge t]\coloneqq \{e\in A\colon x(e)\ge t\}.
\]

\begin{remark} \label{rem:IM-Not-Exists-CLOSURE}
If $\cl(S_0)\neq S$, then no basis of $M$ is contained in $S_0$, since a basis spanning $S$ would require $\cl(S_0)=S$. Hence $w^*=w$ is feasible with deviation zero, so we assume henceforth that $\cl(S_0)=S$.
We may also delete loops from $\overline{S_0}$ without changing the optimum: loops belong to no basis, so their weights do not affect the maximum-weight bases, and in an optimal solution their weights are left unchanged.
\end{remark}

The following structural theorem shows that, under the assumptions of Remark~\ref{rem:IM-Not-Exists-CLOSURE}, there is an optimal solution that changes the weight of at most one outside element. 
We prove the theorem at the end of the section. Before that, we present the resulting algorithm and its running time.

For each $f\in \overline{S_0}$, define the threshold 
\[
T_f\coloneqq \min\{T\in\bZ\colon T\ge w(f)\text{ and }
f\notin \cl(S_0[w\ge T])\};
\]
Informally, $T_f$ is the smallest integer weight assignable to $f$ such that $f$ lies outside the closure of the high-weight elements of $S_0$.
This value is well-defined: for all sufficiently large $T$, we have
$S_0[w\ge T]=\emptyset$, and $f\notin\cl(\emptyset)$ because $f$ is
not a loop (loops were deleted from $\overline{S_0}$ in Remark~\ref{rem:IM-Not-Exists-CLOSURE}).

\begin{thm}\label{thm:imnotexists}
Assume that $\cl(S_0)=S$ and that $\overline{S_0}$ has no loops. Let $B'$ be a maximum-weight basis of $M|\overline{S_0}$. Then
\[
\OPT=\min \{ T_f-w(f) \colon f\in B' \}.
\]
Moreover, if $\OPT=0$, then $w^*=w$ is optimal. If $\OPT > 0$, there exists $f^* \in B'$ for which $w^*=w+\OPT\cdot \chi_{f^*}$ is an optimal weight function.
\end{thm}

Theorem~\ref{thm:imnotexists} reduces the problem to choosing one outside element whose weight should possibly be increased. For each outside element $f$, the threshold $T_f$ is the smallest integer value to which $f$ has to be raised in order to escape the closure of the high-weight elements of $S_0$. The theorem shows that it is enough to compute these thresholds for the elements of a maximum-weight basis $B'$ of $M|\overline{S_0}$. This leads to Algorithm~\ref{alg:notexistim}. An illustration is shown in Figure~\ref{fig:example_IM-NOT-EXISTS-L1}.

\begin{algorithm}[ht!]
\caption{Algorithm for \textsc{IM-Not-Exists} under the $\ell_1$-norm}\label{alg:notexistim}
\DontPrintSemicolon
\KwIn{An instance $( M=(S, \cI),  S_0, w, \|\cdot\|_1)$ of \textsc{IM-Not-Exists}.}
\KwOut{An optimal integer weight for \textsc{IM-Not-Exists}.
}
\lIf{$\cl(S_0)\neq S$}{\Return{$w^*\coloneqq w$}}
Delete loops from $\overline{S_0}$.\;
Compute a maximum-weight basis $B_{0}$ of $M|S_0$ and a maximum-weight basis $B'$ of $M|\overline{S_0}$.\;
\ForEach{$f\in B'$}{
    Let $L_f$ be the sorted list $\{w(f)\}\cup\{w(e)+1\colon  e\in B_{0},\ w(e)\ge w(f)\}$.\;
    Binary search the smallest $T\in L_f$ such that $B_{0}[w\ge T]\cup\{f\}\in\cI$.\;
    \tcp{Equivalently, $f\notin\cl(B_{0}[w\ge T])=\cl(S_0[w\ge T])$.}
    Set $T_f\coloneqq T$.\;
}
Let $f^*\in\argmin_{f\in B'} (T_f-w(f))$ and set $\OPT\coloneqq T_{f^*}-w(f^*)$.\;
Set $w^*=w+\OPT\cdot \chi_{f^*}$.\;
\Return{$w^*$}\;
\end{algorithm}

\begin{figure}[ht!]
\centering
    \begin{subfigure}[b]{0.33\textwidth}
        \centering
        \begin{tikzpicture}[scale=0.9,
            myEdge/.style={line width = 1.5pt},
            myEdgeZigzag/.style={line width = 1.5pt, decoration = {zigzag, segment length = 4pt, amplitude = 1pt},decorate},
            state/.style={circle,  minimum size=1em, draw, line width = 1.2pt}]   
            \node (a) at (0,2) [state] {$a$};
            \node (b) at (4,2) [state] {$b$};
            \node (c) at (1,1) [state] {$c$};
            \node (d) at (3,1) [state] {$d$};
            \node (e) at (0,0) [state] {$e$};
            \node (f) at (4,0) [state] {$f$}; 
            \draw [myEdge] (a) to node[midway, above] {3} (b) ; 
            \draw [myEdge] (e) to node[midway, below] {2} (f) ;
            \foreach \u \v \w \pos in { a/c/4/right, c/d/10/above, d/b/5/left,  e/c/4/right, d/f/5/left}{
                \draw [MyOrange!30, line width = 5pt] (\u) to (\v) ; 
                \draw [myEdgeZigzag] (\u) to node[midway,\pos] {\w} (\v) ; 
            } 
            \foreach \u \v \w \pos in {a/e/4/left,b/f/5/right}{
                \draw [MyOrange!30, line width = 5pt] (\u) to (\v) ; 
                \draw [myEdge] (\u) to node[midway,\pos] {\w} (\v) ; 
            }
        \end{tikzpicture}
        \caption{Original instance.}
        \label{fig:example_IM-not-exists_original-L1}
    \end{subfigure}%
    \begin{subfigure}[b]{0.33\textwidth}
        \centering
        \begin{tikzpicture}[scale=0.9,
            myEdge/.style={line width = 1.5pt},
            myEdgeZigzag/.style={line width = 1.5pt, decoration = {zigzag, segment length = 4pt, amplitude = 1pt},decorate},
            state/.style={circle,  minimum size=1em, draw, line width = 1.2pt}]   
            \node (a) at (0,2) [state] {$a$};
            \node (b) at (4,2) [state] {$b$};
            \node (c) at (1,1) [state] {$c$};
            \node (d) at (3,1) [state] {$d$};
            \node (e) at (0,0) [state] {$e$};
            \node (f) at (4,0) [state] {$f$}; 
            \draw [myEdge] (a) to node[midway, above] {5} (b) ; 
            \draw [myEdge] (e) to node[midway, below] {2} (f) ;
            \foreach \u \v \w \pos in { a/c/4/right, c/d/10/above, d/b/5/left,  e/c/4/right, d/f/5/left}{
                \draw [MyOrange!30, line width = 5pt] (\u) to (\v) ; 
                \draw [myEdge] (\u) to node[midway,\pos] {\w} (\v) ; 
            } 
            \foreach \u \v \w \pos in {a/e/4/left,b/f/5/right}{
                \draw [MyOrange!30, line width = 5pt] (\u) to (\v) ; 
                \draw [myEdge] (\u) to node[midway,\pos] {\w} (\v) ; 
            }
        \end{tikzpicture}
        \caption{Optimal weight function.}
        \label{fig:IM-not-exists-FINAL-L1}
    \end{subfigure}%
    \caption{Illustration of Algorithm~\ref{alg:notexistim} on the graphic matroid in Figure~\ref{fig:example_IM-not-exists_original-L1}. The basis $B_0$ is shown in zigzagged lines, and $B'=S\setminus S_0$. We have $L_{ab}=\{3\}\cup\{5,6,11\}$ and $L_{ef}=\{2\}\cup\{5,6,11\}$, and thus $T_{ab}=5$ and $T_{ef}=5$. Hence $f^*$ is edge $ab$, implying that $\delta^*=2$; the optimal weight function $w^*$ is shown in Figure~\ref{fig:IM-not-exists-FINAL-L1}. Under $w^*$, the edge $ab$ is always preferred over $ac$; therefore, no basis contained in $S_0$ has maximum weight, as required for \textsc{IM-Not-Exists}.
    } 
    \label{fig:example_IM-NOT-EXISTS-L1}
\end{figure}

\begin{thm}\label{thm:algorithmINMOTEXISTS}
Algorithm~\ref{alg:notexistim} determines an optimal weight function $w^*$ for the \textsc{IM-Not-Exists} instance  $( M=(S, \cI),  S_0, w, \|\cdot\|_1)$
using $O(n\log n)$ operations and
$O(n+r\log r)$ independence-oracle calls.
\end{thm}

\begin{proof}
First handle the degenerate cases. If $\cl(S_0)\neq S$, the algorithm returns $w^*=w$, which is optimal because $S_0$ contains no basis of $M$. The test can be performed by running the greedy algorithm on the sorted ground set. The algorithm then deletes loops from $\overline{S_0}$. Observe that, after deleting loops, it is not possible to obtain $\overline{S_0}=\emptyset$, since there exists a basis not contained in $S_0$ by definition of the problem.

Assume now that $\cl(S_0)=S$ and that $\overline{S_0}$ is nonempty and has no loops.
The algorithm computes a maximum-weight basis $B_{0}$ of $M|S_0$
and a maximum-weight basis $B'$ of $M|\overline{S_0}$ by the greedy algorithm.
For every threshold $T$, the set $B_{0}[w\ge T]$ spans
$S_0[w\ge T]$. Hence
\[
\cl(B_{0}[w\ge T])=\cl(S_0[w\ge T]).
\]
Therefore, for each $f\in B'$, the value $T_f$ can be computed by
checking the smallest threshold $T$ such that
$B_{0}[w\ge T]\cup\{f\}$ is independent. It is enough to consider the candidate thresholds
\[
\{w(f)\}\cup\{w(e)+1\colon e\in B_{0},\ w(e)\ge w(f)\}.
\]
Indeed, the condition $f\notin\cl(S_0[w\ge T])$ is equivalent to
$f\notin\cl(B_{0}[w\ge T])$, so its truth value can change only when
$B_{0}[w\ge T]$ changes. Since the weights are integral and the
inequality is non-strict, an element $e\in B_{0}$ remains in
$B_{0}[w\ge T]$ at $T=w(e)$ and is removed only at the next integer
threshold, namely $T=w(e)+1$. Hence the relevant thresholds above
$w(f)$ are exactly those listed above.

By Theorem~\ref{thm:imnotexists}, the optimum value is
\[
\OPT=\min_{f\in B'}(T_f-w(f)).
\]
Thus the element $f^*$ chosen by the algorithm attains the optimum. If
$\OPT=0$, the algorithm returns $w^*=w$, which is optimal. If
$\OPT>0$, Theorem~\ref{thm:imnotexists} guarantees that setting
$w^*(f^*)=T_{f^*}$ and leaving all other weights unchanged is an optimal
solution. Therefore Algorithm~\ref{alg:notexistim} correctly solves
\textsc{IM-Not-Exists}.

\textit{Running time.} The feasibility checks (testing $\cl(S_0)=S$ and deleting loops from
$\overline{S_0}$) use $O(|S|)$ independence-oracle calls and $O(|S| \log |S|)$ time for sorting the elements. Computing $B_0$ and
$B'$ by the greedy algorithm uses $O(n)$ independence-oracle calls after
sorting the elements by nonincreasing weight.
There are at most $r+1$ candidate thresholds for each $f\in B'$, and the
condition $f\notin\cl(S_0[w\ge T])$ is monotone in $T$. Thus binary search
computes each $T_f$ using $O(\log r)$ independence-oracle calls. Since
$|B'|\le r$, this gives $O(r\log r)$ oracle calls for this step.
In total, the algorithm uses $O(n\log n)$ comparisons and
$O(n+r\log r)$ independence-oracle calls.
\end{proof}

\subsubsection{Proof of Theorem~\ref{thm:imnotexists}}

The proof has two steps: first we show that
$\OPT=\min_{f\in \overline{S_0}}(T_f-w(f))$, and then we show that this minimum can
be restricted to any maximum-weight basis of $M|\overline{S_0}$.

We first recall a standard optimality criterion for maximum-weight bases  (Corollary~\ref{cor:charact_optimum_basis_Closure}): for $w^*\in\bZ^S$, a basis $B$ is $w^*$-maximum if and only if $e\in \cl(B[w^*\ge w^*(e)])$ for every $e\in S\setminus B$. We then apply it to bases contained in $S_0$ in order to characterize feasible weight functions for \textsc{IM-Not-Exists}.

As a direct consequence of Lemma~\ref{lem:IM-Exists_closure_for_feas_weight}, for $w^*\in\bZ^S$, there exists a $w^*$-maximum basis contained in $S_0$
 if and only if $f\in \cl(S_0[w^*\ge w^*(f)])$ for every $f\in \overline{S_0}$. Hence
 $w^*$ is feasible for \textsc{IM-Not-Exists} if and only if there is
 $f\in \overline{S_0}$ such that $f\notin \cl(S_0[w^*\ge w^*(f)])$.

Now recall that for all $f\in \overline{S_0}$ we defined 
$T_f\coloneqq \min\{T\in\bZ\colon T\ge w(f)\text{ and }
f\notin \cl(S_0[w\ge T])\}.$ We show that $\delta^*$ can be obtained using this value.
    \begin{lemma}\label{lem:normal}
    Let $(M=(S,\cI),S_0,w,\|\cdot\|_1)$ be an instance of \textsc{IM-Not-Exists} and assume that $w$ is not feasible. Then
    \[
    \OPT = \min\{ T_f - w(f) \colon f \in \overline{S_0}\}.
    \]
     Moreover, if $\OPT > 0$, there exists $f^* \in \overline{S_0}$ for which $w'=w+\OPT\cdot \chi_{f^*}$ is an optimal weight function.
\end{lemma}

\begin{proof}
Let $\OPT$ be the optimal value of \textsc{IM-Not-Exists} and let $w^*$ be an
optimal weight function. Without loss of generality, we can assume
$w^*(f)\geq w(f)$ for $f\in \overline{S_0}$ and $w^*(e)\leq w(e)$ for
$e\in S_0$.

Since $w^*$ is feasible for \textsc{IM-Not-Exists}, no maximum $w^*$-weight
basis is contained in $S_0$. In particular, there exists a maximum
$w^*$-weight basis not contained in $S_0$, so $w^*$ is feasible for the
\textsc{IM-Outside} instance $(M=(S,\cI),S_0,w,\|\cdot\|_1)$ with the same
deviation $\|w-w^*\|_1=\OPT$. Applying Lemma~\ref{lem:relaxed-nice-form-L1},
we obtain a feasible solution $w'=w+\OPT\cdot\chi_f$ for some $f\notin S_0$,
with $\|w-w'\|_1=\OPT$, such that $f\notin\cl(S_0[w'\ge w'(f)])$.

Let $f'=\argmin\{T_g-w(g)\colon g\in\overline{S_0}\}$. We show that
$\OPT\ge T_{f'}-w(f')$. Since $w'=w+\OPT\chi_f$ and $f\notin S_0$, we have
$S_0[w'\ge w'(f)]=S_0[w\ge w(f)+\OPT]$. Thus
$f\notin\cl(S_0[w\ge w(f)+\OPT])$, and hence $T_f\le w(f)+\OPT$.
Therefore $T_{f'}-w(f')\le T_f-w(f)\le \OPT$, where the first inequality
follows from the choice of $f'$.

For the reverse inequality, consider
$\widetilde w=w+(T_{f'}-w(f'))\cdot\chi_{f'}$. We have
$S_0[\widetilde w\geq \widetilde w(f')]=S_0[w\geq T_{f'}]$, and by
definition of $T_{f'}$ we have $f'\notin\cl(S_0[w\geq T_{f'}])$. Hence
$\widetilde w$ is feasible by the closure condition, with deviation
$T_{f'}-w(f')=\min_{f\in\overline{S_0}}(T_f-w(f))$. Thus
$\OPT\le \min_{f\in\overline{S_0}}(T_f-w(f))$.
This proves the reverse inequality and hence the lemma.
\end{proof}

\begin{lemma}\label{lem:basis}
Let $B'$ be a maximum-weight basis of $M|\overline{S_0}$. Then
\[
\min_{f\in \overline{S_0}}(T_f-w(f))=\min_{f\in B'}(T_f-w(f)).
\]
\end{lemma}

\begin{proof}
Let $f\in \overline{S_0}$ minimize $T_f-w(f)$. If $f\in B'$, we are done. Otherwise,
Corollary~\ref{cor:charact_optimum_basis_Closure} applied in $M|\overline{S_0}$ gives
$f\in\cl(B'[w\ge w(f)])$. Let $T=T_f$. Since
$f\notin\cl(S_0[w\ge T])$, not every element of $B'[w\ge w(f)]$ belongs
to $\cl(S_0[w\ge T])$; otherwise
$f\in\cl(B'[w\ge w(f)])\subseteq\cl(S_0[w\ge T])$, a contradiction.
Thus some $b\in B'[w\ge w(f)]$ satisfies
$b\notin\cl(S_0[w\ge T])$.

If $w(b)\le T$, then $T$ is feasible for $b$, and
$T_b-w(b)\le T-w(b)\le T-w(f)=T_f-w(f)$. If $w(b)>T$, then
$S_0[w\ge w(b)]\subseteq S_0[w\ge T]$, so
$b\notin\cl(S_0[w\ge w(b)])$, and therefore $T_b=w(b)$. Hence again
$T_b-w(b)=0\le T_f-w(f)$. Thus some $b\in B'$ is no worse than $f$, and
equality follows.
\end{proof}

\begin{proof}[Proof of Theorem~\ref{thm:imnotexists}]
The equality $\OPT=\min_{f\in \overline{S_0}}(T_f-w(f))$ and the normal form follow
from Lemma~\ref{lem:normal}. By Lemma~\ref{lem:basis}, we get
$\OPT=\min_{f\in B'}(T_f-w(f))$. If $\OPT=0$, then the deviation-zero
solution is necessarily $w^*=w$. If $\OPT>0$, Lemma~\ref{lem:normal}
gives an optimal solution changing a single outside element, and
Lemma~\ref{lem:basis} allows this element to be chosen in $B'$.
\end{proof}

\section{\textsc{Inverse Matroid Not All} and \textsc{Not Only}}\label{sec:not-IM-All}

The last two variants negate the conditions defining \textsc{IM-All} and
\textsc{IM-Only}. The first asks that at least one basis contained in $S_0$
fail to be maximum-weight. The second allows this failure, or alternatively a
maximum-weight basis outside $S_0$.
Thus, our algorithms combine the one-basis separation idea from \textsc{IM-Not-All} with the outside-basis subproblem of Section~\ref{sec:relaxed-not-IM-Exists}.

In the negated version of \textsc{IM-All}, the goal is to modify the weight function so that at least one basis contained in $S_0$ is not of maximum weight. As before, to avoid dealing with $\varepsilon$-perturbations, we consider integer-valued weight functions.

\problemdef{Inverse Matroid Not All (IM-Not-All)}
    {A matroid $M = (S, \mathcal{I})$, a subset $S_0 \subseteq S$ that contains at least one basis and does not contain all bases, a weight function $w \in \mathbb{Z}^S$, and an objective function $\|\cdot\|$ defined on $\mathbb{R}^S$.}
    {Find a weight function $w^* \in \bZ^S$ such that 
    \begin{enumerate}[label=(\alph*),topsep=1px,partopsep=2px]\itemsep0em
        \item{\label{it:feas_cond_not-IM-All}} there exists a basis contained in $S_0$ that is not of maximum $w^*$-weight, and
        \item \label{it:opt_cond_not-IM-All} $\|w - w^*\|$ is minimized.
    \end{enumerate}} 

We say that a weight function is \emph{feasible} if condition~\ref{it:feas_cond_not-IM-All} holds. Note that \textsc{IM-Not-All} is always feasible: set $w'$ such that $w'(s)=0$ for all $s \in S_0$ and $w'(s)=1$ for all $s \in S \setminus S_0$.

\begin{remark}\label{rem:IM-NOT-ALL-feas}
    Given a weight function $w'$, we can verify in polynomial time whether $w'$ is feasible for \textsc{IM-Not-All}. To do so, find a minimum $w'$-weight basis $B_0$ of $M|S_0$ and a maximum $w'$-weight basis $A$ of $M$. Thus, $w'$ is feasible if and only if $w'(B_0) < w'(A)$.
\end{remark}

We also consider the negated version of \textsc{IM-Only}, which is formally defined as follows. This problem is also stated with integer-valued weight functions. 

\problemdef{Inverse Matroid Not Only (IM-Not-Only)}
    {A matroid $M=(S,\mathcal{I})$, a subset $S_0 \subseteq S$,
    that contains at least one basis and does not contain all bases,
    a weight function $w\in\bZ^S$, and an objective function $\|\cdot\|$ defined on $\bR^S$.}
    {Find a weight function $w^* \in \bZ^S$ such that 
    \begin{enumerate}[label=(\alph*),topsep=1px,partopsep=2px]\itemsep0em
        \item{\label{it:feas_cond_not-IM-Only}} either
            \begin{enumerate}[label=(a.\arabic*), topsep=1px,partopsep=2px,itemsep=0pt]
                \item{\label{it:feas_cond_not-IM-Only-1}} there exists a basis contained in $S_0$ that is not of maximum $w^*$-weight, or
                \item{\label{it:feas_cond_not-IM-Only-2}}  there exists a basis not contained in $S_0$ that has maximum $w^*$-weight, 
            \end{enumerate}
        \item[] and        
        \item \label{it:opt_cond_not-IM-Only} $\| w - w^*\|$ is minimized.
    \end{enumerate}}
    
We say that a weight function is \emph{feasible} if condition~\ref{it:feas_cond_not-IM-Only} holds. Note that a feasible weight function always exists: set $w'$ such that $w'(s)=0$ for all $s \in S_0$ and $w'(s)=1$ for all $s \in S \setminus S_0$.

\begin{remark}\label{rem:IM-Not-Only_feasible}
    Given a weight function $w'$, we can verify in polynomial time whether $w'$ is feasible for \textsc{IM-Not-Only}. This follows from noticing that condition~\ref{it:feas_cond_not-IM-Only-1} is exactly condition~\ref{it:feas_cond_not-IM-All} of \textsc{IM-Not-All}, and condition~\ref{it:feas_cond_not-IM-Only-2} is condition~\ref{it:feas_cond_relaxed} of \textsc{IM-Outside} when the input set is $S_0$.
\end{remark}

\subsection{Structural Properties}
\label{sec:prepnot}
For the $\ell_\infty$-norm, similarly to the case of \textsc{IM-All}, we will use the fact that \textsc{IM-Not-All} admits an optimal solution with a very restricted structure.

\begin{lemma}\label{lem:not}
Let $w_\textsc{Feas}$ be a feasible solution for the \textsc{IM-Not-All} 
instance $(M=(S,\cI),S_0,w,\|\cdot\|_\infty)$, and let $\delta=\|w-w_\textsc{Feas}\|_\infty$. Then there exists a feasible solution $w'_\textsc{Feas}$ satisfying the following:
\begin{enumerate}[label=(\alph*)]\itemsep0em
    \item $\|w-w'_\textsc{Feas}\|_\infty=\delta$, and \label{prop:aa}
    \item $w'_\textsc{Feas}=w + \delta\cdot (\chi_{S\setminus B_0}-\chi_{B_0})$ for some $B_0\subseteq S_0$.
\end{enumerate}
\end{lemma}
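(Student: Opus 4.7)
The plan is to take any feasible $w_\textsc{Feas}$ and produce a basis $B_0 \subseteq S_0$ such that the perturbation $w'_\textsc{Feas} \coloneqq w + \delta \cdot (\chi_{S \setminus B_0} - \chi_{B_0})$ -- which decreases the weight of $B_0$ by $\delta$ and increases every other element by $\delta$ -- remains feasible for the same problem. Property~\ref{prop:aa} then holds automatically, since $|w'_\textsc{Feas}(s) - w(s)| = \delta$ for every $s \in S$; moreover, the case $\delta = 0$ is immediate. The real work is to verify feasibility after the perturbation for both problems at once.

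First, I would choose $B_0$ based on which feasibility certificate $w_\textsc{Feas}$ already provides. In the \textsc{IM-Not-All} case -- and in the \textsc{IM-Not-Only} case whenever $w_\textsc{Feas}$ satisfies condition~\ref{it:feas_cond_not-IM-Only-1} -- feasibility directly yields a basis $B_0 \subseteq S_0$ together with a basis $B^*$ of $M$ such that $w_\textsc{Feas}(B^*) > w_\textsc{Feas}(B_0)$. In the remaining sub-case of \textsc{IM-Not-Only}, where only~\ref{it:feas_cond_not-IM-Only-2} holds, every basis contained in $S_0$ is of maximum $w_\textsc{Feas}$-weight, so I fix any basis $B_0 \subseteq S_0$ together with a maximum-weight basis $B^* \not\subseteq S_0$, which this time only gives $w_\textsc{Feas}(B^*) = w_\textsc{Feas}(B_0)$.

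The core computation is short. Using $|B^*| = |B_0|$, the difference $w'_\textsc{Feas}(B^*) - w'_\textsc{Feas}(B_0)$ rearranges to $w(B^*) - w(B_0) + \delta \cdot |B^* \triangle B_0|$. On the other hand, applying the bound $|w_\textsc{Feas}(s) - w(s)| \leq \delta$ separately over $B^* \setminus B_0$ and $B_0 \setminus B^*$ yields $w_\textsc{Feas}(B^*) - w_\textsc{Feas}(B_0) \leq w(B^*) - w(B_0) + \delta \cdot |B^* \triangle B_0|$. Combining the two, $w'_\textsc{Feas}(B^*) - w'_\textsc{Feas}(B_0) \geq w_\textsc{Feas}(B^*) - w_\textsc{Feas}(B_0)$. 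Whenever the right-hand side is strictly positive, $B_0 \subseteq S_0$ is not of maximum $w'_\textsc{Feas}$-weight, so $w'_\textsc{Feas}$ is feasible, certifying \textsc{IM-Not-All} and \textsc{IM-Not-Only} via~\ref{it:feas_cond_not-IM-Only-1}.

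The main subtlety I expect is the \textsc{IM-Not-Only} sub-case where only~\ref{it:feas_cond_not-IM-Only-2} holds, because then the derived inequality is only non-strict and $B_0$ could end up tied for maximum weight under $w'_\textsc{Feas}$. The resolution is that this remains acceptable for \textsc{IM-Not-Only}: either $B_0$ fails to be of maximum $w'_\textsc{Feas}$-weight, giving~\ref{it:feas_cond_not-IM-Only-1} directly, or $B_0$ attains the maximum, in which case $w'_\textsc{Feas}(B^*) \geq w'_\textsc{Feas}(B_0)$ forces $B^*$ to be of maximum weight as well, and since $B^* \not\subseteq S_0$ this yields~\ref{it:feas_cond_not-IM-Only-2}. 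In every case, $w'_\textsc{Feas}$ is feasible, completing the argument.
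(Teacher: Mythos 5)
Your argument is correct and, at its core, identical to the paper's: both take a witness basis $B_0\subseteq S_0$, perturb by $w'_\textsc{Feas}=w+\delta\cdot(\chi_{S\setminus B_0}-\chi_{B_0})$, and establish feasibility by splitting $B_0$ and a competing basis $B^*$ over their symmetric difference, which amounts to the inequality $w'_\textsc{Feas}(B^*)-w'_\textsc{Feas}(B_0)\ge w_\textsc{Feas}(B^*)-w_\textsc{Feas}(B_0)$. The one place where you go beyond the paper is the \textsc{IM-Not-Only} sub-case in which only condition~\ref{it:feas_cond_not-IM-Only-2} holds: the paper's proof asserts that a basis $B_0\subseteq S_0$ that is not of maximum $w_\textsc{Feas}$-weight ``exists since $w_\textsc{Feas}$ is feasible,'' which is only guaranteed for \textsc{IM-Not-All} or under~\ref{it:feas_cond_not-IM-Only-1}; your observation that in the remaining case the key inequality degenerates to a tie, which still certifies feasibility because either $B_0$ drops below the maximum (giving~\ref{it:feas_cond_not-IM-Only-1}) or $B^*\not\subseteq S_0$ stays maximum (giving~\ref{it:feas_cond_not-IM-Only-2}), is exactly what is needed to cover that case, so your write-up is in fact more complete than the paper's. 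The only loose end is the degenerate situation where $S_0$ contains no basis at all, which the \textsc{IM-Not-Only} input permits: there ``fix any basis $B_0\subseteq S_0$'' has nothing to pick, but every weight function is then feasible via~\ref{it:feas_cond_not-IM-Only-2}, so taking $B_0=\emptyset$ settles it trivially and the gap is cosmetic.
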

\begin{proof}
    Let $B_0\subseteq S_0$ be a basis that is not of maximum $w_\textsc{Feas}$-weight and let $B_{\max}$ be a basis of maximum $w_\textsc{Feas}$-weight -- such bases exist since $w_\textsc{Feas}$ is feasible. For $w'_\textsc{Feas}=w+\delta\cdot (\chi_{S\setminus B_0}-\chi_{B_0})$, we get 
    \begin{align*}
        w'_{\textsc{Feas}}(B_0)
        &=
        w'_{\textsc{Feas}}(B_0\setminus B_{\max})+w'_{\textsc{Feas}}(B_0\cap B_{\max})\\
        &\leq 
        w_{\textsc{Feas}}(B_0\setminus B_{\max})+w'_{\textsc{Feas}}(B_0\cap B_{\max})\\
        &< 
        w_{\textsc{Feas}}(B_{\max}\setminus B_0)+w'_{\textsc{Feas}}(B_0\cap B_{\max})\\
        &\leq 
        w'_{\textsc{Feas}}(B_{\max}\setminus B_0)+w'_{\textsc{Feas}}(B_0\cap B_{\max})\\
        &=
        w'_{\textsc{Feas}}(B_{\max}),
    \end{align*}
    where the strict inequality holds by $w_{\textsc{Feas}}(B_0)<w_{\textsc{Feas}}(B_{\max})$. That is, $w'_\textsc{Feas}$ is a feasible solution that satisfies the conditions of the lemma.
  \end{proof}

A similar statement holds for the $\ell_1$-norm: it suffices to decrease the weight of a single element of $S_0$ to obtain a feasible weight function.
\begin{lemma}\label{lem:notL1}
Let $w_\textsc{Feas}$ be a feasible solution for the \textsc{IM-Not-All} 
instance $(M=(S,\cI),S_0,w,\|\cdot\|_1)$, and let $\delta=\|w-w_\textsc{Feas}\|_1$. Then there exists a feasible solution $w'_\textsc{Feas}$ satisfying the following:
\begin{enumerate}[label=(\alph*)]\itemsep0em
    \item $\|w-w'_\textsc{Feas}\|_1=\delta$, and \label{prop:aaL1}
    \item $w'_\textsc{Feas}=w - \delta\cdot \chi_e$ for some element $e$ of some $B_0\subseteq S_0$.
\end{enumerate}
\end{lemma}
\begin{proof}
    Let $B_0\subseteq S_0$ be a basis that is not of maximum $w_{\textsc{Feas}}$-weight. By Proposition~\ref{prop:charact_optimum_basis}, there exist $f_0\in S\setminus B_0$ and $e_0\in C(B_0,f_0)\setminus\{f_0\}$ such that $w_{\textsc{Feas}}(f_0)>w_{\textsc{Feas}}(e_0)$. Since the weights are integral, $w_{\textsc{Feas}}(f_0)- w_{\textsc{Feas}}(e_0)\geq 1$. 
    Using $\delta=\|w-w_{\textsc{Feas}}\|_1$, we obtain
    \begin{align*} w(e_0)-w(f_0)+1 &\leq \bigl(w(e_0)-w_{\textsc{Feas}}(e_0)\bigr) + \bigl(w_{\textsc{Feas}}(f_0)-w(f_0)\bigr)\\ &\leq |w(e_0)-w_{\textsc{Feas}}(e_0)| + |w(f_0)-w_{\textsc{Feas}}(f_0)| \leq\delta. 
    \end{align*}

    Choose $(e,f)$ minimizing $w(e)-w(f)+1$ over all pairs satisfying $f\in S\setminus B_0$ and $e\in C(B_0,f)\setminus\{f\}$. Since $(e_0,f_0)$ is one such pair, \[ w(e)-w(f)+1\leq\delta. \]

Define $w'_{\textsc{Feas}}=w-\delta\chi_e$. By the definition of the fundamental circuit, $B=B_0-e+f$ is a basis. Moreover, \[ w'_{\textsc{Feas}}(B)-w'_{\textsc{Feas}}(B_0) = w(f)-w(e)+\delta \geq 1. \] Thus $B$ is strictly heavier than $B_0$ under $w'_{\textsc{Feas}}$. Hence $B_0$ is not a maximum $w'_{\textsc{Feas}}$-weight basis, and $w'_{\textsc{Feas}}$ is feasible. Finally, $\|w-w'_{\textsc{Feas}}\|_1=\delta$. \end{proof}
\subsection{Algorithm for \textsc{IM-Not-All} under \texorpdfstring{$\ell_\infty$}{l-infinity}- and \texorpdfstring{$\ell_1$}{l-1}-norms}
\label{sec:algnotall}

Our algorithm is presented as Algorithm~\ref{algo:NOT-IM-All} for both the $\ell_\infty$- and $\ell_1$-norms. The high-level idea for both norms is to distinguish among three possible cases: (1) the optimal value is $0$; (2) the optimal value is $1$; or (3) the optimal value exceeds $1$ and must be computed by identifying an appropriate pair consisting of an element $f \in S \setminus B_0$ and an element $e \in C(B_0,f)\setminus\{f\}$, where $B_0$ is the unique maximum-weight basis in this case. An illustration is shown in Figure~\ref{fig:example_IM-NOT-ALL}.

\begin{algorithm}[th!]
\caption{Algorithm for \textsc{IM-Not-All} under the $\ell_\infty$- and $\ell_1$-norms}\label{algo:NOT-IM-All}
\DontPrintSemicolon
\KwIn{An instance $(M=(S,\cI),\, S_0,\, w,\, \|\cdot\|)$ of \textsc{IM-Not-All}.}
\KwOut{An optimal weight function $w^*$ for \textsc{IM-Not-All}.}

\If{$w$ is feasible}{
    \Return{$w$}\;
}
Let $B_0 \subseteq S_0$ be a basis of maximum $w$-weight and fix $e' \in B_0$ that has minimum $w$-weight.\;
\uIf{$\|\cdot\| = \|\cdot\|_\infty$}{
    Set $w' \coloneqq w - \chi_{B_0}$.\;
    \lIf{$w'$ is feasible}{\Return{$w'$}}
    Set $\delta \coloneqq \min\bigl\{\lceil(w(e)-w(f)+1)/2\rceil
        \colon f \in S \setminus B_0,\; e \in C(B_0,f)\setminus\{f\}\bigr\}$.\;
    \Return{$w + \delta\cdot(\chi_{S \setminus B_0} - \chi_{B_0})$}\;
}
\ElseIf{$\|\cdot\| = \|\cdot\|_1$}{
    Set $w' \coloneqq w - \chi_{e'}$.\;
    \lIf{$w'$ is feasible}{\Return{$w'$}}
    Set $(e^*,f^*) \in \argmin\bigl\{w(e)-w(f)+1
        \colon f \in S \setminus B_0,\; e \in C(B_0,f)\setminus\{f\}\bigr\}$ and set $\delta \coloneqq w(e^*)-w(f^*)+1 $.\;
    \Return{$w - \delta \cdot \chi_{e^*}$}\;
}
\end{algorithm}
\begin{figure}[ht!]
\centering
    \begin{subfigure}[b]{0.33\textwidth}
        \centering
        \begin{tikzpicture}[scale=0.9,
            myEdge/.style={line width = 1.5pt},
            myEdgeZigzag/.style={line width = 1.5pt, decoration = {zigzag, segment length = 4pt, amplitude = 1pt},decorate},
            state/.style={circle,  minimum size=1em, draw, line width = 1.2pt}]   
            \node (a) at (0,2) [state] {$a$};
            \node (b) at (4,2) [state] {$b$};
            \node (c) at (1,1) [state] {$c$};
            \node (d) at (3,1) [state] {$d$};
            \node (e) at (0,0) [state] {$e$};
            \node (f) at (4,0) [state] {$f$}; 
            \draw [myEdge] (a) to node[midway, above] {3} (b) ; 
            \draw [myEdge] (e) to node[midway, below] {2} (f) ;
            \foreach \u \v \w \pos in { a/c/4/right, c/d/10/above, d/b/5/left,  e/c/4/right, d/f/5/left}{
                \draw [MyOrange!30, line width = 5pt] (\u) to (\v) ; 
                \draw [myEdgeZigzag] (\u) to node[midway,\pos] {\w} (\v) ; 
            } 
            \foreach \u \v \w \pos in {a/e/4/left,b/f/5/right}{
                \draw [MyOrange!30, line width = 5pt] (\u) to (\v) ; 
                \draw [myEdge] (\u) to node[midway,\pos] {\w} (\v) ; 
            }
        \end{tikzpicture}
        \caption{Original instance.}
        \label{fig:example_IM-not-all}
    \end{subfigure}%
    \hfill
    \begin{subfigure}[b]{0.33\textwidth}
        \centering
        \begin{tikzpicture}[scale=0.9,
            myEdge/.style={line width = 1.5pt},
            myEdgeZigzag/.style={line width = 1.5pt, decoration = {zigzag, segment length = 4pt, amplitude = 1pt},decorate},
            state/.style={circle,  minimum size=1em, draw, line width = 1.2pt}]   
            \node (a) at (0,2) [state] {$a$};
            \node (b) at (4,2) [state] {$b$};
            \node (c) at (1,1) [state] {$c$};
            \node (d) at (3,1) [state] {$d$};
            \node (e) at (0,0) [state] {$e$};
            \node (f) at (4,0) [state] {$f$}; 
            \draw [myEdge] (a) to node[midway, above] {3} (b) ; 
            \draw [myEdge] (e) to node[midway, below] {2} (f) ;
            \foreach \u \v \w \pos in { a/c/3/right, c/d/9/above, d/b/4/left,  e/c/3/right, d/f/4/left}{
                \draw [MyOrange!30, line width = 5pt] (\u) to (\v) ; 
                \draw [myEdge] (\u) to node[midway,\pos] {\w} (\v) ; 
            } 
            \foreach \u \v \w \pos in {a/e/4/left,b/f/5/right}{
                \draw [MyOrange!30, line width = 5pt] (\u) to (\v) ; 
                \draw [myEdge] (\u) to node[midway,\pos] {\w} (\v) ; 
            }
        \end{tikzpicture}
        \caption{Optimal weight for the $\ell_\infty$-norm.}
        \label{fig:IM-not-all-FINAL-linfty}
    \end{subfigure}%
    \hfill
    \begin{subfigure}[b]{0.33\textwidth}
        \centering
        \begin{tikzpicture}[scale=0.9,
            myEdge/.style={line width = 1.5pt},
            myEdgeZigzag/.style={line width = 1.5pt, decoration = {zigzag, segment length = 4pt, amplitude = 1pt},decorate},
            state/.style={circle,  minimum size=1em, draw, line width = 1.2pt}]   
            \node (a) at (0,2) [state] {$a$};
            \node (b) at (4,2) [state] {$b$};
            \node (c) at (1,1) [state] {$c$};
            \node (d) at (3,1) [state] {$d$};
            \node (e) at (0,0) [state] {$e$};
            \node (f) at (4,0) [state] {$f$}; 
            \draw [myEdge] (a) to node[midway, above] {3} (b) ; 
            \draw [myEdge] (e) to node[midway, below] {2} (f) ;
            \foreach \u \v \w \pos in { a/c/3/right, c/d/10/above, d/b/5/left,  e/c/4/right, d/f/5/left}{
                \draw [MyOrange!30, line width = 5pt] (\u) to (\v) ; 
                \draw [myEdge] (\u) to node[midway,\pos] {\w} (\v) ; 
            } 
            \foreach \u \v \w \pos in {a/e/4/left,b/f/5/right}{
                \draw [MyOrange!30, line width = 5pt] (\u) to (\v) ; 
                \draw [myEdge] (\u) to node[midway,\pos] {\w} (\v) ; 
            }
        \end{tikzpicture}
        \caption{Optimal weight for the $\ell_1$-norm.}
        \label{fig:IM-not-all-FINAL-L1}
    \end{subfigure}%
    \caption{Illustration of Algorithm~\ref{algo:NOT-IM-All} on the graphic matroid in Figure~\ref{fig:example_IM-not-all}. The basis $B_0$ is shown in zigzagged lines.
    Under both norms, the optimal value is $1$, and an optimal weight function is shown in Figures~\ref{fig:IM-not-all-FINAL-linfty} and~\ref{fig:IM-not-all-FINAL-L1} for the $\ell_\infty$- and $\ell_1$-norms, respectively.
    }
    \label{fig:example_IM-NOT-ALL}
\end{figure}

\begin{thm}\label{thm:NOT-IM-All-algorithm}
    Let $\|\cdot\|=\|\cdot\|_\infty$ or $\|\cdot\|=\|\cdot\|_1$. 
    Algorithm~\ref{algo:NOT-IM-All} determines an optimal weight function $w^*$ for the \textsc{IM-Not-All} instance  $( M=(S, \cI),  S_0, w, \|\cdot\|)$
    using $O(n\log n+r(n-r))$ elementary operations and $O(n+r(n-r))$ independence-oracle calls.
\end{thm}
\begin{proof}
    We have three cases.
    
    \textbf{Case 1.}  If $S_0$ contains a basis that is not of maximum $w$-weight, then $w$ is feasible and we are done.
    
    For the other two cases, all bases in $S_0$ have maximum weight. Let $B_0\subseteq S_0$ be an arbitrary basis contained in $S_0$. 
    
    \textbf{Case 2.} Assume first that there exists a basis $B_{\max}$ of maximum $w$-weight different from $B_0$. Note that $B_{\max}$ is allowed to be contained in $S_0$. 
    For both the $\ell_\infty$- and $\ell_1$-norms, we define a feasible weight function $w'$ with value $1$.
    First, we claim that $w'=w-\chi_{B_0}$ is a feasible solution for the $\ell_\infty$-norm. Indeed, $w'(B_0)=w(B_0)-|B_0| < w(B_{\max})-|B_{\max}\cap B_0|=w'(B_{\max})$, hence $w'$ is feasible. Similarly, for the $\ell_1$-norm, by Proposition~\ref{prop:symmetric-exchange}, choose $e\in B_0\setminus B_{\max}$ and $f\in B_{\max}\setminus B_0$ such that both $B_0-e+f$ and $B_{\max}-f+e$ are bases; then $w(e)=w(f)$. Set $w'\coloneqq w-\chi_e$. Thus $w'(B_0)=w(B_0)-1<w(B_{\max})=w'(B_{\max})$, so $w'$ is feasible; moreover, $w(e)-w(f)+1=1$ occurs in the final minimization, so the algorithm returns an optimal solution.

    \textbf{Case 3.} Suppose now that $B_0$ is the unique maximum $w$-weight basis of the matroid.
    
    Consider the $\ell_\infty$-norm. 
    If $\delta^*$ denotes the optimum value of the problem, then, by Lemma~\ref{lem:not}, $w^*=w+\delta^*(\chi_{S\setminus B_0}-\chi_{B_0})$ is an optimal solution. 
    Let us denote by $\cB_{\not \subseteq S_0}$ the set of bases not contained in $S_0$. The idea is to make a basis $B\in \cB_{\not \subseteq S_0}$ to be a maximum-weight basis with bigger weight than $B_0$.

For any basis $B\in\cB_{\not\subseteq S_0}$, we have
$w^*(B)-w^*(B_0)=w(B)-w(B_0)+\delta^*|B_0\triangle B|$. Hence,  $w^*(B)>w^*(B_0)$
    holds if and only if
    \[
    \delta^*>
    \frac{w(B_0)-w(B)}{|B_0\triangle B|}.
    \]
    Since all weights are integral, this is equivalent to
    \[
    \delta^*
    \ge
    \left\lceil
    \frac{w(B_0)-w(B)+1}{|B_0\triangle B|}
    \right\rceil.
    \]
    Therefore, the minimum feasible value is
    \[
    \min\Bigl\{
    \Bigl\lceil
    \frac{w(B_0)-w(B)+1}{|B_0\triangle B|}
    \Bigr\rceil
    \Bigm|
    B\in\cB_{\not \subseteq S_0}
    \Bigr\}.
    \]
    
    It remains to show that the above minimum is attained on a basis differing from $B_0$ by a single exchange.
    Let
    \[
    \delta'=
    \min\Bigl\{
    \frac{w(e)-w(f)}{2}
    \Bigm|
    e\in B_0,\;
    f\in S\setminus B_0,\;
    B_0-e+f\in \cB_{\not \subseteq S_0}
    \Bigr\}.
    \]
    Since $B_0$ is the unique maximum $w$-weight basis, we have $w(e)>w(f)$ for every such exchange, and therefore $\delta'>0$.
    
    Now let $B\in \cB_{\not \subseteq S_0}$ be arbitrary.
    By Proposition~\ref{prop:exchange}, there exists a bijection
    $\phi:B_0\setminus B\rightarrow B\setminus B_0$
    such that $B_0-e+\phi(e)$ is a basis for every $e\in B_0\setminus B$.
    Hence, $w(e)-w(\phi(e))\ge 2\delta'$ for every $e\in B_0\setminus B$.
    Summing over all $e\in B_0\setminus B$, we obtain
    \[
       w(B_0)-w(B)
    =
    \sum_{e\in B_0\setminus B}
    \bigl(w(e)-w(\phi(e))\bigr) 
    \ge
    2\delta'|B_0\setminus B|
    =
    \delta'|B_0\triangle B|.
    \]
    Thus, let $k=|B_0\setminus B|\geq 1$. Then $|B_0\triangle B|=2k$ and $w(B_0)-w(B)\geq 2k\delta'$. The term $+1$ is needed to make $B$ strictly heavier than $B_0$. Since $2\delta'\in\mathbb Z$, it follows that
    \[
        \left\lceil\frac{w(B_0)-w(B)+1}{|B_0\triangle B|}\right\rceil
        \geq \left\lceil\delta'+\frac{1}{2k}\right\rceil
        = \left\lceil\delta'+\frac{1}{2}\right\rceil.
    \]
    Equality is attained by a basis $B_0-e+f$ attaining the minimum in the definition of $\delta'$. Therefore, the optimum value is computed by the algorithm.
    
    For the $\ell_1$-norm, if $\delta^*$ denotes the optimum value of the problem then, by Lemma~\ref{lem:notL1}, $w^*=w-\delta^*\cdot\chi_{e}$ is an optimal solution for some $e\in B_0$.
    Then, by Proposition~\ref{prop:charact_optimum_basis}, it suffices to consider bases of the form $B=B_0-e+f$, where $f\notin B_0$ and $e\in C(B_0,f)\setminus\{f\}$. In this case, note that $w(e) \geq w(f)$.
        For such a basis,
        $w^*(B_0)=w(B_0)-\delta^* $
        and
        $w^*(B)=w(B_0)-w(e)+w(f)$.
        It follows that 
        $w^*(B)>w^*(B_0)$ holds
        if and only if
        $w(B_0)-w(e)+w(f) > w(B_0)-\delta^*$,
        which is equivalent to
        $\delta^*>w(e)-w(f)$.
        Since all weights are integral, this is equivalent to
        $\delta^*\geq w(e)-w(f)+1$.        
        Conversely, if
        $\delta=w(e)-w(f)+1$,
        then
        $w^*(B)-w^*(B_0)=w(f)-w(e)+\delta=1$,
        and therefore $B$ is strictly heavier than $B_0$ under $w^*$, showing that $w^*$ is feasible.
        It follows that, for a fixed element $e\in B_0$, the minimum feasible value is
        $\min \{w(e)-w(f)+1 \colon f\in S\setminus B_0,\; e\in C(B_0,f)\setminus\{f\}\}$.
        Minimizing over all choices of $e\in B_0$ yields
        exactly the value computed by the algorithm.

        \textit{Running time.}
        We analyze the running time for the $\ell_1$-norm, as the analysis for the $\ell_\infty$-norm is identical. By Remark~\ref{rem:IM-NOT-ALL-feas}, testing whether a weight function is feasible requires at most two executions of the Greedy algorithm, taking $O(|S|\log |S|)$ time and $O(|S|)$ independence-oracle calls in total. Moreover, computing the maximum-weight basis $B_0\subseteq S_0$ requires one additional execution of the Greedy algorithm, with the same complexity.
        
        If neither $w$ nor $w'$ is feasible, the algorithm computes $(e^*,f^*)$. By Remark~\ref{rem:findingFundCircuit}, for each element $f\in S\setminus B_0$, the fundamental circuit $C(B_0,f)$ can be computed using $O(r)$ independence-oracle calls. Thus, computing all fundamental circuits requires $O(r\,|S\setminus B_0|)$ independence-oracle calls. While computing these circuits, the algorithm simultaneously evaluates $w(e)-w(f)+1$ for each $e\in C(B_0,f)\setminus\{f\}$ and updates the minimum value. Hence, this phase requires $O(r\,|S\setminus B_0|)$ elementary operations.        
  \end{proof}

\subsection{Algorithm for \textsc{IM-Not-Only} under the \texorpdfstring{$\ell_\infty$}{l-infinity}- and \texorpdfstring{$\ell_1$}{l1}-norms}
\label{sec:algnotonly}
For the $\ell_\infty$-norm, Remark~\ref{rem:IM-Not-Only_feasible} implies that \textsc{IM-Not-Only} can be solved by solving one \textsc{IM-Not-All} instance and one \textsc{IM-Outside} instance. More precisely, Algorithm~\ref{algo:NOT-IM-All} computes a weight function $w_1$ satisfying~\ref{it:feas_cond_not-IM-Only-1} that minimizes $\|w-w_1\|_\infty$, while Algorithm~\ref{algo:relaxed-NOT-IM-Exists} is adapted to compute a weight function $w_2$ satisfying~\ref{it:feas_cond_not-IM-Only-2} that minimizes $\|w-w_2\|_\infty$ according to Corollary~\ref{cor:rine}. By Remark~\ref{rem:IM-Not-Only_feasible}, an optimal solution for \textsc{IM-Not-Only} is the one among $w_1$ and $w_2$ having smaller $\ell_\infty$-distance from $w$. Since the latter algorithm runs in $O(n\log n+r|S\setminus S_0|)$ elementary operations and $O(n+r|S\setminus S_0|)$ independence-oracle calls, and $|S\setminus S_0|\le |S\setminus B_0|$, the overall running time is dominated by Algorithm~\ref{algo:NOT-IM-All}.

\begin{cor}\label{cor:NOT-IM-Only-infty}
    Given an \textsc{IM-Not-Only} instance $(M=(S,\cI),S_0,w,\|\cdot\|_\infty)$, an optimal weight function can be computed 
     using $O(n\log n+r(n-r))$ elementary operations and $O(n+r(n-r))$ independence-oracle calls.
\end{cor}

For the $\ell_1$-norm, Remark~\ref{rem:IM-Not-Only_feasible} again implies that \textsc{IM-Not-Only} can be solved by solving one \textsc{IM-Not-All} instance and one \textsc{IM-Outside} instance. More precisely, Algorithm~\ref{algo:NOT-IM-All} computes a weight function $w_1$ satisfying~\ref{it:feas_cond_not-IM-Only-1} that minimizes $\|w-w_1\|_1$, while Algorithm~\ref{algo:IM-OUTSIDE-L1} computes a weight function $w_2$ satisfying~\ref{it:feas_cond_not-IM-Only-2} that minimizes $\|w-w_2\|_1$. By Remark~\ref{rem:IM-Not-Only_feasible}, an optimal solution for \textsc{IM-Not-Only} is the one among $w_1$ and $w_2$ having smaller $\ell_1$-distance from $w$. Since Algorithm~\ref{algo:IM-OUTSIDE-L1} runs in $O(n\log n+r\,|S\setminus S_0|)$ elementary operations and $O(n+r\,|S\setminus S_0|)$ independence-oracle calls, the overall running time is dominated by Algorithm~\ref{algo:NOT-IM-All}.

\begin{cor}\label{cor:NOT-IM-Only-L1}
    Given an \textsc{IM-Not-Only} instance $(M=(S,\cI),S_0,w,\|\cdot\|_1)$, an optimal weight function can be computed 
    using $O(n\log n+r(n-r))$ elementary operations and $O(n+r(n-r))$ independence-oracle calls.
\end{cor}

\section{Discussion}
\label{sec:discussion}

We have given a complete classification of six subset-constrained inverse matroid variants under the $\ell_\infty$- and $\ell_1$-norms, using integral formulations whenever strict separation may prevent attainment over the reals. The results show a sharp norm-dependent behavior: the $\ell_\infty$ versions admit combinatorial polynomial-time algorithms throughout, whereas under the $\ell_1$-norm the positive existence variant becomes strongly $\NP$-hard even for graphic matroids. At the same time, the remaining $\ell_1$ variants retain enough matroidal structure to be solved by exchange, closure, and minimum-cut arguments.

It is worth noting that our results can be extended to the more general setting where lower and upper bounds are given on the coordinates of the modified weight function. However, the corresponding algorithms become more involved due to numerous case distinctions, which would detract from the main structural insights that are more easily presented in the unconstrained setting. Moreover, some of the problems remain easy to solve under the weighted $\ell_\infty$-norm $\| p \|_{\omega} = \max \{ |p(s)|\cdot \omega(s) \colon s \in S \}$, where $\omega\in\bR_+^S$ represents the ``cost'' of modifying each coordinate of $w$.

We conclude the paper by mentioning some open questions:
\begin{enumerate}\itemsep0em
    \item Under the $\ell_\infty$-norm, all six variants admit polynomial-time algorithms, whereas the decision version of \textsc{IM-Exists} under the $\ell_1$-norm is already strongly $\NP$-complete. It would be interesting to investigate the complexity landscape of the considered problems under alternative distance measures.
    \item We introduced six variants of the \textsc{Inverse Matroid} problem, all of which can also be formulated for the intersection of two matroids. Designing efficient algorithms for this setting would be particularly interesting due to its applications to bipartite matchings and arborescences.
    \item Subset-constrained variants of \textsc{Inverse Optimization} can be considered in general, although certain settings may lead to $\NP$-hard problems. An intriguing direction is to identify structural assumptions on $\mathcal{F}$ or the underlying optimization problem under which the proposed variants remain tractable.
    \item The hardness of \textsc{IM-Exists} under the $\ell_1$-norm already holds for graphic matroids. It would be interesting to identify restricted graphic or matroid classes for which this problem becomes tractable, or to determine whether useful approximation guarantees are possible.
\end{enumerate}


\paragraph{Acknowledgment.} Research supported by the Lend\"ulet Programme of the Hungarian Academy of Sciences -- grant number LP2021-1/2021, by the Ministry of Innovation and Technology of Hungary from the National Research, Development and Innovation Fund -- grants ADVANCED 150556, 153096, and ELTE TKP 2021-NKTA-62, by Dynasnet European Research Council Synergy project -- grant number ERC-2018-SYG 810115, by ANID-Chile -- grant FONDECYT 1231669, and by Centro de Modelamiento Matem\'atico (CMM) through the BASAL fund FB210005 (ANID-Chile).

\paragraph{Tool and computational resource disclosure.} During the preparation of this manuscript, the authors used ChatGPT (GPT-5.5, OpenAI) and Claude (Anthropic) to assist with editing, grammar, and formatting, and with verifying and streamlining proofs.
The authors verified and reviewed all AI-assisted edits and take full responsibility for all content within this manuscript.

%
%
\bibliographystyle{splncs04}
\bibliography{inv_over_sets}

@article{berczi2023infinity,
    title={Newton-type Algorithms for Inverse Optimization: Weighted Bottleneck {H}amming Distance and $\ell_\infty$-norm Objectives}, 
    author={B{\'e}rczi, Krist{\'o}f and Mendoza-Cadena,Lydia Mirabel and Varga,Kitti},
    journal = {Optimization Letters},
    year    = {2025},
}

@InProceedings{berczi2024spanLATIN,
    author={B{\'e}rczi, Krist{\'o}f and Mendoza-Cadena, Lydia Mirabel and Varga, Kitti},
    editor={Soto, Jos{\'e} A. and Wiese, Andreas},
    title={Newton-Type Algorithms for Inverse Optimization: Weighted Span Objective},
    booktitle={LATIN 2024: Theoretical Informatics},
    shorttitle={LATIN 2024},
    year={2024},
    publisher={Springer Nature Switzerland},
    address={Cham},
    pages={334--347},
    isbn={978-3-031-55601-2},
}

@book{frank2011book,
  title={Connections in combinatorial optimization},
  author={Frank, Andr{\'a}s},
  volume={38},
  year={2011},
  publisher={OUP Oxford}
}

@article{heuberger2004inverse,
 title={Inverse combinatorial optimization: A survey on problems, methods, and results},
 author={Clemens Heuberger},
 journal={Journal of Combinatorial Optimization},
 volume={8},
 number={3},
 pages={329--361},
 year={2004},
}

@incollection{demange2014introduction,
 title={An introduction to inverse combinatorial problems},
 author={Marc Demange and J{\'e}r{\^o}me Monnot},
 booktitle={Paradigms of Combinatorial Optimization: Problems and New Approaches},
 publisher={John Wiley \& Sons, Inc.},
 edition={Second},
 pages={547--586},
 year={2014},
}

@article{chan2023inverse,
 title={Inverse Optimization: Theory and Applications},
 author={Timothy C. Y. Chan and Rafid Mahmood and Ian Yihang Zhu},
 journal={Operations Research},
 year={2023},
}

@article{mao1999inverse,
  title={Inverse problems of matroid intersection},
  author={Cai, Mao-Cheng},
  journal={Journal of combinatorial optimization},
  volume={3},
  number={4},
  pages={465--474},
  year={1999},
  publisher={Springer}
}

@article{dell2003base,
  title={The base-matroid and inverse combinatorial optimization problems},
  author={Dell'Amico, Mauro and Maffioli, Francesco and Malucelli, Federico},
  journal={Discrete applied mathematics},
  volume={128},
  number={2-3},
  pages={337--353},
  year={2003},
  publisher={Elsevier}
}

@article{zhang2016partialmatroid,
  title={Algorithms for the partial inverse matroid problem in which weights can only be increased},
  author={Zhang, Zhao and Li, Shuangshuang and Lai, Hong-Jian and Du, Ding-Zhu},
  journal={Journal of Global Optimization},
  volume={65},
  number={4},
  pages={801--811},
  year={2016},
  publisher={Springer}
}

@article{aman2016inverse,
    title={Inverse matroid optimization problem under the weighted Hamming distances},
    author={Aman, Massoud and Hassanpour, Hassan and Tayyebi, Javad},
    journal={Bulletin of the Transilvania University of Bra\c{s}ov. Series III: Mathematics, Informatics, Physics},
    pages={85--98},
    year={2016},
    volume={9(58)},
    number={2},
    }

@inproceedings{ahmadian2018algorithms,
  title={Algorithms for inverse optimization problems},
  author={Ahmadian, Sara and Bhaskar, Umang and Sanit{\`a}, Laura and Swamy, Chaitanya},
  booktitle={26th Annual European Symposium on Algorithms (ESA 2018)},
  year={2018},
  organization={Schloss Dagstuhl-Leibniz-Zentrum fuer Informatik}
}

@INPROCEEDINGS{tayybi2021inverse,
  author={Tayyebi, Javad and Bigdeli, Hamid},
  booktitle={2021 52nd Annual Iranian Mathematics Conference (AIMC)}, 
  title={Inverse matroid optimization problem under Chebyshev distance}, 
  year={2021},
  volume={},
  number={},
  pages={59-61},
}

@book{richter2016inverse,
 title={Inverse Problems: Basics, Theory and Applications in Geophysics},
 author={Mathias Richter},
 publisher={Birkh{\"a}user},
 year={2016},
}

@book{oxley2011matroid,
    author = {Oxley, James},
    title = {Matroid Theory},
    publisher = {Oxford University Press},
    year = {2011},
    month = {02},
    isbn = {9780198566946},
}

@techreport{gassner2009,
   author = {Gassner, Elisabeth},
   institution = {TU Graz},
   title = {Selected partial inverse combinatorial optimization problems with forbidden elements},
   year = {2009},
}

@article{gassner2010,
     author = {Gassner, Elisabeth},
     title = {The partial inverse minimum cut problem with {L1-norm} is strongly {NP-hard}},
     journal = {RAIRO - Operations Research - Recherche Op\'erationnelle},
     pages = {241--249},
     publisher = {EDP-Sciences},
     volume = {44},
     number = {3},
     year = {2010},
}

@article{li2016algorithm,
   author = {Shuangshuang Li and Zhao Zhang and Hong-Jian Lai},
   journal = {Theoretical Computer Science},
   month = {8},
   pages = {119-124},
   publisher = {Elsevier B.V.},
   title = {Algorithm for constraint partial inverse matroid problem with weight increase forbidden},
   volume = {640},
   year = {2016},
}

@inproceedings{dong2022partial,
 title={Partial Inverse Min-Max Spanning Tree Problem Under the Weighted Bottleneck {Hamming} Distance},
 author={Qingzhen Dong and Xianyue Li and Yu Yang},
 booktitle={AAIM 2022: Algorithmic Aspects in Information and Management},
 series={lecture Notes in Computer Science},
 volume={13513},
 publisher={Springer},
 pages={351--362},
 year={2022},
}

@techreport{lai2003complexity,
    title={The complexity of preprocessing},
    author={Lai, Tsung-Chyan and Orlin, James B.},
    journal={Research Report of Sloan School of Mangement},
    year={2003},
    institution={MIT}
}

@article{li2019capacited,
    author = {Li, Xianyue and Shu, Xichao and Huang, Huijing and Bai, Jingjing} ,
    title = {Capacitated partial inverse maximum spanning tree under the weighted Hamming distance},
    journal = {Journal of Combinatorial Optimization},
    year = {2019},
}

@article{li2020approximation,
  title={Approximation algorithms for capacitated partial inverse maximum spanning tree problem},
  author={Li, Xianyue and Zhang, Zhao and Yang, Ruowang and Zhang, Heping and Du, Ding-Zhu},
  journal={Journal of Global Optimization},
  volume={77},
  number={2},
  pages={319--340},
  year={2020},
  publisher={Springer},
}

@article{li2022partial,
  title={Partial inverse maximum spanning tree problem under the Chebyshev norm},
  author={Li, Xianyue and Yang, Ruowang and Zhang, Heping and Zhang, Zhao},
  journal={Journal of Combinatorial Optimization},
  volume={44},
  number={5},
  pages={3331--3350},
  year={2022},
  publisher={Springer},
}

@InProceedings{li2021capacited,
    author="Li, Xianyue
    and Yang, Ruowang
    and Zhang, Heping
    and Zhang, Zhao",
    editor="Du, Ding-Zhu
    and Du, Donglei
    and Wu, Chenchen
    and Xu, Dachuan",
    title="Capacitated Partial Inverse Maximum Spanning Tree Under the Weighted $\ell_\infty$-norm",
    booktitle="Combinatorial Optimization and Applications",
    year="2021",
    publisher="Springer International Publishing",
    address="Cham",
    pages="389--399",
}

@article{Ahuja2001Inverse,
   author = {Ahuja, Ravindra K. and Orlin, James B.},
   number = {5},
   journal = {Operations Research},
   month = {10},
   pages = {771--783},
   title = {Inverse Optimization},
   volume = {49},
   year = {2001},
}

@article{zhang2002general,
   author = {Zhang, Jianzhong and Liu, Zhenhong},
   issn = {13826905},
   number = {2},
   journal = {Journal of Combinatorial Optimization},
   pages = {207--227},
   title = {A General Model of Some Inverse Combinatorial Optimization Problems and Its Solution Method Under  $\ell_\infty$ Norm},
   volume = {6},
   year = {2002},
}

@article{zhang1999further,
    title = {A further study on inverse linear programming problems},
    journal = {Journal of Computational and Applied Mathematics},
    volume = {106},
    number = {2},
    pages = {345-359},
    year = {1999},
    issn = {0377-0427},
    author = {Zhang, Jianzhong and Liu, Zhenhong}
}

@inproceedings{cohen2016invisible,
  title={The invisible hand of dynamic market pricing},
  author={Cohen-Addad, Vincent and Eden, Alon and Feldman, Michal and Fiat, Amos},
  booktitle={Proceedings of the 2016 ACM Conference on Economics and Computation},
  pages={383--400},
  year={2016}
}

@article{berczi2021market,
  title={Market pricing for matroid rank valuations},
  author={B{\'e}rczi, Krist{\'o}f and Kakimura, Naonori and Kobayashi, Yusuke},
  journal={SIAM Journal on Discrete Mathematics},
  volume={35},
  number={4},
  pages={2662--2678},
  year={2021},
  publisher={SIAM}
}

@inproceedings{hsu2016prices,
  title={Do prices coordinate markets?},
  author={Hsu, Justin and Morgenstern, Jamie and Rogers, Ryan and Roth, Aaron and Vohra, Rakesh},
  booktitle={Proceedings of the Forty-Eighth Annual ACM Symposium on Theory of Computing},
  pages={440--453},
  year={2016}
}

@book{Garey1979computers,
  author    = {Michael R. Garey and David S. Johnson},
  title     = {Computers and Intractability: A Guide to the Theory of {NP}-Completeness},
  publisher = {W. H. Freeman and Company},
  address   = {San Francisco, CA},
  year      = {1979},
}

@article{krogdahl1977dependence,
  title={The dependence graph for bases in matroids},
  author={Krogdahl, Stein},
  journal={Discrete Mathematics},
  volume={19},
  number={1},
  pages={47--59},
  year={1977},
  publisher={Elsevier}
}

@article{frank2022discrete,
  title={A discrete convex min-max formula for box-TDI polyhedra},
  author={Frank, Andr{\'a}s and Murota, Kazuo},
  journal={Mathematics of Operations Research},
  volume={47},
  number={2},
  pages={1026--1047},
  year={2022},
  publisher={INFORMS}
}

@book{Guan2025InverseBook,
  author = {Guan, Xiucui and Pardalos, Panos M. and Zhang,Binwu},
  publisher = {Springer},
  title = {Inverse Combinatorial Optimization Problems},
  year = {2025},
  address = {Switzerland},
  issn = {1931-6828},
  doi={10.1007/978-3-031-91175-0}
}

@article{AhujaHochbaumOrlin2004,
  author       = {Ravindra K. Ahuja and
                  Dorit S. Hochbaum and
                  James B. Orlin},
  title        = {A Cut-Based Algorithm for the Nonlinear Dual of the Minimum Cost Network
                  Flow Problem},
  journal      = {Algorithmica},
  volume       = {39},
  number       = {3},
  pages        = {189--208},
  year         = {2004},
  doi          = {10.1007/S00453-004-1085-2},
}

\end{document}